\def\Nset{\mathbb{N}}
\def\Rset{\mathbb{R}}
\def\dps{\displaystyle}
\def\bfo{{\mathbf 1}}
\def\by{{\bar y}}
\def\cB{{\cal B}}
\def\cF{{\cal F}}
\def\cI{{\cal I}}
\def\cJ{{\cal J}}
\def\cM{{\cal M}}
\def\cP{{\cal P}}
\def\cQ{{\cal Q}}
\def\cR{{\cal R}}
\def\cV{{\cal V}}
\def\cX{{\cal X}}
\def\hcJ{\hat{\cJ}}
\newcommand{\comment}[1]{}
\renewcommand{\t}{^{\mbox{\tiny\sf T}}}
\def\diag{{\mathrm {\tt diag}}}
\def\eq{{\mathrm{eq}}}
\def\Xbf{\mathbf{X}}
\def\tx{\tilde{x}}
\definecolor{vert}{rgb}{0.1, 0.7, 0.3}   
\definecolor{vert_old}{rgb}{0.06, 0.7, 0.6}   
\definecolor{rouge_old}{rgb}{0.85, 0.26, 0}   
\definecolor{rouge}{rgb}{0.75, 0.06, 0.3}   
\definecolor{mauve}{rgb}{0.5, 0.1, 0.99}   
\definecolor{bleu}{rgb}{0, 0.25, 1}   
\newtheorem{theorem}{Theorem}
\newtheorem{lemma}{Lemma}
\newtheorem{example}{Example}
\newtheorem{remark}{Remark}
\newtheorem{definition}{Definition}
\newtheorem{assumption}{Assumption}
\newtheorem{corollary}[theorem]{Corollary}
\newtheorem{proposition}[theorem]{Proposition}
\begin{document}

\title{A framework for the modelling and the analysis of epidemiological spread in commuting populations}
\author[1]{Pierre-Alexandre Bliman\,\Envelope\,\thanks{Corresponding author, email: \href{mailto:pierre-alexandre.bliman@inria.fr}{pierre-alexandre.bliman@inria.fr}}}
\author[2]{Boureima Sangar\'e\thanks{Email: \href{mailto:mazou1979@yahoo.fr}{mazou1979@yahoo.fr}}}
\author[1,2]{Assane Savadogo\thanks{Email: \href{mailto:savadogoass302@yahoo.fr}{savadogoass302@yahoo.fr}}}
\affil[1]{Sorbonne Universit\'e, Inria, CNRS, Universit\'e Paris Cit\'e, Laboratoire Jacques-Louis Lions UMR 7598, Equipe MUSCLEES, Paris, France\vspace{.1cm} }
\affil[2]{Laboratoire de Math\'ematiques, Informatique et Applications, Universit\'e Nazi Boni, Bobo-Dioulasso, Burkina Faso}

\maketitle
\tableofcontents
\medskip

\begin{abstract}
In the present paper, our goal is to establish a framework for the mathematical modelling and the analysis of the spread of an epidemic in a large population commuting regularly, typically along a time-periodic pattern, as is roughly speaking the case in populous urban center.
We consider a large number of distinct {\em homogeneous} groups of individuals of various sizes, called {\em subpopulations},
and focus on the modelling of the changing conditions of their mixing along time and of the induced disease transmission.
We propose a general class of models in which the `force of infection' plays a central role, which attempts to `reconcile' the classical modelling approaches in mathematical epidemiology, based on compartmental models, with some widely used analysis results (including those by P.~van den Driessche and J.~Watmough in 2002), established for apparently less structured systems of nonlinear ordinary-differential equations.
We take special care in explaining the modelling approach in details, and provide analysis results that allow to compute or estimate the value of the basic reproduction number for such general periodic epidemic systems.
\end{abstract}

\section{Introduction}
\label{se0}

Generally speaking, spatial or geographic heterogeneity plays an important role in the transmission process of many infectious diseases, as well as the temporal variations of the spreading conditions.
The diversity of the corresponding (time and space) scales makes it a challenge to model, simulate and analyze the evolution of such phenomena.
In the present paper, our goal is to establish a framework for the mathematical modelling and the analysis of the spread of an epidemic in a large population commuting regularly, typically along a time-periodic pattern, as is roughly speaking the case in populous urban center.
The latter account for host displacements of varied nature according to the context, and typically distinguish the current location of the individuals, in order to define the possible infection paths.


A powerful framework to model the spread of infectious diseases among populations that are naturally partitioned into spatial sub-units, is based on {\em metapopulation models}, see \cite{Hanski:1997aa,Keeling:2008aa} and~\cite{Arino:2006aa,Arino:2009aa} for a quite complete overview.
Practically, `a metapopulation model involves explicit movements of the individuals between distinct locations'~\cite{Arino:2009aa}, called {\em patches}.
The fluxes between the different patches, seen as the vertices of an associated graph, are usually described through a discrete Laplacian matrix and take place continuously.
The transfers between patches are generally instantaneous, but some contributions modelled specifically the infection occurring in transportation systems, see~\cite{Knipl:2013aa,Nakata:2015aa} and references therein.
At the price of an increased number of variables, some models keep track of the `origin' of each individual (e.g.~through a {\em residency patch}), but most of them just account for the number of individuals that are present at each location at the current time, making them otherwise indistinguishable.
This omission is inadequate to describe precisely the complex, but quite regular, displacement schemes that shape urban commutations.
In addition, the `residency patch' is not sufficient to characterize univocally the history of the encounters, which are as many opportunities of inter-infection.

On the other hand, much work has been made to integrate various heterogeneity traits relevant to describe the spread of an infection within a population.
Such approaches consider structured populations subdivided into {\em subpopulations} that differ from each other~\cite{Hethcote:1987aa}.
As written by H.W.~Hethcote and J.W.~van Ark~\cite{Hethcote:1987aa}, the latter  `can be determined not only on the basis of disease-related factors such as mode of transmission, latent period, infectious period, and genetic susceptibility or resistance, but also on the basis of social, cultural, economic, demographic, and geographic factors.
In a spatially heterogeneous population the subpopulations can be different schools, neighborhoods, cities, states, countries, or continents.'
This also includes the heterogeneity of the contact rates and of the behaviours that impact the disease spread.
Attempts have been made to distinguish between `social' encounters, occurring between individuals of the same subpopulation (on a somehow `regular' or `predictable' basis), and random encounters between individuals of possibly different subpopulations~\cite{Sattenspiel:1988aa}.
Also, some structured models accommodate geographical aspects, in the way of metapopulation~\cite{Sattenspiel:1995aa}.

An important characteristic of the `subpopulations' considered in this type of models is their {\em homogeneity}:
the members of each subpopulation are perfectly mixed, submitted to the same `encounters' with other subpopulations, and they cannot usually change from one subpopulation to the other (so that in particular, age classes are usually not treated as subpopulations).
In other words, within each subpopulation, the individuals are indistinguishable from one another from the point of view of the epidemic dynamics.

Generally speaking, in these settings, the movements are rather thought of as permanent and stationary, rather than time-varying, phenomena.
On the other hand, the evolution of disease transmission is usually subject to time fluctuations, and in particular periodic fluctuations~\cite{Wang:2008aa}.
The latter may result from seasonal variations induced by the climatic conditions, but also from human activities at various scales (daily commuting, opening and closing of schools, vaccination programs, etc.).
Efforts have been made to analyze the effects of such forcing conditions on infection spread~\cite{Bacaer:2007aa,Bacaer:2007ab,Bacaer:2012aa}.
Due to the difficulty to analyze the behaviour of periodic nonlinear systems in general, several simplifications have been made, amounting to analyze {\em stationary} systems obtained from a (usually informal) averaging of the periodic effects.
However, examples in~\cite{Wang:2008aa} have established that such simplifications may produce overly optimistic or overly pessimistic estimation of the occurrence of an epidemic (overestimation or underestimation of the basic reproduction number).
Modelling and analysis of time-periodic epidemic models therefore remain of great interest, compared to the use of `residence time' stationary models.
This point is already true for the determination of the basic reproduction number, deduced from the model behaviour in the vicinity of a disease-free equilibrium; but is even more necessary for important issues outside the linearity range of the infection, such as e.g.~the determination of the epidemic final size~\cite{Miller:2012aa,Bacaer:2009ab}.\\

Commuting involves complex interactions between population groups through complicated, but mostly repetitive, spatial and temporal patterns of encounters.
In the present paper, in order to account for such situations, we shift the paradigm from a {\em spatial} `Eulerian' description of the population and its displacements, to a `Lagrangian' description in terms of {\em meeting and mixing} of homogeneous groups followed along their evolution.
To carry out this program, we introduce (a large number of) distinct {\em subpopulations}, which are homogeneous groups of individuals of various sizes that constitute a partition of the total population.
To fix ideas, here a subpopulation represents typically the group of all individuals living in a same location A that go at the same time to another location B (which represents e.g.~school, work, shops, but also means of public transportation\dots), and then to a third location C at the same moment, etc.
A key point is that all individuals of a given subpopulation have the same `history'; that is, all are in contact at every moment with exactly the same other subpopulations.
This is similar in spirit with the homogeneity assumption in the subpopulation models mentioned above.
The example of a subpopulation with five subpopulations is provided in Figure~\ref{fi1}, and two different partitions are shown in Figure~\ref{fi2}.

\begin{figure}
\begin{center}
\includegraphics[scale=0.6]{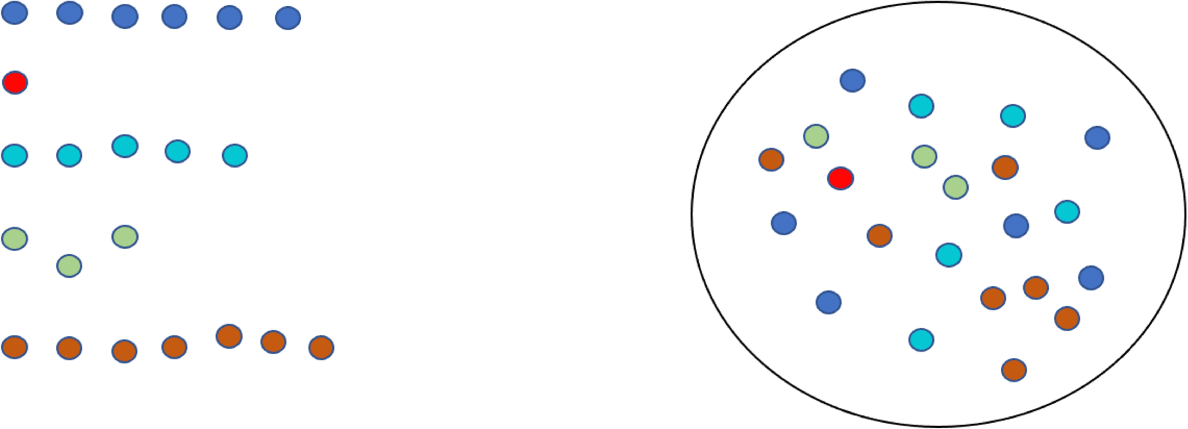}
\caption{A population with 22 individuals, divided into 5 subpopulations $p_i$, $i=1,\dots,5$.
Their cardinals are respectively $|p_1|=6$, $|p_2|=1$, $|p_3|=5$, $|p_4|=3$, $|p_5|=7$.
The set of subpopulations is $\cP=\{p_1, p_2, p_3, p_4, p_5\}$, of cardinal $|\cP|=5$.}
\label{fi1}
\end{center}
\end{figure}
   
\begin{figure}
\begin{center}
\includegraphics[scale=0.6]{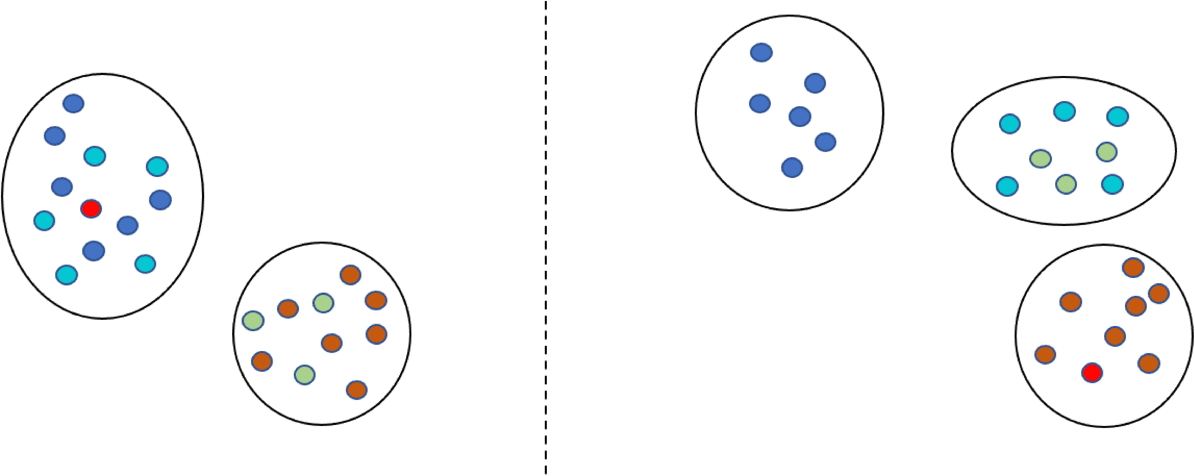}
\caption{Two different partitions $\cR_1, \cR_2$ of the set of subpopulations $\cP$ shown in Figure~\ref{fi1}.
On the left, the set of classes $\cQ_1 := \cP\setminus\cR_1 = \{(p_1, p_2, p_3), (p_4, p_5)\}$ contains $|\cQ_1|=2$ elements, namely the class $q_1:=\{p_1, p_2, p_3\}$ with $|q_1|=3$ elements and the class $q_2:=\{p_4, p_5\}$ with $|q_2|=2$ elements.
On the right, $\cQ_2:= \cP\setminus\cR_2 = \{(p_1), (p_2, p_5), (p_3, p_4)\}$ contains $|\cQ_2|=3$ elements, namely the classes $q_1:=\{p_1\}$ with $|q_1|=1$ elements, $q_2:=\{p_2, p_5\}$  with $|q_2|=2$ elements and $q_3:=\{p_3, p_4\}$  with $|q_3|=2$ elements.
In both configurations all individuals from a same subpopulation are together in the same class, and one may check that the balance formula~\eqref{eq376} given later holds in both cases: $|\cP|=|q_1|+|q_2|+|q_3|+|q_4|+|q_5|$.
This is a consequence of the fact that in a partition, every subpopulation pertains exactly to one class.}
\label{fi2}
\end{center}
\end{figure}

In a complementary way, we assume that in each location, the different subpopulations are {\em perfectly mixed}.
This assumption will allow to describe the epidemiological evolution at each fixed location where a fixed mixture of subpopulations through a {\em unique} aggregate compartmental model.
Of course, this evolution is
a priori different in different locations, due to different population densities and different behaviors modifying the contagion rate (e.g.~at home, in public transportation\dots), and also due to the meeting of different subpopulations (e.g.~at work, at school, during seasonal holidays\dots).
This framework is quite distinct from most models that use subpopulations to represent heterogeneity traits: here the main difference between the subpopulations lies in the history of their contacts with other subpopulations, and in the local conditions of infection transmission, with otherwise identical characteristics.
On the other hand, the modelling of heterogeneous traits may be achieved as usual through pertinent choice of the state variable.



The implementation of this program requires two steps: first to model the alternating mixture of different subpopulations; and second to describe how the epidemic spreads in a homogeneous melange of subpopulations, in a given location during a given time interval.
On the one hand, the commuting and the induced changes in the contacts between subpopulations are modelled as piecewise-constant, periodic, mixing of the latter.
We abandon any reference to `physical locations', in favour of a vision in terms of {\em partitions} of the set of the different subpopulations.
In this sense, trains or metros are also `locations', and `physical locations' that are actually empty (schools at night\dots) do not have to appear.
On the other hand, to describe how the disease spreads locally, we propose an original class of compartmental models in which the {\em force of infection} $\lambda$ is explicitly distinguished and infection occurs by application of the latter to some (usually susceptible, but not only) compartments.
Generally speaking, the force of infection is defined as `the {\em per capita} rate at which susceptible individuals contract the infection'~\cite{Keeling:2008aa}.
While the latter is usually taken as proportional either to the number of infected or to the proportion of infected, more complex models have been used, e.g.~in order to take into account protective measures and intervention policies, or to reflect the necessity of multiple contacts to transmit the disease, see~\cite{Capasso:1978aa,Liu:1987aa,Wang:2005aa,Xiao:2007aa,Cui:2008aa,Cui:2008aa,Li:2008aa,Sun:2011aa,Liu:2012aa,Collinson:2014aa}.
We capture explicitly the force of infection in the class of models proposed below.
This induces a special shape of the right-hand sides of the ODE models, which are jointly {\em linear with respect to the state variable $x$} and {\em affine with respect to the force of infection $\lambda$}.
Notice that we consider in general several infective forces, first of all according to every location, but also to every susceptible compartment in the case of `multi-group population', like done e.g.~in {\rm\cite[Section 23.1]{Thieme:2003aa}}.
This leads, to a {\em vector-valued} force of infection.

Of course, the value of the force of infection is as usual a (nonlinear) function $\lambda(x)$ of the state variable $x$.
The framework adopted here offers the advantage that, when two subpopulations described respectively by state vectors $x_1, x_2$ come into contact (i.e.~pertain to the same class in the current partition), the resulting force of infection is simply $\lambda(x_1+x_2)$ and applies identically to each of them, with in fact no other interaction between the two subgroups.

This feature has an important consequence.
Many models of mathematical epidemiology describe the evolution of the epidemiological status of a given population in terms of {\em proportions} (of susceptible individuals, of infected individuals, etc.), allowing in particular to assume constant recruitment terms.
Here, due to our modelling choice, it is necessary to have the ability to {\em sum up} the values at each compartment of the different populations.
As a result, the models introduced here manipulate exclusively {\em extensive quantities}, and it is usually necessary to substitute constant recruitment terms by expressions proportional to the size of each subpopulation, in order to allocate realistically the demographic effects.

Up to our knowledge, the use of periodic epidemiological model structured in subpopulations with the purpose of accounting for commuting is original.
In contrast with the models mentioned above, here all subpopulations will have essentially the same behaviour with respect to the infection spread: they differ only through the history of their displacements and the local definition and intensity of the force of infection, and subpopulations that pertain to the same equivalence class of the partition at a certain time participate jointly and identically 
to the spreading.

We display in the present paper the proposed modelling framework and provide some key analysis results extending the approach in~\cite{Driessche:2002aa}.
This has to be seen as a preliminary step towards more complex results that will fully exploit the modelling capabilities.
The modelling approach based on compartmental models has produced numerous variants in mathematical epidemiology~\cite{Capasso:1993aa,Diekmann:2000aa,Keeling:2008aa,Brauer:2012aa}.
Attempts have been made to obtain general classes regrouping related models, liable to common general analysis results.
See for example {\em multi-group models}~\cite{Thieme:2003aa,Guo:2012aa}, {\em metapopulation models}~\cite{Wang:2004aa,Arino:2005aa,Arino:2009aa}, {\em models of differential susceptibility and infectivity}~\cite{Iggidr:2007aa,Bonzi:2011aa}.

In a sense, the present work constitutes an attempt to unify them, and to `reconcile' the classical modelling approaches
with some analysis results~\cite{Driessche:2002aa,Wang:2008aa} established for apparently less structured and less specific systems of equations.
It is an effort to contribute to rationalizing and systematizing the modelling and analysis of epidemics, in line with the work~\cite{Hethcote:1994aa}.\\


The paper is organised as follows.
We display first in Section \ref{se1} the representation of the changing melange at the scale of the global population.
We then propose in Section \ref{se2} a class of epidemiological models that describe locally the infectious processes, in which the force of infection is explicitly depicted.
We show by ample examples that it includes various compartmental models previously shown in the literature.
These two scales of description are built in a consistent way, and are put together in Section \ref{se3}, which summarizes the key modelling contribution of the paper, consisting of an original class of models of epidemics in commuting populations.
Adequate assumptions (numbered~\ref{as1},~\ref{as3},~\ref{as4}) are presented, and it is shown how they allow to recover essentially the framework developed by van den Driessche and Watmough~\cite{Driessche:2002aa} and to compute the basic reproduction number $\cR_0$ of an epidemic in a given, unique, subpopulation under stationary conditions
(Theorem~\ref{th1}).
The general analysis results are then given in Section~\ref{se6}, indicating how to compute the basic reproduction number for the proposed general class under periodic commutations and providing some applications
(Theorem~\ref{th5}, Corollary~\ref{co0} and Lemma~\ref{le87}).
Concluding remarks are provided in Section~\ref{se7}.

\paragraph{Notations}
A list of the notations specific to the present paper is provided in Table~\ref{fi3}.
We recall here some usual mathematical notions.

$\bullet$
By definition let the vector $\bfo_n\in\Rset^n$, $n\in\Nset\setminus\{0\}$ be such that $\bfo_n\t :=\begin{pmatrix} 1 & \dots & 1 \end{pmatrix}$.
$\bullet$
The {\em componentwise} order relation on any Cartesian power of the set $\Rset$ is denoted as usual $\geq$.
The relation $>$ means `larger but not equal', and the notation $\succ$ is used to denote the strict ordering of every component ($0 \prec x\in\Rset^n$ means $0 < x_i$, $i=1,\dots,n$).
$\bullet$
The {\em spectral radius} of a real square-matrix $M$ (that is the maximum of the moduli  of its eigenvalues) is denoted by $\rho(M)$; its {\em stability modulus} (that is the maximum real part of its eigenvalues) is denoted $s(M)$.

\begin{table}
\label{fi3}
\begin{center}
\begin{tabular}{|l|l|}
\hline
$x$ & State variable of the epidemiological model\\
$x^I$ & Infected compartments of the state variable\\
$x^U$ & Non-infected compartments of the state variable\\
$\lambda$ & Vector of the forces of infection\\
\hline
$n$ & Dimension of the state variable $x$\\
$n_I$ & Dimension of the infected compartment state $x^I$\\
$n_U$ & Dimension of the non-infected compartment state $x^U$\\
$n_\lambda$ & Dimension of the vector of the forces of infection $\lambda$\\
\hline
$\Xbf_s$ &
Set of all disease-free states ($\Xbf_s \subset \Rset_+^n$)\\
$\Xbf_\eq$ &
Set of all disease-free equilibrium points ($\Xbf_\eq \subset \Xbf_s$)\\
\hline
\hline
$\cP$ & Set of the subpopulations\\
$\cR$ & Partition of $\cP$ (possibly time-dependent)\\
$\cQ$ & Quotient set $\cP\setminus\cR$ of the equivalence classes (`loci') of subpopulations by $\cR$\\
$\cX$ & Global state variable (of dimension $|\cP|n$)\\
\hline
$p$ & Subpopulation name\\
$[p], [p]_\cR$ & Class of the subpopulation $p$\\
$q$ & Class name\\
\hline
$i$ & Subpopulation index ($i\in\{1,\dots, |\cP|\}$) or compartment index ($i\in\{1,\dots, n\}$) \\
$j$ & Class index\\
\hline
\hline
$m$ & Interswitching interval number (in periodic evolution)\\
$k$ & Interswitching interval index \\
\hline
\end{tabular}
\caption{Nomenclature.
The cardinals of the sets $\cP, \cQ$ are written $|\cP|, |\cQ|$, and the cardinal of a class $q\in\cQ$ is written $|q|$.
Without creating confusion, the same letter $i$ is used to index the subpopulations and the compartments in the state vector of each subpopulation.
The notations and explanations related to the model itself are to be found in the text.}
\end{center}
\end{table}

\section{Representation of commuting populations}
\label{se1}

In order to represent commuting populations, we first introduce in Section~\ref{se11} a notion of {\em partitioned-population model}, with several subpopulations evolving in parallel, in contact or out of contact.
We then introduce in Section~\ref{se115} a way to represent the commutations.
This allows to define {\em models of commuting populations} in Section~\ref{se12}, which are basically partitioned-population models where the contact pattern between subpopulations is varying with time.

\subsection{Partitioned-population models}
\label{se11}

A basic compartmental model describing the epidemiological evolution within an isolated, homogeneous and perfectly mixed population, is given~\cite{Capasso:1993aa,Diekmann:2000aa,Keeling:2008aa,Brauer:2012aa} as a dynamical system on $\Rset_+^n$ defined by a system of ordinary differential equations
\begin{equation}
\label{eq0}
\dot x = f(x(t),\lambda(t)),\qquad \lambda(t) := \lambda(x(t)),
\end{equation}
for  $f\ :\ \Rset_+^n\times \Rset_+^{n_\lambda} \to \Rset^n$, $\lambda\ :\ \Rset_+^n \to \Rset_+^{n_\lambda}$.
The vector $x$ bears the {\em numbers of individuals} in each of the model compartments, and the vector $\lambda$ gathers {\em forces of infection} exerted on the components of $x$.
{\em Frequency-dependent} transmission mechanisms are typically rendered by a function $\lambda$ positively homogeneous of degree $0$, and {\em density-dependent} transmission mechanisms by function positively homogeneous of positive degree\footnote{Recall that (in an adequate setting) a function $g$ is called positively homogeneous of degree $d$ if $g(\alpha x) = \alpha^d g(x)$ for any positive scalar $\alpha$ and any $x$ in its domain.}, but as mentioned in Section~\ref{se0}, other more complex expressions may be found in the literature.
The dimension $n_\lambda$ of the latter may be larger than 1, see examples in Section \ref{se2}.


%
%
%

\begin{definition}[Solution of a partioned-population model]
\label{de1}
Let $\cP$ be a {\em finite} set, called the {\em set of subpopulations,} and $\cR\subset\cP\times\cP$ be an equivalence relation on $\cP$.
For any subpopulation $p\in\cP$, the corresponding equivalence class of $p$ in the quotient set $\cQ := \cP\backslash\cR$ is denoted $[p]_\cR$, simply abbreviated $[p]$ when no confusion is possible.
We call
{\em $(\cP,\cR)$-solution of model \eqref{eq0}} any continuous function 
 $\cX\ :\ \Rset_+\to\Rset_+^{n|\cP|}$, $\cX:=(x_1,\dots,x_{|\cP|})$, continuously differentiable on $\Rset_+$ such that
\begin{equation}
\label{eq1}
\dot x_p = f(x_p(t),\lambda_{[p]}(t)),\qquad \lambda_{[p]}(t) := \lambda_{[p]} \left(
x_{[p]} (t)
\right),\qquad
x_{[p]} := \sum_{p'\in [p]} x_{p'},
\qquad p\in\cP.
\end{equation}
\end{definition}


$\cP$ represents the index of different subpopulations, while the equivalence relation $\cR$ defines which of them may infect each other.
Therefore, one may consider each equivalence class as a {\em location}, namely the place where are present every subpopulation pertaining to this class.
However, what is done here is not exactly to define the {\em place} where is located a given population at a given time, but rather which are the other populations with which it interacts.
In particular, this perspective suppresses the necessity of any absolute notion of (geographical) location: after a switch in the repartition of the populations, the new partition is a priori not related to the previous one, and the equivalence classes change and become incomparable without supplementary information.

In \eqref{eq1}, the conditions of transmission of the infection at time $t$ in a certain location $[p]$ depend upon the total population present therein at that time, which is exactly what is written $x_{[p]}(t)$.
These conditions may be different from one place to the other: this is rendered by the dependence of $\lambda_{[p]}$ upon $[p]$.
On the other hand, the {\em mechanisms of the infection} themselves are the same in every location, and for this reason the function $f$ is the same at every location.
Recall that, as mentioned in Section \ref{se0}, the definition of $x_{[p]}$ in \eqref{eq1} makes it essential to place in the components of $x$ {\em extensive quantities} such as numbers of individuals, and not relative proportions or densities.

Another point worth noting is that the coupling of two different subpopulations $p,p'\in\cP$ that are present together in the same place (i.e.~$[p]=[p']$) is reflected exclusively through the shared force of infection.

Summing up the contributions of every population within a given class yields:
\begin{equation}
\label{eq10}
\dot x_{[p]} = \sum_{p'\in [p]} f(x_{p'}(t),\lambda_{[p]}(t)),\qquad \lambda_{[p]}(t) = \lambda_{[p]} \left(
x_{[p]} (t)
\right),\qquad
p\in\cP.
\end{equation}
If $f$ is linear with respect to its first argument (an assumption that will be explained and adopted in Section~\ref{se2}), the effective number of equations in \eqref{eq10} is the number of components of the partition $\cR$, that is the cardinal $|\cQ|$ 
(rather than $|\cP|$), times the state dimension $n$ of the model~\eqref{eq0}.
As a matter of fact, one then obtains by adding all equations within a given equivalence class
\begin{equation*}
\dot x_{[p]} = f(x_{[p]}(t),\lambda_{[p]}(t)),\qquad \lambda_{[p]}(t) = \lambda_{[p]} \left(
x_{[p]} (t)
\right),\qquad p\in\cP,
\end{equation*}
or more simply:
\begin{equation}
\label{eq3}
\dot x_q = f(x_q(t),\lambda_q(t)),\qquad \lambda_q(t) = \lambda_q \left(
x_q (t)
\right),\qquad q\in \cQ.
\end{equation}
This describes the evolution in any `location' $q$.

Let $q\in \cQ$ be given and assume $x_p(0)\in\Rset_+^n$ given, for any $p\in\cP$ such that $[p]=q$.
From the solution of the nonlinear problem \eqref{eq3}, one may compute the solution of the different interacting subpopulations $p$ such that $[p]=q$, as follows.
\begin{enumerate}
\item
Compute
\begin{equation}
\label{eq4}
x_q(0) = \sum\{x_p(0)\ :\ p\in\cP,\ [p]=q \}.
\end{equation}
\item
Solve \eqref{eq3} with the initial condition \eqref{eq4}.
This provides $x_q(t)$, and as a by product, this also provides $\lambda_q(t) = \lambda_q \left(
x_q (t)
\right)$.

\item
Then, for any $p\in\cP$ such that $[p]=q$, one has
\[
\dot x_p = f(x_p(t),\lambda_q(t)),\qquad x_p(0) \text{ known.}
\]
Solving the Cauchy problem for this {\em time-varying} ordinary differential equation, one obtains the solution $x_p$ for any $p\in q$.
More specifically, as $f$ is linear with respect to its first argument, the solution yields the following algebraic relation
\begin{equation}
\label{eq44}
x_p(t) = \Phi_q(t) x_p(0),\qquad t\geq 0,
\end{equation}
where $\Phi_q(\cdot)$ is the fundamental matrix of the {\em linear, time-varying,} equation
\[
\dot y = f(y,\lambda_q(t)),\qquad \lambda_q(t) = \lambda_q \left(
x_q (t)
\right),
\]
that is by definition
\begin{equation}
\label{eq454}
\dot\Phi_q = f(\Phi_q(t),\lambda_q(t)),\quad t\geq 0,\qquad
\Phi_q(0) = I_n.
\end{equation}
By convention, the result of applying $f$ to the matrix $\Phi_q(t)$ in the right-hand side of~\eqref{eq454}, is the matrix obtained as the concatenation of the vectors that result from applying $f$ to each of the column vectors of $\Phi_q(t)$.
In other words, the $i$-th column of $\Phi_q(t)$ contains the value at time $t$ of the trajectory initiated on the $i$-th vector of the canonical basis at time $0$.

\end{enumerate}
The previous consideration is quite important from the point of view of numerical simulation.
As a matter of fact, solving~\eqref{eq1} as a system of $|q|$ coupled equations engages $|q|n$ scalar variables for any subpopulation class $q$; while computing jointly the solution of~\eqref{eq3} with initial condition~\eqref{eq4} amounts to solve a system of ODEs with $n$ scalar variables, and solving~\eqref{eq454} (in sequence or in parallel with the previous one) necessitates $n^2$ scalar variables, independently of the size of the class $q$.
This number is usually much smaller than the previous one.
The same observation extends directly to the piecewise autonomous models presented below in Section~\ref{se12}.\\ 

Before going further, we provide now explicit expressions of $f$ and $\lambda$ for an elementary example.

\begin{example}
\label{ex1}

We consider the simple case where \eqref{eq0} is an SIR model with no induced mortality, that is
\begin{equation}
\label{eq555}
\dot S = \mu N - \beta S\frac{I}{N} -\mu S,\qquad \dot I = \beta S\frac{I}{N} -(\gamma+\mu) I,\qquad \dot R = \gamma I -\mu R,
\end{equation}
where  $\beta, \mu, \gamma$ are positive parameters, and $N := S+I+R$ is the total population value, invariant along every trajectory due to the fact that $\dot N \equiv 0$.
In order to comply with the analysis framework constructed by van den Driessche and Watmough~{\rm\cite{Driessche:2002aa}}, we put first in the state vector $x$ the (unique) {\em infected} compartment $I$, and take 
\begin{equation}
\label{eq5}
x:= \begin{pmatrix} I \\ S \\ R \end{pmatrix},\quad
f(x,\lambda) := \begin{pmatrix}
\lambda S - (\gamma+\mu) I \\ - \lambda S + \mu (I+R) \\ \gamma I - \mu R \end{pmatrix}
= \begin{pmatrix}
- (\gamma+\mu) & \lambda & 0 \\ \mu & -\lambda & \mu \\ \gamma & 0 & - \mu  \end{pmatrix} x
,\quad
\lambda := \beta\frac{I}{S+I+R}.
\end{equation}
The previous notation showcases the linear dependance of the function $f$ with respect to its first argument $x$ and its affine dependance with respect to $\lambda$.
These properties are one of the key elements of the framework introduced later in Section~{\rm\ref{se2}}.

The situation corresponding to system \eqref{eq1} (or \eqref{eq3}) is retrieved by introducing the function $f$ in \eqref{eq5} together with the definition of $\lambda_q$, in such a way that, for any $p\in\cP,$ the evolution of disease analogous to system \eqref{eq1} is  given by:
	\begin{equation}
	\label{eq6}
	\hspace{-.1cm}
	x_p:= \begin{pmatrix} I_p \\ S_p \\ R_p \end{pmatrix},\quad
	\dot x_p
	= \begin{pmatrix}
- (\gamma+\mu) & \lambda_{[p]} & 0 \\ \mu & -\lambda_{[p]} & \mu \\ \gamma & 0 & - \mu  \end{pmatrix} x_p
	,\quad
	\lambda_{[p]} = 
	\beta_{[p]}\frac{I_{[p]}}{S_{[p]}+I_{[p]}+R_{[p]}},\quad t\geq 0,\quad p\in\cP,
	\end{equation}
for given positive parameters $\beta_q$, $q\in \cQ$.
The latter may be identical (homogeneous transmission conditions), or not (heterogeneous transmission conditions).
In \eqref{eq6}, in agreement with equation \eqref{eq1}, one has $I_{[p]} := \sum_{p'\in [p]} I_{p'}$, and similarly for $S_{[p]}$ and $R_{[p]}$.
\end{example}


\subsection{Admissible switching signals}
\label{se115}

In order to introduce the notion of commuting population models in the next Section~\ref{se12}, we first define the key notion of {\em admissible switching signals.}


\begin{definition}[Admissible switching signals]
\label{de2}
Let $\cP$ be a finite set of subpopulations.
We call {\em admissible switching signal} any function $\cR(\cdot)$ which associates to any $t\geq 0$ an equivalence relation (or equivalently a partition) on $\cP$, denoted $\cR(t)$, and such that $\cR(\cdot)$ {\em is constant on any bounded set of $\Rset^+$, except on a finite number of points} (called {\em switches}).

For any admissible switching signal $\cR$, we denote $\cI$ the (open) set of time instants on which $\cR$ is continuous.
Otherwise said, $\Rset_+\setminus\cI$ is the set of {\em switching times}.
\end{definition}

For any admissible switching signal, there exists a finite or denumerable number of switches, and the set $\cI$ is a (finite or denumerable) union of successive open intervals sharing common endpoints.

In order to represent commutations, our interest lies especially in switching signals showing repeating pattern.
For this, we will consider in the sequel admissible switching signals $\cR(t)$ which are {\em $T$-periodic} for some $T>0$.
Denoting $m\in\Nset$ the number of commutations of such a signal $\cR(\cdot)$ within each period, we will represent the corresponding switching pattern by an element of some set $\Lambda_m$, $m>0$,
defined as
\begin{equation}
\label{eq58}
\Lambda_m=
\left\{
(\tau_0,\tau_1,\dots,\tau_m) : 0 = \tau_0 < \tau_1 < \dots < \tau_m =T
\right\}.
\end{equation}
For convenience, we define also the $m$ unions $\cI_k$ of open intervals, $k=1,\dots, m$, by
\[
\forall t\geq 0,\qquad
t\in\cI_k \ \Leftrightarrow\ t \in \left(
\tau_{k-1},\tau_k
\right) \quad \text{mod}\ T.
\]
One then has $\cI = \bigcup_{k=1}^m \cI_k$, for $\cI$ defined in Definition \ref{de2}.

As no commutation occurs during the intervals contained in $\cI_k$, we may denote unambiguously
\begin{itemize}
\item
$\cR(\cI_k)$ the partition in force at any $t\in\cI_k$, instead of $\cR(t)$;
\item
$[p]_{\cR(\cI_k)}$ the corresponding class of a subpopulation $p\in\cP$, instead of $[p]_{\cR(t)}$;
\item
and $\cQ(\cI_k) = \cP\setminus\cR(\cI_k)$ the set of subpopulation classes on the set $\cI_k$, instead of $\cQ(t)$, $t\in\cI_k$.
\end{itemize}

\subsection{Models of commuting populations}
\label{se12}


We are finally in position to introduce the models of commuting populations.
Consider the following system of equations
\begin{subequations}
\label{eq2}
\begin{gather}
\label{eq2a}
\hspace{-.4cm}
\dot x_p = f(x_p(t),\lambda_{[p]_{\cR(t)}}(t)),\quad
\lambda_{[p]_{\cR(t)}}(t) = \lambda_{[p]_{\cR(t)}} \left(
x_{[p]_{\cR(t)}}(t)
\right),\quad 
x_{[p]_{\cR(t)}} := \hspace{-.2cm} \sum_{p'\in [p]_{\cR(t)}} \hspace{-.2cm} x_{p'},
\quad t\in\cI,\quad p\in\cP,\\
\label{eq2b}
x_p(t^+) = x_p(t^-),
\quad t\in\Rset_+\setminus\cI,
\quad p\in\cP.
\end{gather}
\end{subequations}
%
We call {\em epidemiological model in commuting population} the system~\eqref{eq2}.

\begin{definition}[Solution of a model of commuting population]
\label{de3}
Let $\cP$ be a finite set of subpopulations, and $\cR(\cdot)$ an admissible switching signal.
We call {\em $(\cP,\cR(\cdot))$-solution of model \eqref{eq0}} any continuous function $\cX\ :\ \Rset_+ \to \Rset_+^{n|\cP|}$, $t\mapsto (x_1(t),\dots, x_{|\cP|}(t))$, continuously differentiable on $\cI$ that fulfils \eqref{eq2}.
\end{definition}
In \eqref{eq2}, $[p]_{\cR(t)}$ denotes for any subpopulation $p\in\cP$ the equivalence class to which $p$ belongs at time $t$, that is the `place' where it is located, together with the other members of the same equivalence class $[p]_{\cR(t)}$.
The difference between \eqref{eq1} and \eqref{eq2} is only in the fact that the partition $\cR$ in \eqref{eq2} is now a function of time.

Due to the non-accumulation of the number of switches (which comes from the admissibility of $\cR$), it is easy to show that there exists a unique $(\cP,\cR(\cdot))$-solution of model \eqref{eq0} passing through a given initial condition $\cX(0)\in\Rset_+^{n|\cP|}$.

If $f$ is linear with respect to its first argument, writing together the evolution of the populations within each class yields the following equation, which is the analogue of \eqref{eq3} for system \eqref{eq2}:
\begin{equation*}
\dot x_q = f(x_q(t),\lambda_q(t)),\qquad \lambda_q(t) = \lambda_q \left(
x_q (t)
\right),\qquad t\in\cI,\quad
q\in \cQ(t).
\end{equation*}
Here by definition we put
\begin{equation*}
\cQ(t) := \cP\backslash\cR(t),\qquad t\in \cI.
\end{equation*}

The considerations on simulation issues made at the end of Section~\ref{se11} apply directly to system~\eqref{eq2}, due to the continuity of the state variable at the switching instants.

\begin{example}[continuation of Example~\ref{ex1}]
\label{ex2}

The example of partitioned system shown in Example {\em\ref{ex1}} extends straightforwardly to a model of commuting population, using the definition of $f$ in \eqref{eq5}, and defining, for a given admissible switching signal, the force of infection at every (`location') $q\in\cP\backslash\cR(t)$ at any given time $t\in\cI$ by
\begin{equation}
\label{eq9}
\lambda_q(x) := \beta_q \frac{I}{S+I+R},\qquad t\in\cI,\quad q \in \cQ(t).
\end{equation}

\end{example}

\section{Epidemiological dynamics}
\label{se2}

So far, we have introduced a description of the commutations and `mixing' of different subpopulations.
In the present section, we will first focus in Section~\ref{se50} on the representation of the different steps of the epidemiological processes (infection, exposition, recovery, immunity\dots), following the framework introduced by van den Driessche and Watmough~\cite{Driessche:2002aa}, further extended by Wang and Zhao~\cite{Wang:2008aa} to periodic problems.
As an illustration and in order to show the interest of the obtained class of models, we show in Sections~\ref{se51} to~\ref{se55} that it contains in particular the five relevant examples studied in~\cite{Driessche:2002aa}.

\subsection{A class of compartmental disease transmission models}
\label{se50}

To adapt the framework of~\cite{Driessche:2002aa}, a first formality is to introduce a distinction between the `infected' and `non-infected' compartments (respectively with exponents $I$ and $U$) and
to separate accordingly the terms in the equations which correspond to infections.
More precisely we write the local model \eqref{eq0} under the more precise form:
\begin{subequations}
\label{eq3017}
\begin{gather}
\label{eq30b}
x :=\begin{pmatrix} x^I \\ x^U  \end{pmatrix},\qquad
f(x,\lambda) := \begin{pmatrix} \cF^I(x,\lambda) + \cV^I(x,\lambda) \\ \cV^U(x,\lambda) \end{pmatrix},\\
n_I := \dim x^I,\qquad n_U := \dim x^U,\qquad n_I+n_U = n.
\end{gather}
The components $\cF^I(x,\lambda), \cV^I(x,\lambda), \cV^U(x,\lambda)$
are chosen according to the prescriptions made by van den Driessche and Watmough\footnote{The right-hand side decomposition is denoted $f(x) = \cF(x) - \cV(x)$ in~\cite{Driessche:2002aa}. For simplicity, we choose here the opposite sign for the term $\cV$.
This has no major impact, including on the subsequent definition of the basic reproduction number in~\eqref{eq355a}, as the spectral radius of a matrix is an even function.}:
$\cF^I(x,\lambda)$ gathers solely the rates of appearance of new infections in infectious compartments, and $\cV^I(x,\lambda)$ the rates of transfer of individuals in and out of the infected compartments by any other means; while $\cV^U(x,\lambda)$ expresses the rates of transfer of individuals in and out of the non-infected compartments.

In fact, more structure is present, as will be demonstrated in Sections~\ref{se51} to~\ref{se55} by the study of examples.
A key feature of most classical models is that the force of infection, which measures the per capita probability of acquiring the infection~\cite{Anderson:1991aa}, enters {\em linearly in the infection terms}; while the other terms present in the equations are linear with respect to the numbers of individuals in some compartments.
In line with this observation, we will assume that the function $f(x,\lambda)$ is {\em linear} with respect to $x$ and {\em affine} with respect to $\lambda$, and therefore that $\cF^I(x,\lambda)$, $\cV^I(x,\lambda)$ and $\cV^U(x,\lambda)$ share the same properties.
In particular, the nonlinear character of the system arises from the force of infection $\lambda(x)$, and from  cross products between components of $\lambda(x)$ and $x$.

Due to the fact that infections appear only in presence of infected, we assume that the function $\cF^I(x,\lambda)$ which bears the corresponding terms is {\em linear upon} $\lambda$ (and not simply affine), and consequently bilinear in $(x,\lambda)$.
The terms contained in $\cF^I$ corresponds to the appearance of new infected individuals, therefore they always take on nonnegative values.
They are compensated by the disparition of other (non-infected, but also infected in case of super-infections) individuals in corresponding compartments.
We also assume that the term $\cV^I(x,\lambda)$ is linear in $x^I$ only, meaning that it does not depend upon $x^U$.
This formalizes the fact that every transfer to an infected compartments coming from a non-infected one, occurs through an infection process and therefore is counted within $\cF^I(x,\lambda)$.

Consequently we denote more specifically
\begin{equation}
\label{eq17}
\cF^I(x,\lambda) := F^I_+ (\lambda) x^I + F_+ (\lambda) x^U,\quad
\cV^I(x,\lambda) := V^I x^I -  F^I_- (\lambda) x^I,\quad
\cV^U(x,\lambda) := V^U x - F_- (\lambda) x^U
\end{equation}
\end{subequations}
where $V^I \in \Rset^{n_I\times n_I}$, $V^U \in \Rset^{n_U\times n}$, and where the functions $F^I_\pm (\cdot) \ :\ \Rset^{n_\lambda} \to \Rset^{n_I\times n_I}$, $F_+ (\cdot) \ :\ \Rset^{n_\lambda} \to \Rset^{n_I\times n_U}$, $F_- (\cdot) \ :\ \Rset^{n_\lambda} \to \Rset^{n_U\times n_U}$
are {\em linear}.
In~\eqref{eq17}, the new infections are decomposed into infection of susceptible individuals and infection of already infected ones, respectively by the nonnegative terms $F_+ (\lambda) x^U$ and $F^I_+ (\lambda) x^I$.
They result in the {\em remotion} of corresponding infected individuals.
This appears through the nonpositive term $- F^I_- (\lambda) x^I$ in $\cV^I(x,\lambda)$ in case of infection of an already infected, and through the nonpositive term $- F_- (\lambda) x^U$ in $\cV^U(x,\lambda)$ in case of infection of a non-infected.
Notice also that these two matrices have to be {\em diagonal}, while in absence of self-infections, the diagonal of
$F^I_+ (\lambda)$ has to be zero.

Denoting
\begin{equation}
\label{eq781}
F_0(\cdot) := F^I_+ (\cdot) - F^I_- (\cdot),
\end{equation}
one is finally led to adopt the general form
\begin{equation}
\label{eq41}
f(x,\lambda) := \begin{pmatrix} (F_0 (\lambda) + V^I) x^I + F_+ (\lambda) x^U\\ V^U x - F_- (\lambda) x^U \end{pmatrix}.
\end{equation}
This decomposition is indeed the most general form that integrates {\em linear dependence with respect to $x$ and affine dependence with respect to $\lambda$} within the framework based on the {\em distinction of the new infection terms} introduced in~\cite{Driessche:2002aa}.
To summarize the comments made below, it is reasonable to assume that $F_+(\cdot), F_-(\cdot)$ take on nonnegative values with $F_-(\cdot)$ is diagonal, and that $F_0(\cdot)$ is a {\em Metzler} matrix: this will be the content of Assumption~\ref{as1} below.


\begin{subequations}
\label{eq18}
Compartmental models describe transfers of individuals between different compartments.
Except for the processes of birth and death which may add or subtract individuals, the corresponding transitions are essentially lossless.
As a consequence, mass balance equations are usually fulfilled in absence of specific, disease-induced, mortality, namely:
\begin{gather}
\label{eq18a}
\bfo_{n_I}\t F_0 (\cdot) = 0_{n_I}\t,\qquad
\bfo_{n_I}\t F_+ (\cdot) = \bfo_{n_U}\t F_-(\cdot),\\
\label{eq18b}
\bfo_{n_I}\t V^I + \bfo_{n_U}\t V^U  \begin{pmatrix}
I_{n_I} \\ 0_{n_U \times n_I}
\end{pmatrix}
= 0_{1\times n_I},\\
\label{eq18d}
\bfo_{n_U}\t V^U \begin{pmatrix}
0_{n_I \times n_U} \\ I_{n_U}
\end{pmatrix}
= 0_{1\times n_U}.
\end{gather}
These formulas express the conservation of individuals during the transfers from one compartment to others during infections, for any value of the force of infection $\lambda$ (formula~\eqref{eq18a}); and during the non-infective transfers: formula~\eqref{eq18b} concerns the transfers of infected individuals
and~\eqref{eq18d} the transfers of non-infected individuals.
As a complementary explanation, consider the total population for system \eqref{eq0}-\eqref{eq3017}, which is equal to $\bfo_n\t x = \bfo_{n_I}\t x^I + \bfo_{n_U}\t x^U$.
The derivative of this quantity with respect to time is
\begin{eqnarray*}
\bfo_{n_I}\t \dot x^I + \bfo_{n_U}\t \dot x^U
& = &
\bfo_{n_I}\t \left(
F_0 (\lambda) + V^I
\right)x^I + (\bfo_{n_I}\t F_+ (\lambda)- \bfo_{n_U}\t F_- (\lambda)) x^U + \bfo_{n_U}\t V^U x\\
& = &
\left[
\bfo_{n_I}\t \left(
F_0 (\lambda) + V^I
\right)
+ \bfo_{n_U}\t V^U  \begin{pmatrix}
I_{n_I} \\ 0_{n_U \times n_I}
\end{pmatrix}
\right]x^I\\
& &
+ \left[
\bfo_{n_I}\t F_+ (\lambda)- \bfo_{n_U}\t F_- (\lambda)
+ \bfo_{n_U}\t V^U \begin{pmatrix}
0_{n_I \times n_U} \\ I_{n_U}
\end{pmatrix}
\right] x^U,
\end{eqnarray*}
so that $\bfo_{n_I}\t \dot x^I + \bfo_{n_U}\t \dot x^U\equiv 0_n$ when all identities in \eqref{eq18a}, \eqref{eq18b}, \eqref{eq18d} are fulfilled, and the total population is then constant along the evolution.

In presence of specific, disease-induced, mortality, these formulas have to be modified to account for this effect, which is limited to the infected compartments.
Equations \eqref{eq18a}, \eqref{eq18d} remain valid, while \eqref{eq18b} has to be replaced by
\begin{equation}
\label{eq18c}
\bfo_{n_I}\t V^I + \bfo_{n_U}\t V^U  \begin{pmatrix}
I_{n_I} \\ 0_{n_U \times n_I}
\end{pmatrix}
<\ 0_{1\times n_I}.
\end{equation}
\end{subequations}
Recall that in the previous formula, $<$ means `less or equal, but not equal'.
As a matter of fact, the previous computation shows that one has in this case
\[
\bfo_{n_I}\t \dot x^I + \bfo_{n_U}\t \dot x^U
=
\left(
\bfo_{n_I}\t V^I
+ \bfo_{n_U}\t V^U  \begin{pmatrix}
I_{n_I} \\ 0_{n_U \times n_I}
\end{pmatrix}
\right) x^I \leq 0.
\]
In any case, \eqref{eq18d} alone ensures that any {\em disease-free} evolution occurs under constant total population.

The inequalities presented above are included below in Assumption~\ref{as3}.
Other properties are needed, which are quite natural in this setting.
First, the square matrices $V^I$ and $V^U \begin{pmatrix} 0_{n_I \times n_U} \\ I_{n_U} \end{pmatrix}$ should be Metzler matrices, as they represent intakes from other compartments for the off diagonal terms, and outtakes to other compartments for the diagonal ones.
The latter matrix accounts for the internal exchanges between non-infected compartments, governed by the equation $\dot x^U = V^U \begin{pmatrix} 0_{n_I} \\ x^U \end{pmatrix}$ in absence of infected.
As the balance of these exchanges should be zero in order to sustain constant population levels, it is thus natural to assume that the matrix $V^U \begin{pmatrix} 0_{n_I \times n_U} \\ I_{n_U} \end{pmatrix}$ admits $0$ as an eigenvalue, associated to the left eigenvector $\bfo_{n_U}\t$.

Also, every non-infected compartment is assumed to receive intakes from some infected compartments, either `after infection' or `before' in the shape of births proportional to this compartment: otherwise it is not needed for the description of the disease spread.
For this it is added in Assumption~\ref{as3} that $\bfo_{n_U}\t V^U  \begin{pmatrix}
I_{n_I} \\ 0_{n_U \times n_I}
\end{pmatrix} \succ 0_{1\times n_I}$.
Recall that the symbol $\succ$ has been defined in the paragraph Notations at the end of Section~\ref{se0}: $0 \prec x\in\Rset^n$ means $0 < x_i$, $i=1,\dots,n$.\\


In order to exemplify the ideas behind the previous modelling choices, we first come back to the SIR model already commented.
Afterwards, we study in details in Sections~\ref{se51} to~\ref{se55} the five epidemiological models shown as examples in~\cite[Section 4]{Driessche:2002aa}, with some slight adaptations.
We then come back in Section~\ref{se3} to the formal presentation of the proposed framework.

\begin{remark}
We find in~{\rm\cite[equation (1)]{Bonzi:2011aa}} a general class of models with differential susceptibility and infectivity, and in {\rm\cite[Section 23.1]{Thieme:2003aa}} a general class of models with multiple groups or populations (including e.g.~multi-stage infection).
Both show evolution governed by ODEs that are at the same time linear with respect to state variables, and affine with respect to forces of infection (which are themselves nonlinear functions of the state).
With some slight modifications (in particular, making the recruitment terms proportional to the population), both classes can be embedded in the framework~\eqref{eq3017} by adequate choice of the $x$ variable.
\end{remark}

\begin{example}[end of Examples~\ref{ex1} and \ref{ex2}]
\label{ex4}
The model~\eqref{eq5} is embedded in the framework~\eqref{eq3017} with $x^I := I$, $x^U := \begin{pmatrix} S & R \end{pmatrix}\t$, so that $n=3$, $n_I=1$, $n_U=2$, and one sees that $n_\lambda=1$.
The function $f$ defined in~\eqref{eq5} is decomposed as in~\eqref{eq30b} in its two, infected and non-infected, components, yielding
\[
\cF^I(x,\lambda):= \lambda S,\qquad
\cV^I(x,\lambda):= -(\gamma+\mu)I,\qquad
\cV^U(x,\lambda) := \begin{pmatrix}
\mu (I+R) - \lambda S\\
\gamma I - \mu R
\end{pmatrix},
\]
and the decomposition~\eqref{eq17} with the following elements:
\begin{subequations}
\label{eq613}
\begin{gather}
\label{eq613a}
F_0 (\lambda) := 0,\qquad
F_+ (\lambda) := \lambda\begin{pmatrix} 1 & 0 \end{pmatrix},\qquad
V^I := -(\gamma+\mu),\\
\label{eq613b}
V^U := \begin{pmatrix} \mu & 0 & \mu \\ \gamma & 0 & -\mu \end{pmatrix},\qquad
F_- (\lambda) := \lambda \begin{pmatrix} 1 & 0 \\ 0 & 0\end{pmatrix}.
\end{gather}
One sees easily that the identities in~\eqref{eq18a} and~\eqref{eq18d} are verified, and as well~\eqref{eq18b}, because
\[
\bfo_{n_I}\t V^I + \bfo_{n_U}\t V^U  \begin{pmatrix}
I_{n_I} \\ 0_{n_U \times n_I}
\end{pmatrix}
= -(\gamma+\mu) + \begin{pmatrix}
1 & 1
\end{pmatrix}
\begin{pmatrix}
\mu \\ \gamma
\end{pmatrix}
= 0,
\]
\end{subequations}

\end{example}

\subsection{Treatment model}
\label{se51}

The first example considered in~\cite{Driessche:2002aa} is a tuberculosis model with treatment proposed in~\cite{Blower:1996aa,Castillo-Chavez:1997aa}.
It possesses four compartments, namely, individuals susceptible to tuberculosis (S), exposed individuals (E), infectious individuals (I) and treated individuals (T).
The forces of infection applied to susceptible and treated individuals have different rates, respectively $\beta_1\dps\frac{I}{N}$ and $\beta_2\dps\frac{I}{N}$, where $N := E + I + S + T$.
The exposed individuals enter the infectious compartment at a rate $\nu$.
All newborns are susceptible, and the mortality is uniform and equal to $d > 0$.
The treatment rates are denoted $r_1$, resp.~$r_2$, for the exposed individuals, resp.~the infectious individuals.
A fraction $q$ of the treatments of infectious individuals are successful, and the unsuccessfully treated infectious individuals reenter the exposed compartment.
The epidemiological model is given by the following system.
\begin{equation}\label{W1}
\left \lbrace\begin{array}{lll}
\dot{E}&=& \beta_1 S\dps\frac{I}{N}+\beta_2 T\dps\frac{I}{N}-(d+\nu+r_1)E+(1-q)r_2I,\\
\dot{I}&= & \nu E-(d+r_2)I,\\
\dot{S}&= & b(N) -dS-\beta_1 S\dps\frac{I}{N},\\
\dot{T}&= & -dT + r_1 E+qr_2I-\beta_2T\dps\frac{I}{N}.
\end{array} \right.
\end{equation}

In the case where the natality term is linear $b(N)=bN$,
this decomposes as in \eqref{eq41},
with state dimension $n=4$ and the following dynamics.
\begin{subequations}
\label{eq25}
\begin{gather}
\label{eq25e}
x^I:=\begin{pmatrix} E \\ I  \end{pmatrix}, \qquad
x^U:=\begin{pmatrix} S \\ T  \end{pmatrix}, \qquad
n_I = n_U := 2,\qquad
\lambda := \begin{pmatrix} \lambda_1 \\ \lambda_2  \end{pmatrix},\qquad
n_\lambda :=2,\\
\label{eq25a}
F_0(\cdot) \equiv 0_{2\times 2},\qquad
F_+(\lambda) = F_-(\lambda) := \lambda_1 \begin{pmatrix}
1 & 0  \\ 0 &  0  \end{pmatrix}+ \lambda_2 \begin{pmatrix}
0& 0 \\ 0 &  1  \end{pmatrix},\\
%
\label{eq25c}
V^I :=
\begin{pmatrix}
-(d+\nu+r_1) & (1-q)r_2  \\  \nu &  -(d+r_2) \end{pmatrix}, \qquad
V^U
:=\begin{pmatrix}
b & b & b-d & b \\ r_1 & qr_2 & 0 & -d \end{pmatrix},\\
\label{eq25d}
\lambda_1(x)=\beta_1 \dps\frac{I}{E+I+S+T},\qquad \lambda_2(x) =\beta_2 \dps\frac{I}{E+I+S+T}.
\end{gather}
\end{subequations}
One checks that $V^I$ is a nonsingular Metzler matrix, 
that the identity in \eqref{eq18a} is fulfilled and that~\eqref{eq18d} holds when  the natality and mortality rates $b$ and $d$ are equal.
Whenever $b=d$,~\eqref{eq18b} (resp.~\eqref{eq18c})  is fulfilled when the specific mortality $d$ is zero (resp.~positive).


%

\subsection{Multigroup model}
\label{se52}

The second example in~\cite{Driessche:2002aa} is an $m$-group SIRS-vaccination model of Hethcote~\cite{Hethcote:1987aa,Hethcote:1978aa}.
The model is slightly adapted here.
Using numbers of individuals instead of proportions, the model equations are as follows :
\begin{equation}\label{W01}
\left \lbrace\begin{array}{lll}
\dot{I_i}&=& \dps\sum_{j=1}^{m}\beta_{ij}(x) S_i \frac{I_j}{I_j+S_j+R_j}-(d_i+\gamma_i+\epsilon_i)I_i,\\
\dot{S_i}&= & (1-p_i)b_i (I_i+S_i+R_i) -(d_i+\theta_i)S_i+\sigma_i R_i-\dps\sum_{j=1}^{m}\beta_{ij}(x) S_i \frac{I_j}{I_j+S_j+R_j},\\
\dot{R_i}&= & p_i b_i (I_i+S_i+R_i) +\gamma_i I_i+\theta_i S_i-(d_i+\sigma_i)R_i,\\
\end{array} \right.
\end{equation}
for $i=1,\dots,m$ where $x=\begin{pmatrix}
I_1 & \dots & I_m & S_1 & \dots & S_m & R_1 & \dots & R_m
\end{pmatrix}\t$.
The natural death rate is $d_i$, and $\epsilon_i$ the specific death rate.
Recovering of the infected individuals occurs at rate $\gamma_i$, and immunity wanes at a rate $\sigma_i$.
All newborns are susceptible, with group $i$ having a birth rate equal to $b_i$.
 Here, the dimension of the state variable is $n = 3m$, where $m$ is the number of distinct groups.
Vaccination of the newborns, resp.~of the susceptibles, is achieved with a rate $p_i$, resp.~$\theta_i$.
The incidence rates $\beta_{ij}(x)$ depend on individual behaviour and describes the amount of mixing between groups~\cite{Jacquez:1988aa}.

The decomposition \eqref{eq41}
is obtained here with
\begin{subequations}
	\label{eq26}
	\begin{gather}
	\label{eq26e}
	x^I:=
	\begin{pmatrix} I_1 & \dots & I_m  \end{pmatrix}\t, \quad
	x^U:=
	\begin{pmatrix} S_1 & \dots & S_m & R_1 & \dots \dots R_m  \end{pmatrix}\t,\quad
	n_I = n_\lambda := m,\ n_U:=2m,\\
	\label{eq26a}
F_0(\cdot) \equiv 0_{m\times m},\qquad
F_+(\lambda) := \begin{pmatrix} \Lambda & 0_{m\times m} \end{pmatrix},\qquad
F_-(\lambda) := \diag \{ \Lambda; 0_{m\times m} \},\\
\label{eq26c}
V^I := \diag\{-(d_i+\gamma_i+\epsilon_i) \},\\
\label{eq26g}
V^U := \begin{pmatrix}
\diag\{(1-p_i)b_i\} &  \diag\{(1-p_i)b_i-d_i-\theta_i\} & \diag\{(1-p_i)b_i +\sigma_i\}  \\
\diag\{p_ib_i+\gamma_i\} & \diag\{p_ib_i+\theta_i\} & \diag\{p_ib_i-d_i -\sigma_i\} 
\end{pmatrix},\\
\label{eq26d}
\Lambda(x) := \diag\{ \lambda_i (x)\} \in\Rset^{m\times m},\qquad
\lambda_i(x) := \sum_{j=1}^m \beta_{ij}(x) \frac{I_j}{I_j+S_j+R_j},\  i=1,\dots,m.
\end{gather}
\end{subequations}

Again, $V^I$ is a nonsingular Metzler matrix.
The two identities in \eqref{eq18a} are clearly fulfilled.
On the other hand, here
\begin{gather*}
\bfo_{n_I}\t V^I + \bfo_{n_U}\t V^U  \begin{pmatrix}
I_{n_I} \\ 0_{n_U \times n_I}
\end{pmatrix}
= \begin{pmatrix} b_1-d_1- \epsilon_1 & \dots & b_m-d_m - \epsilon_m \end{pmatrix},\\
\bfo_{n_U}\t V^U \begin{pmatrix}
0_{n_I \times n_U} \\ I_{n_U}
\end{pmatrix}
= \begin{pmatrix} b_1-d_1 & \dots & b_m-d_m & b_1-d_1 & \dots & b_m-d_m \end{pmatrix},
\end{gather*}
so that \eqref{eq18d} holds when $b_i = d_i, i=1,\dots,m$; and \eqref{eq18b}, resp.~\eqref{eq18c} also holds when $\epsilon_i=0, i=1,\dots,m$, resp.~$\begin{pmatrix} \epsilon_1 & \dots & \epsilon_m \end{pmatrix} > \begin{pmatrix} 0 & \dots & 0 \end{pmatrix}$.

\comment{
Hence, we have 
\[\cF^I(x,\lambda)=  \begin{pmatrix} \Lambda & 0_{m\times m} \end{pmatrix} x^U\]

\[\cV^I(x,\lambda)=-V^Ix^I,\] where 
\[
\cV^I_{-}(x,\lambda)
	= -V^I x^I,\quad
	\cV^I_{+}(x,\lambda)=0_m.
	\]
}

\subsection{Staged progression model}
\label{se53}

The third example shown in~\cite{Driessche:2002aa} has a single susceptible compartment, but the infected individuals pass through $m$ different stages of disease~\cite{Hyman:1999aa}.
The model has dimension $m+1$, with the following equations.
\begin{equation*}
\left \lbrace
\begin{array}{lll}
\dot{I}_1
&=&
S\dps\sum_{k=1}^{m-1}\dps\frac{\beta_{k} I_k}{N}-(\nu_1+d_1)I_1,\\
\dot{I}_i
&= &
\nu_{i-1}I_{i-1}-(\nu_i+d_i)I_i,\quad i=2,\dots,m-1,\\
\dot{I}_m
&= &
\nu_{m-1}I_{m-1}-d_mI_m,\\
\dot{S}
&= &
bN -bS-S\dps\sum_{k=1}^{m-1}\dps\frac{\beta_{k} I_k}{N},
\end{array} \right.
\end{equation*}
where the total population is $N:=S+I_1+\dots+I_m$.

Denoting $e_1$ the first vector of the canonical basis of $\Rset^m$,
this model writes as \eqref{eq41}
with
\begin{subequations}
\begin{gather}
x^I:=\begin{pmatrix} I_1 & \dots & I_m  \end{pmatrix}\t, \qquad
 x^U=S, \qquad
 n_I = n_\lambda := m,\ n_U := 1,\\
F_0(\cdot) \equiv 0,\qquad
F_+(\lambda) := \lambda e_1,\qquad F_-(\lambda) := \lambda,\\
V^I := \begin{pmatrix}
-(\nu_1+d_1) & 0 & \cdots &\cdots & 0 \\
\nu_1 & -(\nu_2+d_2)  & 0 &  \cdots & 0 \\
0 &  \nu_2 & \ddots &  &  \vdots\\
\vdots & \vdots &  & -(\nu_{m-1}+d_{m-1}) & 0\\
0 &  0  & \dots  &  \nu_{m-1} &  -d_m
\end{pmatrix},\qquad 
V^U
:= \begin{pmatrix} b \bfo_m\t & 0 \end{pmatrix},\\
\lambda(x):=\sum_{k=1}^{m-1}\dps\frac{\beta_{k} I_k}{N}.
\end{gather}
\end{subequations}
The matrix $V^I$ is a nonsingular Metzler matrix.
Identities \eqref{eq18a} and \eqref{eq18d} are fulfilled, and \eqref{eq18b} as well when $d_1=\dots=d_m=b$ (i.e.~when all mortality rates are identical and equal to the natality rate).
Alternatively, \eqref{eq18c} is fulfilled when $d_i \geq b, i=1,\dots, m$ (every mortality rate at most equal to the natality rate).

\comment{
Hence, we have 
\[
\cF^I(x,\lambda)= \lambda x^U e_1,\qquad
\cV^I(x,\lambda)=-V^Ix^I,
\]
where 
\begin{gather*}
\cV^I_{-}(x)= \diag\{ \nu_1+d_1,\dots,\nu_{m-1}+d_{m-1}, d_m \}x^I,\\
	\cV^I_{+}(x)
	= \begin{pmatrix}
0& 0 & \cdots &\cdots & 0 \\
\nu_1 & 0  & 0 &  \cdots & 0 \\
0 &  \nu_2 & \ddots &  &  \vdots\\
\vdots & \vdots &  & 0 & 0\\
0 &  0  & \dots  & \nu_{m-1} &  0
\end{pmatrix}
x^I
\end{gather*}
}

\subsection{Multistrain model}
\label{se54}

The fourth example from~\cite{Driessche:2002aa} is a multistrain model adapted from~\cite{Castillo-Chavez:1997aa,Feng:1997aa}.
The model has a unique susceptible compartment, but two infectious compartments corresponding to the two infectious agents, with possible super-infections by strain one of individuals already infected by strain two, giving rise to a new infection in compartment $I_1$.
The model is the following 3-dimensional model, where the total population is $N:=S+I_1+I_2$.
\begin{equation*}
\left \lbrace
\begin{array}{lll}
\dot{I_1}
&=&
\dps \beta_1\frac{I_1}{N} S-(b+\gamma_1)I_1+\nu \frac{I_1}{N} I_2,\\
\dot{I_2}
&= &
\dps \beta_2 \frac{I_2}{N} S-(b+\gamma_2)I_2-\nu \frac{I_1}{N} I_2,\\
\dot{S}
&= &
\dps bN -bS+\gamma_1I_1+\gamma_2 I_2- \frac{\beta_1I_1+\beta_2I_2}{N}S.
\end{array} \right.
\end{equation*}

The setting in \eqref{eq41} is
recovered by putting
\begin{subequations}
\begin{gather}
x^I:=\begin{pmatrix} I_1 \\ I_2  \end{pmatrix}, \qquad
 x^U=S,\qquad
 n_I = n_\lambda := 2,\ n_U :=1,\qquad
\lambda(x) = \begin{pmatrix} \lambda_1(x) \\ \lambda_2(x) \end{pmatrix},\\
F_0(\lambda) := \frac{\nu}{\beta_1} \lambda_1\begin{pmatrix}
0 & 1 \\ 0 & -1 
\end{pmatrix},\qquad
F_+(\lambda) := \lambda_1 \begin{pmatrix}
	1 \\  0  \end{pmatrix}
	+ \lambda_2  \begin{pmatrix}
	0 \\ 1 \end{pmatrix},\qquad
F_-(\lambda) := \lambda_1 +\lambda_2,\\
	V^I :=
	\begin{pmatrix}
	-(b+\gamma_1) & 0  \\ 0 & -(b+\gamma_2) \end{pmatrix},
	\qquad
	V^U
	:=\begin{pmatrix}
	b+\gamma_1 & b+\gamma_2  & 0  \end{pmatrix},\\
\lambda_{1}(x)=\beta_1 \frac{I_1}{S+I_1+I_2},\qquad \lambda_{2}(x)= \beta_2 \frac{I_2}{S+I_1+I_2}.
\end{gather}
\end{subequations}

As before, $V^I$ is a nonsingular Metzler matrix and \eqref{eq18a}, \eqref{eq18b} and \eqref{eq18d} are fulfilled.
Notice that here the functions $F^I_+$ and $F^I_-$ that appear in the decomposition \eqref{eq17} are {\em not} zero, so that $F_0$ is not identically zero.
This is due to the fact that infections may occur for already infected individuals.

\comment{
Hence, we have
\begin{gather*}
\cF^I(x,\lambda)= \lambda_1 \begin{pmatrix}
	1 & 0  \\ 0 &  0  \end{pmatrix}x^U
	+ \lambda_2 \begin{pmatrix}
	0 & 0  \\ 0 & 1 \end{pmatrix}x^U
+\frac{\nu}{\beta_1} \lambda_1\begin{pmatrix}
0 & 1 \\ 0 & -1 
\end{pmatrix} x^I,\\
\cV^I(x,\lambda)=-V^Ix^I,
\end{gather*}
where \[
\cV^I_{-}(x,\lambda) = -V^Ix^I,\quad
	\cV^I_{+}(x,\lambda)= 0_2.
	\]
}

\subsection{Vector-host model}
\label{se55}

The last example from~\cite{Driessche:2002aa} is a simplified version of a vector–host model borrowed from~\cite{Feng:1997aa}.
The latter couples a simple SIS model for the hosts with an SI model for the vectors.
It has four compartments: infected hosts ($I$) and vectors ($V$), and susceptible hosts ($S$) and vectors ($M$).
Inter-species transmission occurs in both directions through infective contacts between vectors and hosts.
The frequency of the (infectious) contacts is proportional to the number of vectors, with different rates $\beta_s$ and $\beta_m$, depending on the direction.
The equations are as follows.
\begin{equation}
\label{eq38}
\left \lbrace\begin{array}{lll}
\dot{I}&=& \dps \beta_{s} \frac{S}{S+I}V-(b+\gamma)I,\\
\dot{V}&= & \dps \beta_{m} \frac{I}{S+I}M -cV,\\
\dot{S}&= & \dps b(S+I)-bS+\gamma I-\beta_{s} \frac{S}{S+I}V,\\
\dot{M}&= & \dps c (M+V) -cM-\beta_{m} \frac{I}{S+I}M,
\end{array} \right.
\end{equation}
Model \eqref{eq38} is 4-dimensional and rewrites
\begin{subequations}
	\label{eq28}
	\begin{gather}
	\label{eq28e}
	x^I:=\begin{pmatrix} I \\ V  \end{pmatrix}, \qquad
	x^U:=\begin{pmatrix} S \\ M  \end{pmatrix},\qquad
	n_I = n_U = n_\lambda := 2,\qquad
	\lambda := \begin{pmatrix} \lambda_1 \\ \lambda_2 \end{pmatrix},\\
	\label{eq28a}
F_0(\cdot) \equiv 0_{2\times 2},\qquad
F_+(\lambda) = F_-(\lambda)
:= \lambda_1 \begin{pmatrix}
	1 & 0  \\ 0 &  0  \end{pmatrix}
	+ \lambda_2 \begin{pmatrix}
	0 & 0  \\ 0 &  1  \end{pmatrix},\qquad\\
	\label{eq28c}
	 V^I :=
	\begin{pmatrix}
	-(b+\gamma) & 0  \\ 0 & -c \end{pmatrix},
	\qquad
	V^U
	:=\begin{pmatrix}
	b+\gamma & 0 & 0  & 0 \\ 0 & c & 0 & 0 \end{pmatrix} ,\\
	\label{eq28d}
\lambda_{1}(x)=\beta_s \frac{V}{S+I},\qquad \lambda_{2}(x)=\beta_m \frac{I}{S+I}.
	\end{gather}
\end{subequations}

For this example too, one shows easily that $V^I$ is a nonsingular Metzler matrix and \eqref{eq18a}, \eqref{eq18b} and \eqref{eq18d} are fulfilled.

\section{Models of epidemics in commuting populations}
\label{se3}

For clarity, we first recall and summarize here the proposed class of models (Section \ref{se31}).
The disease-free states and in particular the disease-free equilibria are studied in Section \ref{se315}.
In Section \ref{se32}, we state a list of adequate assumptions, which basically allows to retrieve the framework put in place by van den Driessche and Watmough \cite{Driessche:2002aa}.
It is then shown in Section \ref{se34} how these assumptions allow to find the basic reproduction number in the case of a single population (i.e.~for $|\cP|=1$), which is the case treated in the latter paper.

\subsection{A class of epidemiological models in commuting populations}
\label{se31}

For sake of clarity, we repeat and summarize here in a compact way the ingredients of the proposed epidemiological models in commuting population.\\

Assume be given matrices
\begin{subequations}
\label{eq300}
\begin{equation}
\label{eq300a}
V^I \in \Rset^{n_I\times n_I},\qquad 
V^U \in \Rset^{n_U\times n},
\end{equation}
linear maps
\begin{equation}
\label{eq300b}
F_0 (\lambda) \ :\ \Rset^{n_\lambda} \to \Rset^{n_I\times n_I},\quad F_+ (\lambda) \ :\ \Rset^{n_\lambda} \to \Rset^{n_I\times n_U}, \quad F_- (\lambda) \ :\ \Rset^{n_\lambda} \to \Rset^{n_U\times n_U},
\end{equation}
and a function $\lambda\ :\ \Rset_+^n \to \Rset_+^{n_\lambda}$, define 
\begin{equation}
\label{eq300c}
f(x,\lambda) :=
\begin{pmatrix} (F_0 (\lambda) + V^I) x^I + F_+ (\lambda) x^U\\ V^U x - F_- (\lambda) x^U \end{pmatrix}.
\end{equation}
\end{subequations}



When needed, we will use the following decomposition, deduced from~\eqref{eq300} and quite similar to the one in~\cite{Driessche:2002aa}:
\begin{subequations}
\label{eq177}
\begin{equation}
\label{eq177a}
\cF^I(x,\lambda) := F^I_+ (\lambda) x^I + F_+ (\lambda) x^U,\quad
\cV^I(x,\lambda) := V^I x^I -  F^I_- (\lambda) x^I,\quad
\cV^U(x,\lambda) := V^U x - F_- (\lambda) x^U,
\end{equation}
where {\em by definition}
\begin{equation}
\label{eq177b}
F^I_- (\lambda) \text{ is the main diagonal of } - F_0(\lambda),\qquad
F^I_+ (\lambda) := F_0(\lambda)+F^I_- (\lambda).
\end{equation}
\end{subequations}
The diagonal of the matrix $F^I_+ (\lambda)$ is thus zero by construction.

\subsection{Disease-free states and disease-free equilibrium points}
\label{se315}

First, as in \cite{Driessche:2002aa}, we define $\Xbf_s$ to be the set of all {\em disease-free states}:
\begin{equation}
\label{eq40}
\Xbf_s := \left\{
\begin{pmatrix} 0_{n_I} \\ x^U \end{pmatrix}\ :\ x^U\in\Rset_+^{n_U}
\right\}.
\end{equation}

Another important subset of the state space is the set of disease-free equilibria (DFE), a subset of the previous one.
Here, contrary to \cite{Driessche:2002aa}, we put in this category any equilibrium point, {\em whatever its stability properties.}
In absence of infection, one has $\lambda=0_{n_\lambda}$, and the set $\Xbf_\eq\subset\Xbf_s$ of the DFEs of system \eqref{eq2} 
is given by
\begin{equation}
\label{eq20}
\Xbf_\eq := \{x\in\Xbf_s\ :\ f(x,0_{n_\lambda}) = 0_n\}.
\end{equation}

Due to the assumed property of linearity of $f$ with respect to $x$, the set $\{x\in\Rset^n\ :\ f(x,0_{n_\lambda}) = 0_n\}$ is a vectorial space, and the set $\Xbf_\eq$ is therefore a {\em cone}.
Assume $\Xbf_\eq$ is not empty, so that there exists (at least) one DFE $y^*>0$ of \eqref{eq0}.
Then, for any given total population number of $N>0$ (non-infected) individuals, there exists a DFE $x^*>0$ such that $\bfo_n\t x^* = N$, namely
\begin{equation*}
x^* := \frac{N}{\bfo_n\t y^*}y^*,
\end{equation*}
or equivalently, considering only the non-infected compartments,
\[
x^{U*} := \frac{N}{\bfo_{n_U}\t y^{U*}}y^{U*}.
\]
As a matter of fact, for $x^*$ defined as before, $f(x^*,0_{n_\lambda}) = \frac{N}{\bfo_n\t y^*}f(y^*,0_{n_\lambda}) = 0_n$ by linearity, so that $x^*$ is a DFE; and $\bfo_n\t x^* := \frac{N}{\bfo_n\t y^*}\bfo_n\t y^* = N$.

Let us now characterize in a more precise manner the set $\Xbf_\eq$ and the condition $\Xbf_\eq\neq\emptyset$. 
For any $x\in\Xbf_s$, one has $x^I=0_{n_I}$ and $\lambda(x) = 0_{n_\lambda}$, so that
\[
f(x,0_{n_\lambda}) := \begin{pmatrix}
(F_0 (0_{n_\lambda}) + V^I) x^I + F_+ (0_{n_\lambda}) x^U\\ V^U x - F_- (0_{n_\lambda}) x^U \end{pmatrix}
= \begin{pmatrix} 0_{n_I} \\ V^U x \end{pmatrix}.
\]
Therefore, the cone $\Xbf_\eq$ is characterized by the following property:
\begin{equation}
\label{eq42}
\Xbf_\eq = \Xbf_s \cap \ker V^U.
\end{equation}


\subsection{Assumptions on the models}
\label{se32}

We make the following assumptions on the objects defined in Section~\ref{se31}.

\begin{assumption}[On the operators $F_0, F_\pm$] 
\label{as1}
The operators $F_0 (\cdot), F_+ (\cdot), F_- (\cdot)$ defined in \eqref{eq300b} are linear.
Moreover, for any $\lambda\in\Rset^{n_\lambda}_+$, the matrix $F_0 (\lambda)$ is a Metzler matrix and $F_+ (\lambda), F_- (\lambda)$ are nonnegative, with $F_- (\lambda)$ diagonal.
Last,
\begin{equation}
\label{eq189}
\bfo_{n_I}\t F_0 (\cdot) = 0_{n_I}\t,\qquad
\bfo_{n_I}\t F_+ (\cdot) = \bfo_{n_U}\t F_-(\cdot).
\end{equation}
\end{assumption}

\begin{assumption}[On the matrices $V^I, V^U$]
\label{as3}
For the matrices $V^I, V^U$ defined in \eqref{eq300a},
\begin{itemize}
\item
$V^U \begin{pmatrix} 0_{n_I \times n_U} \\ I_{n_U} \end{pmatrix}$ is a Metzler matrix of which $0$ is a simple eigenvalue associated to the left-eigenvector $\bfo_{n_U}\t$;
\item $V^I$ is a Metzler matrix,
$V^U \begin{pmatrix} I_{n_I} \\ 0_{n_U\times n_I} \end{pmatrix} > 0_{n_U\times n_I}$ and
\begin{equation}
\label{eq188}
-\bfo_{n_I}\t V^I \geq \bfo_{n_U}\t V^U  \begin{pmatrix}
I_{n_I} \\ 0_{n_U \times n_I}
\end{pmatrix} \succ 0_{1\times n_I}.
\end{equation}
\end{itemize}
\end{assumption}

\begin{assumption}[On the map $\lambda$]
\label{as4}
The map $\lambda$ takes on nonnegative values, and is zero on $\Xbf_s$.
\end{assumption}

Explanations of the meaning of Assumptions~\ref{as1} and~\ref{as3} have been given in Section~\ref{se50}, and Assumption~\ref{as4} requires no particular comment.

One may check without major difficulties that the examples displayed in Section~\ref{se2} fit Assumptions~\ref{as1} to~\ref{as4}. 
(See however an important nuance in Remark~\ref{re40} on the simplicity of the eigenvalue $0$ (Assumption~\ref{as3}), in relation with the dimension of the set of disease-free equilibrium points.)

Notice that~\eqref{eq189} repeats~\eqref{eq18a} without modification.
Also, the link between~\eqref{eq188} and \eqref{eq18b}/\eqref{eq18c} is evident, while \eqref{eq18d} is contained in the first point of Assumption~\ref{as3}.
Before going further, we enlighten the relation of the previous assumptions with the framework invented by van den Driessche and Watmough~\cite{Driessche:2002aa}.
\begin{proposition}
\label{pr1}
Assume $f$ defined by~\eqref{eq300} fulfils Assumptions~\ref{as1},~\ref{as3},~\ref{as4}.
Then the same is true for Assumptions {\bf (A.1)}, {\bf (A.2)}, {\bf (A.3)}, {\bf (A.4)} in {\rm \cite{Driessche:2002aa}}.
\end{proposition}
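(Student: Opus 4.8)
The plan is to recall the four conditions (A.1)--(A.4) from van den Driessche and Watmough and to verify them one at a time, using the special structure~\eqref{eq300c}--\eqref{eq177} of $f$ together with Assumptions~\ref{as1},~\ref{as3},~\ref{as4}. Write $f = \cF - \cV$ in their notation, where, comparing with~\eqref{eq177a} and remembering the opposite sign convention announced in the footnote, $\cF(x) := \begin{pmatrix} \cF^I(x,\lambda(x)) \\ 0_{n_U} \end{pmatrix}$ collects the new-infection terms and $\cV(x) := -\begin{pmatrix} \cV^I(x,\lambda(x)) \\ \cV^U(x,\lambda(x)) \end{pmatrix}$ the remaining transfers. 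The decomposition~\eqref{eq177b} guarantees that $\cF$ is supported on the infected block and has nonnegative entries because $F^I_+(\lambda)$, $F_+(\lambda)$ are nonnegative for $\lambda \geq 0$ (the diagonal of $F^I_+$ being zero does not spoil nonnegativity).

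First I would check (A.1), which requires $\cF(x) \geq 0$ and $\cV(x)$ componentwise such that $\cV_i(x) \geq 0$ whenever $x_i = 0$, for all $x$ in the nonnegative orthant. Nonnegativity of $\cF$ is immediate from Assumptions~\ref{as1} and~\ref{as4}. For $\cV$, note that by~\eqref{eq177a}, $-\cV^I(x,\lambda) = V^I x^I - F^I_-(\lambda) x^I = (V^I + F_0(\lambda))x^I - F^I_+(\lambda)x^I$; since $V^I$ and $F_0(\lambda)$ are Metzler and $F^I_+(\lambda)$ has zero diagonal, the full matrix acting on $x^I$ has nonnegative off-diagonal entries, so on the face $x^I_i = 0$ the $i$-th component of $-\cV^I$ is nonnegative, i.e.\ $\cV^I_i \leq 0$ there, which is the sign (A.1) asks for in their convention; similarly $-\cV^U(x,\lambda) = V^U x - F_-(\lambda)x^U$, and on the face $x^U_i=0$ one uses that the relevant block $V^U\begin{pmatrix}0_{n_I\times n_U}\\ I_{n_U}\end{pmatrix}$ is Metzler (first bullet of Assumption~\ref{as3}), that $V^U\begin{pmatrix}I_{n_I}\\ 0\end{pmatrix} > 0$, and that $F_-(\lambda)$ is diagonal so its $i$-th row contributes only through $x^U_i=0$. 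Then (A.2), $\cF(x) = 0$ for $x \in \Xbf_s$: on a disease-free state $x^I = 0_{n_I}$ and $\lambda(x) = 0_{n_\lambda}$ by Assumption~\ref{as4}, so $\cF^I(x,\lambda(x)) = F^I_+(0)\cdot 0 + F_+(0)\cdot x^U = 0$ by linearity of $F_+$, giving $\cF(x)=0$. Condition (A.3), $\cF_i(x) = 0$ for $i > n_I$, holds by construction since $\cF$ is supported on the infected block.

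The substantive step is (A.4): if $x \in \Xbf_s$ then the $x^I$-subsystem, i.e.\ the dynamics restricted to the invariant disease-free set, must have all solutions remaining in $\Xbf_s$, which in their formulation means the matrix of $\cV$ restricted to the non-infected compartments (the Jacobian of $-\cV^U$ at a DFE, acting on $x^U$) has eigenvalues with negative real part --- actually their (A.4) as usually stated requires that the disease-free subsystem $\dot x^U = -\cV^U|_{\Xbf_s}$ has $x^U$-dynamics for which solutions stay nonnegative and the restricted linearization is stable (modulo the neutral direction). Here $-\cV^U$ restricted to $\Xbf_s$ is $x^U \mapsto V^U\begin{pmatrix}0_{n_I\times n_U}\\ I_{n_U}\end{pmatrix} x^U$, which by the first bullet of Assumption~\ref{as3} is Metzler with $0$ a simple eigenvalue and left eigenvector $\bfo_{n_U}\t$; by~\eqref{eq18d}/mass balance the column sums vanish, so this Metzler matrix is the generator of a conservative (sub)stochastic-type flow, hence its spectrum lies in $\{\operatorname{Re}\leq 0\}$ with $0$ simple, which is exactly the stability statement (A.4) demands for the disease-free subsystem; nonnegativity preservation follows again from the Metzler property. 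I would also record that $\Xbf_{\eq} = \Xbf_s \cap \ker V^U$ by~\eqref{eq42} is nonempty precisely along the left null direction just identified, so that a DFE in the sense of~\cite{Driessche:2002aa} exists.

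The main obstacle, and the place I would be most careful, is matching the precise wording of (A.4) (and the exact domain on which it is required) to the Metzler/column-sum structure available here --- in particular being careful about the simple zero eigenvalue, which corresponds to the freedom in the total population size and is not a genuine instability, and making sure the sign conventions (their $\cV$ versus our $-\cV^I,-\cV^U$, and their $f=\cF-\cV$ versus our footnoted choice) are tracked consistently throughout; the other three conditions are essentially bookkeeping given Assumptions~\ref{as1}--\ref{as4}.
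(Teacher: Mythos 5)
Your individual verifications rest on the same ingredients as the paper's proof --- the Metzler structure of $V^I$, $F_0(\lambda)$ and $V^U\begin{pmatrix}0_{n_I\times n_U}\\ I_{n_U}\end{pmatrix}$, the diagonality of $F^I_-(\lambda)$ and $F_-(\lambda)$, the positivity of $V^U\begin{pmatrix}I_{n_I}\\ 0_{n_U\times n_I}\end{pmatrix}$, the linearity of $F_\pm$, and the vanishing of $\lambda$ on $\Xbf_s$ --- but you have attached them to the wrong conditions, and the step you call substantive is not part of the statement. In the numbering of~\cite{Driessche:2002aa}, {\bf (A.1)} is only the nonnegativity of $\cF_i,\cV^{\pm}_i$ on the orthant; the face condition ``$x_i=0\Rightarrow\cV^-_i=0$'' is {\bf (A.2)}; ``$\cF=0$ on $\Xbf_s$'' together with ``$\cV^+_i=0$ on $\Xbf_s$ for the infected indices'' constitutes {\bf (A.4)}; and the spectral condition on the disease-free subsystem that you develop at length under the label {\bf (A.4)} is in fact {\bf (A.5)}. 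The proposition deliberately excludes {\bf (A.5)}: as the paper explains immediately after its statement, Assumption~\ref{as3} and the population conservation encoded in~\eqref{eq188} force a zero eigenvalue of the disease-free Jacobian, so {\bf (A.5)} in its strict form (all eigenvalues with negative real part) \emph{fails} here --- which is precisely why Theorem~\ref{th1} only concludes marginal stability in general. Your main effort therefore goes into a claim that is neither asserted by the proposition nor true as van den Driessche and Watmough state it, while the second half of the genuine {\bf (A.4)} (that $\cV^I(x,\lambda(x))\leq 0_{n_I}$ on $\Xbf_s$, trivial since $x^I=0_{n_I}$ there) is never recorded.

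There is also a sign muddle in the face-condition argument. The matrix you exhibit, $V^I+F_0(\lambda)-F^I_+(\lambda)=V^I-F^I_-(\lambda)$, multiplies $x^I$ to give $+\cV^I(x,\lambda)$ in the convention of~\eqref{eq177}, not $-\cV^I$; its nonnegative off-diagonal entries yield $\cV^I_i\geq 0$ on the face $x_i=0$ (equivalently, in the opposite-sign convention of~\cite{Driessche:2002aa}, no outflow from an empty compartment), which is the reverse of the inequality ``$\cV^I_i\leq 0$'' that you write down as the conclusion. The underlying facts are all present in your text and coincide with the paper's argument, but as written the proof neither matches the four conditions actually named in the proposition nor keeps the sign conventions consistent; it needs to be reorganized around the correct statements of {\bf (A.1)}--{\bf (A.4)} before it establishes the result.
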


As will be seen below, the present setting does not contain an hypothesis on asymptotic stability as strong as~\cite[Assumption~{\bf (A.5)}]{Driessche:2002aa}.
Indeed, Assumption~\ref{as4} allows for a unique zero eigenvalue in the spectrum of the Jacobian matrix of the system in disease-free evolution.
This comes from the fact that, due to the second formula in~\eqref{eq188}, the population is constant along any disease-free evolution, see the discussion in the beginning of Section~\ref{se2}.
(Recall that extensive quantities are considered here; see the related discussion in Section \ref{se0}.)
This insensitivity to the total population level generates a zero eigenvalue in the Jacobian spectrum at any equilibrium point.


\begin{proof}[Proof of Proposition~\rm\ref{pr1}]
\mbox{}

\noindent $\bullet$ We first retrieve from~\eqref{eq300} the setting of~\cite{Driessche:2002aa}.
For this, let $\cF^I(x,\lambda)$, $\cV^I(x,\lambda)$, $\cV^U(x,\lambda)$ be deduced from model~\eqref{eq300} thanks to~\eqref{eq177}.
From Assumption~\ref{as3}, one deduces that for any $\lambda\in\Rset_+^{n_\lambda}$, $F_0(\lambda)$ is a Metzler matrix and its off-diagonal elements are nonnegative.
By construction, the latter are also the off-diagonal elements of $F^I_+ (\lambda)$, which possesses a zero diagonal.
Thus $F^I_+ (\lambda)\geq 0_{n_I\times n_I}$.
One the other hand, $-F^I_- (\lambda)$ is a diagonal matrix that bears the diagonal elements of $F_0(\lambda)$.
Due to Assumption~\ref{as1}, one has
\[
0_{n_I}\t = \bfo_{n_I}\t F_0 (\lambda) = \bfo_{n_I}\t (F^I_+ (\lambda)-F^I_- (\lambda)),
\]
so that $\bfo_{n_I}\t F^I_- (\lambda) = \bfo_{n_I}\t F^I_+ (\lambda) \geq 0$, and the diagonal matrix $F^I_- (\lambda)$ is nonnegative.
Now, the decomposition of $f$ according to
\[
f(x,\lambda) = \begin{pmatrix} \cF^I(x,\lambda) + \cV^I(x,\lambda) \\ \cV^U(x,\lambda) \end{pmatrix}
\]
is identical to the one given in~\cite{Driessche:2002aa}.
We will now establish that it fulfils the Assumptions~{\bf (A.1)} to~{\bf (A.4)} of the present paper.

\noindent $\bullet$
Due to Assumption~\ref{as1}, the function $\cF^I(x,\lambda(x))$ in the decomposition \eqref{eq177}, which gathers the infection terms, takes on nonnegative values.
This is essentially the content of~\cite[Assumption~{\bf (A.1)}]{Driessche:2002aa}.

Assumption~{\bf (A.3)} in~\cite{Driessche:2002aa} states that no infection term feeds the non-infected compartments.
This property is already fulfilled by the choice of the structure of model~\eqref{eq300}.

Translated in the present setting, Assumption~{\bf (A.2)} in~\cite{Driessche:2002aa} states that whenever $x_i=0$ for some $i=1,\dots, n$, then the corresponding component $\cV^I_i(x,\lambda(x))\geq 0$.
First let $i\in\{1,\dots, n\}$ be the index of an infected compartment, that is $i\in\{1,\dots, n_I\}$. 
On the one hand, due to the fact that $V^I$ is a Metzler matrix by Assumption~\ref{as3}, and that $x\geq 0_n$, one has $(V^Ix^I)_i \geq 0$ whenever $x_i=0$.
On the other hand, the operator $F^I_- (\lambda)$ being diagonal, the $i$-th component of $F^I_- (\lambda) x^I$ is zero when $x_i=0$.
Therefore, as $\cV^I(x,\lambda) := V^I x^I -  F^I_- (\lambda) x^I$, one obtains that $\cV^I_i(x,\lambda(x))\geq 0$ whenever $x_i=0$.

The same argument allows to treat the case of a non-infected compartment.
Let $i\in\{1,\dots, n\}$ be the index of such a compartment, that is $i\in\{n_I+1,\dots, n\}$, and $x\in\Rset_+^n$ such that $x_i=0$.
One has
\begin{equation}
\label{eq333}
V^U x
= V^U \left(
\begin{pmatrix} I_{n_I} \\ 0_{n_U\times n_I} \end{pmatrix} x^I
+ \begin{pmatrix} 0_{n_I \times n_U} \\ I_{n_U} \end{pmatrix} x^U
\right)
= V^U \begin{pmatrix} I_{n_I} \\ 0_{n_U\times n_I} \end{pmatrix} x^I
+ V^U \begin{pmatrix} 0_{n_I \times n_U} \\ I_{n_U} \end{pmatrix} x^U.
\end{equation}
The $i$-th component of the first term is nonnegative due to the fact that $x^I\geq 0_{n_I}$ and $V^U \begin{pmatrix} I_{n_I} \\ 0_{n_U\times n_I} \end{pmatrix} > 0_{n_U \times n_I}$ (Assumption~\ref{as3});
while the $i$-th component of the second term is also nonnegative, due to the fact that $x_i=0$ and $x^U\geq 0_{n_U}$, and that, by Assumption~\ref{as3}, $V^U \begin{pmatrix} 0_{n_I \times n_U} \\ I_{n_U} \end{pmatrix}$ is a Metzler matrix.
Overall, $(V^Ux)_i \geq 0$.
On the other hand, the operator $F_- (\lambda)$ is diagonal (Assumption~\ref{as1}), so that the $i$-th component of $F_- (\lambda) x^U$ is zero when $x_i=0$.
Therefore, $\cV^U_i(x,\lambda)  \geq 0$ 
whenever $x_i=0$.
This achieves the proof of Assumption~{\bf (A.2)} in~\cite{Driessche:2002aa}. 

Last, Assumption~{\bf (A.4)} in~\cite{Driessche:2002aa} states that at any disease-free state $x\in \Xbf_s$, the infectious term $\cF^I(x,\lambda(x)) = F^I_+ (\lambda) x^I + F_+ (\lambda) x^U$ is equal to $0_{n_I}$ and that $\cV^I(x,\lambda(x)) = V^I x^I -  F^I_- (\lambda) x^I \leq 0_{n_I}$.
Let $x\in \Xbf_s$, then $x^I=0_{n_I}$ and all terms $F^I_+ (\lambda)x^I$, $F^I_- (\lambda)x^I$ and $V^I x^I$ are thus $0_{n_I}$.
On the other hand, $\lambda(x) = 0_{n_\lambda}$ due to Assumption~\ref{as4}, so by linearity the matrix $F_+(\lambda)$ is null, and thus $F_+ (\lambda) x^U = 0_{n_I}$, so that $\cV^I(x,\lambda(x)) = 0_{n_I}$ as well.
This achieves the proof of Proposition~\ref{pr1}.
\end{proof}


\subsection{Basic reproduction number for a single-population model}
\label{se34}

The following result shows essentially that the previous setting contains the stationary, single-population, case treated in~\cite{Driessche:2002aa}.
It gives the value of the basic reproduction number of model \eqref{eq0} with a dynamic defined by $f$ in \eqref{eq300}.
The proof consists mainly in adapting the framework of \cite[Theorem 2]{Driessche:2002aa}.

For any sufficiently regular function $H(x)\ :\ \Rset^n \to \Rset^r$, we decompose the Jacobian matrix of $H$ in $\Rset^{r\times n}$ as
\begin{subequations}
\label{eq195}
\begin{equation}
\frac{\partial H}{\partial x}(x)
= \begin{pmatrix}
\displaystyle \frac{\partial H}{\partial x^I}(x)  & \displaystyle \frac{\partial H}{\partial x^U} (x)
\end{pmatrix},\qquad
\frac{\partial H}{\partial x^I}(x) \in\Rset^{r\times n_I},\ \frac{\partial H}{\partial x^U}(x) \in\Rset^{r\times n_U}.
\end{equation}
For simplicity, this will be written in the sequel with the more compact notation
\begin{equation}
\frac{\partial H}{\partial x}(x)
= \partial_x H(x)
= \begin{pmatrix}
\displaystyle \partial_{x^I} H(x)  & \displaystyle \partial_{x^U} H(x)
\end{pmatrix}.
\end{equation}
\end{subequations}

\begin{theorem}[Basic reproduction number and stability for single-subpopulation epidemic models]
\label{th1}

Assume $f$ 
and $\lambda$ fulfil Assumptions \ref{as1}, \ref{as3}, \ref{as4}.
Then the following properties hold.


\begin{itemize}
\item
System \eqref{eq0}-\eqref{eq300} admits nonzero disease-free equilibrium points.

\item
For any nonzero disease-free equilibrium point $x^*\in\Xbf_{\eq}$ of system \eqref{eq0}-\eqref{eq300}, let
\begin{subequations}
\label{eq355}
\begin{equation}
\label{eq355a}
\cR_0 := \rho \left(F
 V^{-1}
\right),
\end{equation}
where $F, V\in\Rset^{n_I\times n_I}$ are given by:
\begin{equation}
\label{eq355b}
F := \left.
\partial_{x^I} \left(
F_+(\lambda(x))x^{U*}
\right)
\right|_{x=x^*},
\qquad
V := V^I.
\end{equation}
\end{subequations}
The following assertions are true:
\begin{itemize}
\item
if $\cR_0>1$, then $x^*$ is unstable;
\item
if $\cR_0<1$, then 
$x^*$ is marginally stable.
Moreover, if the first identity in \eqref{eq188} holds {\em with an equality}, then $x^*$ is locally asymptotically stable.

\end{itemize}
\end{itemize}
\end{theorem}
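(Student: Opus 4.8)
The plan is to adapt the proof of \cite[Theorem~2]{Driessche:2002aa}. By Proposition~\ref{pr1} the present hypotheses already imply the van den Driessche--Watmough framework, so the real work is to make the relevant Jacobian explicit within the structure~\eqref{eq300c}, to invoke the classical M-matrix/spectral-radius correspondence on the ``infected'' block, and to cope with the extra, simple, zero eigenvalue created by the conservation of the total population along disease-free trajectories (this is why the stability statement is weaker than \cite[Assumption (A.5)]{Driessche:2002aa}). Existence of a nonzero disease-free equilibrium comes first: by~\eqref{eq42}, and since $V^U x = V^U\begin{pmatrix} 0_{n_I\times n_U} \\ I_{n_U}\end{pmatrix}x^U$ whenever $x\in\Xbf_s$, the set $\Xbf_\eq$ consists of the $\begin{pmatrix} 0_{n_I} \\ x^U\end{pmatrix}$ with $x^U\geq 0_{n_U}$ lying in $\ker A$, where $A := V^U\begin{pmatrix} 0_{n_I\times n_U} \\ I_{n_U}\end{pmatrix}$. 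By Assumption~\ref{as3} this $A$ is Metzler with $0$ a simple eigenvalue and positive left-eigenvector $\bfo_{n_U}\t$, so Perron--Frobenius theory for Metzler matrices gives $s(A)=0$ and a nonzero nonnegative right-eigenvector $w^U$, $Aw^U=0$; then $x^* := \begin{pmatrix} 0_{n_I} \\ w^U\end{pmatrix}$ is a nonzero DFE, and suitable scaling produces one of any prescribed total population.

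Next I would compute $J := \partial_x\bigl(f(x,\lambda(x))\bigr)\big|_{x=x^*}$ at an arbitrary nonzero DFE $x^*$. The facts that matter are $x^{I*}=0_{n_I}$, $\lambda(x^*)=0_{n_\lambda}$ (hence $F_0(\lambda(x^*))=0$, $F_\pm(\lambda(x^*))=0$ by linearity), and $\lambda\equiv 0_{n_\lambda}$ on the affine set $\Xbf_s=\{x^I=0_{n_I}\}$, which forces $\partial_{x^U}\lambda(x^*)=0$ and $\partial_{x^I}\lambda(x^*)\geq 0$ (one-sided derivatives of a nonnegative map that vanishes on $\Xbf_s$). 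Differentiating the composite $x\mapsto f(x,\lambda(x))$ and discarding the terms killed by $x^{I*}=0$ and $\lambda(x^*)=0$ then yields $\partial_{x^U}f^I(x^*)=0_{n_I\times n_U}$, $\partial_{x^I}f^I(x^*)=F+V^I$ with $F$ as in~\eqref{eq355b}, and $\partial_{x^U}f^U(x^*)=A$. Hence $J$ is block lower-triangular,
\[
J=\begin{pmatrix} F+V^I & 0_{n_I\times n_U} \\ \ast & A\end{pmatrix},\qquad \mathrm{spec}(J)=\mathrm{spec}(F+V^I)\cup\mathrm{spec}(A).
\]
The same computation gives $F\geq 0_{n_I\times n_I}$ (nonnegativity of $F_+(\cdot)$, of $x^{U*}$, and of $\partial_{x^I}\lambda(x^*)$). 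By~\eqref{eq188}, $\bfo_{n_I}\t V^I\prec 0_{1\times n_I}$, so the Metzler matrix $V^I$ has strictly negative column sums; thus $s(V^I)<0$, $-V^I$ is a nonsingular M-matrix, and $V:=V^I$ is invertible. The comparison result used in the proof of \cite[Theorem~2]{Driessche:2002aa}, applied to $F\geq 0$ and the M-matrix $-V^I$, then gives that $s(F+V^I)$ and $\rho\bigl(F(-V^I)^{-1}\bigr)-1=\rho(FV^{-1})-1=\cR_0-1$ share their sign (using $\rho(-M)=\rho(M)$). Finally $A$ is Metzler with $s(A)=0$, and since for a Metzler matrix any eigenvalue whose real part equals the stability modulus is in fact real and equal to it, the simplicity of the eigenvalue $0$ of $A$ forces $\mathrm{spec}(A)\subset\{0\}\cup\{z\in\Cset : \mathrm{Re}\,z<0\}$ with $0$ simple.

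Now the conclusions. If $\cR_0>1$, then $s(F+V^I)>0$, so $J$ has an eigenvalue of positive real part and $x^*$ is unstable. If $\cR_0<1$, then $s(F+V^I)<0$, so $\mathrm{spec}(J)\subset\{0\}\cup\{\mathrm{Re}\,z<0\}$ with the eigenvalue $0$ simple; moreover $F+V^I$ is then invertible, so $\ker J=\mathrm{span}\bigl\{\begin{pmatrix} 0_{n_I} \\ w^U\end{pmatrix}\bigr\}$, which by the above description of $\Xbf_\eq$ is tangent to the segment of disease-free equilibria through $x^*$. Taking that segment as a (smooth, locally invariant) centre manifold, on which the reduced dynamics is identically zero, the centre-manifold reduction principle shows $x^*$ is stable but, having equilibria arbitrarily close to it, not asymptotically stable -- i.e.\ marginally stable. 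If, in addition, the first identity in~\eqref{eq188} holds with an equality (the equality case of~\eqref{eq18b}), then the mass-balance computation preceding~\eqref{eq18c} gives $\bfo_n\t f(x,\lambda(x))\equiv 0$, so the affine hyperplane $\{\bfo_n\t x=\bfo_n\t x^*\}$ is invariant and transverse to $\ker J$ (because $\bfo_{n_U}\t w^U>0$); restricting $J$ to its tangent space removes precisely the zero eigenvalue, so the restricted spectrum lies in $\{\mathrm{Re}\,z<0\}$ and $x^*$ is asymptotically stable relative to that hyperplane.

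The main obstacle is twofold. The first, purely computational, point is the bookkeeping in differentiating the composite vector field at $x^*$: one must keep track of exactly which terms vanish because of $x^{I*}=0$, $\lambda(x^*)=0$ and $\partial_{x^U}\lambda(x^*)=0$, so as to obtain both the block-triangular structure and the precise identity $\partial_{x^I}\bigl(F_+(\lambda(x))x^{U*}\bigr)\big|_{x^*}=F$. The second, more structural, point is converting the spectral information on $J$ (a simple zero eigenvalue, the rest in the open left half-plane) into genuine nonlinear conclusions, which is done by identifying the critical direction with the tangent to the equilibrium segment (for marginal stability) and, in the conservative case, with the normal to the invariant hyperplane of fixed total population (for asymptotic stability in restriction). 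Everything else reduces to the standard M-matrix/spectral-radius lemma of \cite{Driessche:2002aa} and to elementary spectral facts about Metzler matrices.
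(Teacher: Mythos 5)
Your proposal is correct, and for the existence of a DFE, the computation of the block lower-triangular Jacobian with blocks $F+V^I$ and $A=V^U\begin{pmatrix}0_{n_I\times n_U}\\ I_{n_U}\end{pmatrix}$, the Hurwitz/M-matrix argument for $V^I$, the sign equivalence between $s(F+V^I)$ and $\cR_0-1$, and the instability conclusion when $\cR_0>1$, it coincides with the paper's proof. The one genuinely different step is the marginal stability when $\cR_0<1$: the paper proves Lyapunov stability by hand, combining the exponential decay of $\delta x^I$, the mass-balance identity (first point of Assumption~\ref{as3}) to bound $\bfo_{n_U}\t\delta x^U$, and the exponential decay of the remaining $n_U-1$ components, then choosing the initial ball of radius $\min\{\zeta;\varepsilon/5c;\dots\}$ so the trajectory never exits the $\varepsilon$-ball; you instead observe that the ray of disease-free equilibria is a locally invariant curve tangent to the one-dimensional centre eigenspace $\ker J=\mathrm{span}\{(0_{n_I},w^U)\}$, hence a centre manifold with identically zero reduced dynamics, and invoke the reduction principle. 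Your route is shorter and more conceptual, and it makes transparent why asymptotic stability must fail (nearby equilibria); the paper's route is elementary, self-contained, and extends more readily to the degenerate situations of Remark~\ref{re40} where $0$ has multiplicity $\geq 2$ and one must track several conserved quantities separately. For the conservative case your treatment is the same reduction to the invariant hyperplane of fixed total population, and your phrasing (asymptotic stability \emph{relative to that hyperplane}) is in fact the precise content of what both arguments establish, since trajectories with $\bfo_n\t x(0)\neq\bfo_n\t x^*$ converge to a neighbouring DFE rather than to $x^*$ itself; the paper states the conclusion without this qualification. One minor caveat: the reduction principle is usually stated at interior points, whereas $x^*$ lies on the boundary of $\Rset_+^n$, so you should say a word about extending $f$ to a neighbourhood in $\Rset^n$ (the paper's linearization-based estimates face the same issue and it is harmless here).
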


In agreement with the notation defined in~\eqref{eq195}, the matrix $F$ of Theorem~\ref{th1} is such that
\[
F = \left.
\partial_{x^I} \left(
F_+(\lambda(x))x^{U*}
\right)
\right|_{x=x^*}
= \left.
\frac{\partial \left(
F_+(\lambda(x))x^{U*}
\right)}{\partial x^I}
\right|_{x=x^*}
= \sum_{i=1}^{n_\lambda} \left.
\frac{\partial (F_+(\lambda)x^{U*})}{\partial \lambda_i}
\right|_{\lambda= 0_{n_\lambda}}
\left.
\frac{\partial \lambda_i(x)}{\partial x^I}
\right|_{x=x^*}.
\]

\begin{proof}[Proof of Theorem~\rm\ref{th1}]
\mbox{}

\noindent $\bullet$
From Assumption \ref{as3}, the matrix $M := V^U \begin{pmatrix} 0_{n_I \times n_U} \\ I_{n_U} \end{pmatrix} \in\Rset^{n_U\times n_U}$ is a Metzler matrix, of which the positive vector $\bfo_{n_U}\t$ is a left-eigenvector associated to the eigenvalue 0.
For large enough value of $\alpha$, $M+\alpha I_{n_U}$ is non-negative, and $\bfo_{n_U}\t$ is a left-eigenvector associated to the eigenvalue $\alpha$.
The latter is thus the spectral radius of $M+\alpha I_{n_U}$~\cite[Corollary 1.1.12]{Berman:1994aa}, so that $0$ is the stability modulus of $M$ and no other eigenvalue is located on the imaginary axis.

The matrix $M$ also admits a {\em nonnegative} right-eigenvector $w\in\Rset_+^{n_U}$, see~\cite[Theorem 1.1.1]{Berman:1994aa}.
Now, the nonzero, nonnegative, vector $\begin{pmatrix}
0_{n_I} \\ w
\end{pmatrix} \in \Rset_+^n$ is such that
\[
V^U \begin{pmatrix}
0_{n_I} \\ w
\end{pmatrix}
= V^U \begin{pmatrix} 0_{n_I \times n_U} \\ I_{n_U} \end{pmatrix} w
= Mw = 0_{n_U}.
\]
One concludes, using \eqref{eq42}, that
\[
\begin{pmatrix}
0_{n_I} \\ w
\end{pmatrix} \in \Xbf_s \cap \ker V^U = \Xbf_\eq,
\]
which demonstrates the existence of a nonzero disease-free equilibrium of system \eqref{eq0}-\eqref{eq300}.

\noindent $\bullet$
Generally speaking, the Jacobian matrix of the function $f(x,\lambda(x))$ for $f$ defined in \eqref{eq300} at any point $x\in\Rset_+^n$ is decomposed as
\begin{equation}
\label{eq45}
J(x) := \begin{pmatrix} J_{II}(x) & J_{IU}(x) \\ J_{UI}(x) & J_{UU}(x) \end{pmatrix},
\end{equation}
$J_{II}(x)\in\Rset^{n_I \times n_I }$, $J_{IU}(x) \in\Rset^{n_I \times n_U }$, $J_{UI}(x) \in\Rset^{n_U \times n_I }$, $J_{UU}(x)\in\Rset^{n_U \times n_U }$,
where
\begin{gather*}
J_{II}(x) :=
\partial_{x^I} \left(
(F_0 (\lambda(x)) + V^I) x^I + F_+ (\lambda(x)) x^U
\right), \\
J_{IU}(x) :=
\partial_{x^U} \left(
(F_0 (\lambda(x)) + V^I) x^I + F_+ (\lambda(x)) x^U
\right), \\
J_{UI}(x) :=
\partial_{x^I} \left(
V^U x - F_- (\lambda(x)) x^U
\right),\qquad
J_{UU}(x) :=
\partial_{x^U} \left(
V^U x - F_- (\lambda(x)) x^U
\right).
\end{gather*}

Let $x^*\in\Xbf_\eq$ be a (nonzero) DFE.
We will first show that at such a point, the block $J_{IU}(x^*)$ is null.
As a matter of fact, one has $x^{I*} = 0_{n_I}$, $\lambda(x^*) = 0_{n_\lambda}$, and thus the matrices $F_0(\lambda(x^*)), F_\pm(\lambda(x^*))$ are zero, due to the linearity of these operators, see Assumption~\ref{as1}.
Therefore,
\[
J_{IU}(x^*)
= \left.
\partial_{x^U} \left(
F_+ (\lambda(x)) x^U
\right) \right|_{x=x^*}
= F_+ (\lambda(x^*)) + \left.
\partial_{x^U} \left(
F_+ (\lambda(x)) x^{U*}
\right) \right|_{x=x^*}
= \frac{\partial F_+}{\partial \lambda}(0_{n_\lambda}) \frac{\partial \lambda}{\partial x^U}(x^*) x^{U*}.
\]
One has $\lambda(x)=0_{n_\lambda}$ at any point $x\in\Xbf_s$, so that
\[
\frac{\partial \lambda}{\partial x^U}(x^*) = 0_{n_\lambda\times n_U},
\]
and one concludes that at any DFE $x^*$,
\begin{equation}
\label{eq79}
J_{IU}(x^*) = 0_{n_I\times n_U}.
\end{equation}

Due to the previous property, the stability of any DFE $x^*$ is determined by the spectra of the two blocks $J_{II}(x^*)$ and $J_{UU}(x^*)$ of $J(x^*)$.
Let us compute these quantities.

On the one hand, $\left.
\frac{\partial}{\partial x^U} \left(
F_- (\lambda(x))
\right) \right|_{x=x^*}$ is shown to be zero, just as $\left.
\frac{\partial}{\partial x^U} \left(
F_+ (\lambda(x))
\right) \right|_{x=x^*}$ previously.
Thus,
\begin{equation}
\label{eq790}
J_{UU}(x^*)
=
\left.
\partial_{x^U}\left(V^U x - F_- (\lambda(x)) x^U\right)
\right|_{x=x^*}
= V^U \frac{\partial x}{\partial x^U} - F_- (\lambda(x^*))
= V^U \begin{pmatrix} 0_{n_I \times n_U} \\ I_{n_U} \end{pmatrix},
\end{equation}
which by Assumption~\ref{as3} has zero as simple eigenvalue and possibly other eigenvalues with negative real parts.

On the other hand, using the nullity of the terms $F_0 (\lambda(x))$ and $F_+ (\lambda(x))$, one has
\[
J_{II}(x^*)
=
\left.
\partial_{x^I} \left((F_0 (\lambda(x)) + V^I) x^I + F_+ (\lambda(x)) x^U \right)
\right|_{x=x^*}
= V^I + \left.
\partial_{x^I} \left[
F_+(\lambda(x))x^{U*}
\right]
\right|_{x=x^*}.
\]
One checks directly that
\[
\left.
\partial_{x^I} \left(
F_+(\lambda(x))x^{U*}
\right)
\right|_{x=x^*}
= \left.
\partial_\lambda \left(
F_+(\lambda)x^{U*}
\right) \right|_{\lambda = 0_{n_\lambda}}
\left.
\partial_{x^I} \lambda
\right|_{x=x^*},
\]
where, due to the linearity of $F_+(\cdot)$, the quantity $\partial_\lambda \left(
F_+(\lambda)x^{U*}
\right)$ does not depend upon $\lambda$.
Comparing with the notations introduced in \eqref{eq355b}, one thus obtains
\begin{equation}
J_{II}(x^*) = F + V.
\end{equation}

Notice first that the matrix $V=V^I$ is Hurwitz.
As a matter of fact, by hypotheses in~Assumption~\ref{as3}, one has
\[
\bfo_{n_I}\t V^I \leq - \bfo_{n_U}\t V^U  \begin{pmatrix}
I_{n_I} \\ 0_{n_U \times n_I}
\end{pmatrix} 
\prec 0.
\]
This, together with the fact that this matrix is Metzler, implies that $V^I$ is Hurwitz, see~\cite[Theorem 2.1]{Mason:2007aa}, and in particular {\em invertible}.

On the other hand, by Assumption~\ref{as1}, the operator $F_+(\lambda)$ is positive.
This implies that the terms $\left.
\partial_\lambda \left(
F_+(\lambda)x^{U*}
\right) \right|_{\lambda = 0_{n_\lambda}}$
are (constant) nonnegative matrices.
Also, the terms $\left.
\partial_{x^I} \lambda
\right|_{x=x^*}$
are nonnegative, as $\lambda(x)$ is zero at any point $x\in\Xbf_s$ and takes on nonnegative values.
Thus, $F$ is a {\em nonnegative matrix}.
Arguing as in~\cite[Proof of Theorem 2]{Driessche:2002aa}, one then obtains that
\begin{equation}
\label{eq400}
s(J_{II}(x^*)) < 0 \Leftrightarrow \cR_0 < 1 \qquad \text{ and }\qquad
s(J_{II}(x^*)) > 0 \Leftrightarrow \cR_0 > 1,
\end{equation}
for the value $\cR_0$ defined in \eqref{eq355}, and where $s$ denotes the stability modulus, see the Notations in the end of the Introduction.

\noindent $\bullet$
Consider first the case where $\cR_0 > 1$, so that $s(J_{II}(x^*)) > 0$.
One has
\[
s(J(x^*)) = \max\{ s(J_{II}(x^*)), s(J_{UU}(x^*))\} = \max\{ s(J_{II}(x^*)), 0 \} = s(J_{II}(x^*)) > 0.
\]
The DFE $x^*$ is then unstable.

\noindent $\bullet$
Consider now the case where $\cR_0 < 1$.
One has here $s(J_{II}(x^*)) < 0$, but $s(J_{UU}(x^*)) = 0$ (contrary to what happens in~\cite{Driessche:2002aa}), and $J_{UU}(x^*)$ has only 0 as an imaginary eigenvalue, which is simple.
Let us first establish the (simple) stability of the DFE under the general condition \eqref{eq188}.
Let $\delta x := x-x^*$, $\delta x^I := x^I - x^{*I}$, $\delta x^U := x^U - x^{*U}$.
By definition, the tangent linear equation writes
\[
\delta \dot x^I = J_{II}(x^*) \delta x^I,\qquad
\delta \dot x^U = J_{UI}(x^*) \delta x^I + J_{UU}(x^*) \delta x^U.
\]
Then exploiting the asymptotic stability of the block $J_{II}(x^*)$, for sufficiently small $\varepsilon >0$, there exist $\zeta>0$, $c\geq 1$, $k>0$ such that, for any trajectory of~\eqref{eq0}-\eqref{eq300} such that $\|\delta x (0)\| < \zeta$,
\begin{equation}
\label{eq721}
\| \delta x^I (t)\|, \| \delta \dot x^I (t)\|  \leq ce^{-kt} \| \delta x (0)\| \qquad \text{ for any $t\geq 0$, as long as } \|\delta x (t)\| < \varepsilon.
\end{equation}
On the other hand, as seen in Section~\ref{se50}, one has for any trajectory
\begin{multline*}
\bfo_{n_I}\t \dot x^I + \bfo_{n_U}\t \dot x^U
=
\left[
\bfo_{n_I}\t \left(
F_0 (\lambda) + V^I
\right)
+ \bfo_{n_U}\t V^U  \begin{pmatrix}
I_{n_I} \\ 0_{n_U \times n_I}
\end{pmatrix}
\right]x^I\\
+ \left[
\bfo_{n_I}\t F_+ (\lambda)- \bfo_{n_U}\t F_- (\lambda)
+ \bfo_{n_U}\t V^U \begin{pmatrix}
0_{n_I \times n_U} \\ I_{n_U}
\end{pmatrix}
\right] x^U,
\end{multline*}
so that in the vicinity of the DFE $x^*$, one has $\lambda(x^*)=0_{n_\lambda}$ and
\[
\bfo_{n_I}\t \delta \dot x^I + \bfo_{n_U}\t \delta \dot x^U
= \left(
\bfo_{n_I}\t V^I + \bfo_{n_U}\t V^U  \begin{pmatrix}
I_{n_I} \\ 0_{n_U \times n_I}
\end{pmatrix}
\right) \delta x^I
+ \bfo_{n_U}\t V^U \begin{pmatrix}
0_{n_I \times n_U} \\ I_{n_U}
\end{pmatrix} \delta x^U.
\]
Due to the first point in Assumption~\ref{as3}, the second term is zero, yielding
\begin{eqnarray*}
\bfo_{n_U}\t \delta \dot x^U
& = &
- \bfo_{n_I}\t \delta \dot x^I + \left(
\bfo_{n_I}\t V^I + \bfo_{n_U}\t V^U  \begin{pmatrix}
I_{n_I} \\ 0_{n_U \times n_I}
\end{pmatrix}
\right) \delta x^I\\
& = &
\left(
-\bfo_{n_I}\t J_{II}(x^*) + \bfo_{n_I}\t V^I + \bfo_{n_U}\t V^U  \begin{pmatrix}
I_{n_I} \\ 0_{n_U \times n_I}
\end{pmatrix}
\right) \delta x^I.
\end{eqnarray*}

As $\delta x^I$ decreases exponentially, see~\eqref{eq721}, one thus gets that,
\[
| \bfo_{n_U}\t \delta x^U(t)| \leq c' \| \delta x (0)\| \qquad \text{ as long as } \|\delta x (t)\| < \varepsilon,
\]
for some $c'\geq 1$.
Denoting $\tx^U := \begin{pmatrix} x_{n_I+1} & \dots & x_{n-1} \end{pmatrix}\t$
and $\delta \tx^U := \tx^U - \tx^{*U}$, the evolution of $\delta \tx^U$ is then governed by the remaining eigenvalues of the matrix $V^U \begin{pmatrix} 0_{n_I \times n_U} \\ I_{n_U} \end{pmatrix}$ mentioned in Assumption~\ref{as3}.
The latter have negative real parts, as noticed in the beginning of the present demonstration.
Therefore, for some $c''\geq 1$,
\[
\| \delta \tx^U (t)\| \leq c''e^{-kt} \| \delta x (0)\| \qquad \text{ as long as } \|\delta x (t)\| < \varepsilon.
\]

Take $\|\delta x (0)\| < \min \left\{ \zeta; \frac{\varepsilon}{5c}; \frac{\varepsilon}{5c'}; \frac{\varepsilon}{5c''} \right\}$, then as long as $ \|\delta x (t)\| < \varepsilon$, one deduces
\[
\| \delta x^I (t)\| < \frac{\varepsilon}{5},\qquad | \bfo_{n_U}\t \delta x^U(t)| < \frac{\varepsilon}{5}, \qquad \| \delta \tx^U (t)\| < \frac{\varepsilon}{5}.
\]
Using for example for $\|\cdot\|$ the $l^1$-norm, one has
\[
|  \delta x_n (t)| = | \bfo_{n_U}\t \delta x^U(t) - \bfo_{n_U-1}\t \delta \tx^U(t)| \leq | \bfo_{n_U}\t \delta x^U(t)| + |\bfo_{n_U-1}\t \delta \tx^U(t)| \leq | \bfo_{n_U}\t \delta x^U(t)| + \| \delta \tx^U (t)\| \leq \frac{2\varepsilon}{5},
\]
so that
\[
\|\delta x (t)\| = \| \delta x^I (t)\| + \| \delta x^U (t)\| \leq \| \delta x^I (t)\| +\| \delta \tx^U (t)\| + \| \delta x_n (t)\| < \frac{4\varepsilon}{5} < \varepsilon.
\]
Therefore $\|\delta x (t)\|$ never reaches the value $\varepsilon$, and the simple stability of the DFE $x^*$ is proved.

Assume now more specifically that the first formula in \eqref{eq188} holds with an equality, i.e.
\[
\bfo_{n_I}\t V^I + \bfo_{n_U}\t V^U  \begin{pmatrix}
I_{n_I} \\ 0_{n_U \times n_I}
\end{pmatrix}
= 0_{1\times n_I}.
\]
Then along any trajectory of system~\eqref{eq0}-\eqref{eq300}, one has (see Section~\ref{se2}) $\bfo_n\t \dot x \equiv 0$, so that $\bfo_n\t x(t) \equiv \bfo_n\t x(0)$.
Replacing e.g.~the coordinate $x_n(\cdot)$ by its value $\sum_{j=1}^n x_j(0)-\sum_{j=1}^{n-1} x_j(\cdot)$ yields an $(n-1)$-dimensional dynamical system, of which the vector $\tx^* := \begin{pmatrix} x_1^* & \dots & x_{n-1}^* \end{pmatrix}$ is an equilibrium.
The spectrum of the Jacobian matrix of this reduced dynamical system at this point consists of the $n-1$ eigenvalues of $J(x^*)$ with negative real parts.
The point $\tx^*$ is thus a locally asymptotically stable equilibrium of the reduced system, and $x^*$ is therefore a locally asymptotically stable equilibrium of the initial system.
This achieves the demonstration of Theorem~\ref{th1}.
\end{proof}

\begin{remark}
\label{re30}
The function whose Jacobian is considered in Proposition~{\rm\ref{pr1}} is, as in~{\rm \cite[Assumption~{\bf (A.5)}]{Driessche:2002aa}}, equal to the right-and side $f(x,\lambda(x))$ `when the function $\cF^I(x,\lambda(x))$ is set to zero'.
\end{remark}

\begin{remark}
\label{re40}
Consider the following system with {\em two} susceptible compartments, extending the SIR model~\eqref{eq555} by introduction of~inherited heterogeneity of susceptibility and infectivity:
\begin{equation*}
\dot S_i = \mu N_i - S_i \left(
\beta_{ii} \frac{I_i}{S_i} + \beta_{ii'} \frac{I_{i'}}{N_{i'}} 
\right) -\mu S_i,\qquad
\dot I_i = S_i \left(
\beta_{ii} \frac{I_i}{S_i} + \beta_{ii'} \frac{I_{i'}}{N_{i'}} 
\right) -(\gamma+\mu) I_i,\qquad
\dot R_i = \gamma I_i -\mu R_i,
\end{equation*}
with $N_i=S_i+I_i+R_i$, for any $i,i'$ such that $\{i,i'\}=\{1,2\}$.
This model may be written in the preceding setting with $n_I=2$, $n_U=4$.
One verifies that, with the state vector $\begin{pmatrix} I_1 & I_2 & S_1 & S_2 & R_1 & R_2 \end{pmatrix}\t$, one has (compare with~\eqref{eq613})
\[
V^U
= \begin{pmatrix}
\mu & 0 & 0 & 0 & \mu & 0\\
0 & \mu & 0 & 0 & 0 & \mu\\
\gamma & 0 & 0 & 0 & -\mu & 0\\
0 & \gamma & 0 & 0 & 0 & -\mu
\end{pmatrix}.
\]
Corresponding to the conservation of the two quantities $S_i+I_i+R_i$, respectively within the compartments $S_i/I_i/R_i$ for $i=1,2$, one has here
\[
\begin{pmatrix} 1 & 0 & 1 & 0 \end{pmatrix} V^U \begin{pmatrix} 0_{2 \times 4} \\ I_4 \end{pmatrix}
= \begin{pmatrix} 0 &1 & 0 & 1 \end{pmatrix} V^U \begin{pmatrix} 0_{2 \times 4} \\ I_4 \end{pmatrix}
= 0_4\t.
\]
These identities imply, but are stronger than the eigenvector property in the first point of Assumption~\ref{as3}, namely
\[
\bfo_4\t V^U \begin{pmatrix} 0_{2 \times 4} \\ I_4 \end{pmatrix}
= 0_4\t.
\]
Here the eigenvalue $0$ of matrix $V^U \begin{pmatrix} 0_{2 \times 4} \\ I_4 \end{pmatrix}$ has thus multiplicity (at least) $2$, so that Assumption~\ref{as3} is {\em not} fulfilled and Theorem~{\rm\ref{th1}} does not apply directly.
However, using the two conservation formulas of the quantities $N_i$, one adapts easily the demonstration, showing that two components of $\delta x^U$ are bounded instead of one, while the remaining components converge to zero.
One then recovers the marginal stability property, and indeed the conclusions of Theorem~{\rm\ref{th1}} are still valid.
This provides a substantial extension of this result.

Notice that the same phenomenon appears for the multi-group model presented in Section~{\rm\ref{se52}} and for the vector-host model in Section~{\rm\ref{se55}}.
For the first one, the disease-free equilibrium points may or may not present individuals in every non-infected compartments $S_i, R_i$; and in the second one, they may or may not contain both uninfected hosts $S$ and uninfected vectors $M$.
Generally speaking this subtlety requires special attention when the dimension of the set  $\Xbf_\eq$ of all DFEs is {\em at least equal to $2$.}
\end{remark}

\section{Basic reproduction number of epidemiological models in commuting populations}
\label{se6}

We gather in this section the main results of the paper.
In Section~\ref{se61} is given the basic reproduction number for a partitioned-population model (that is, without commutations).
This serves as an introduction to Section~\ref{se62}, where the general case of commuting populations is treated.

\subsection{Basic reproduction number of stationary partitioned-population models}
\label{se61}

Consider first the model of partitioned population given in equation~\eqref{eq1}, with a function $f$ as in~\eqref{eq300}.
The corresponding model is stationary, and composed of $|\cP\backslash\cR|$ uncoupled `locations' (classes).

Assuming the Assumptions~\ref{as1} and~\ref{as3} fulfilled, as well as~\ref{as4} for every $\lambda_q$, $q\in\cQ$, it is easy to see that there exist disease-free equilibrium points to this system.
As a matter of fact, equation~\eqref{eq3} provides a description at the level of each location $q\in\cQ=\cP\backslash\cR$, and due to the previous assumptions, the latter admits disease-free equilibrium points $x_q^*$, $q\in\cQ$ (see Theorem~\ref{th1}).
It is then easy to split the global population at this point and allocate proportions $\alpha_px_q^*$, $0 \leq \alpha_p \leq 1$, $\sum_{p\in q} \alpha_p = 1$ of the latter to each subpopulation $x_p$, $p\in q$, in each class $q$ in order to obtain a DFE $\cX^*$.

We will first introduce in Section~\ref{se511} a representation of the state variable for a partitioned-population model.
A difficulty is that, as we aim at obtaining tractable criteria, we can no longer use the intrinsic formulas in~\eqref{eq1} expressed in terms of all subpopulations indistinctly: instead, one must consider the subpopulation classes and define a global ordering of the subpopulations that keep track of their contacts, in order to ease the representation of the global state variable.
Once this done, one may compute the Jacobian matrix of the system (Section~\ref{se512}), and finally obtain in Section~\ref{se513} the analysis results related to the stability of the DFE $\cX^*$.

\subsubsection{Ad hoc representation of the state variable for partitioned-population model}
\label{se511}

We will now proceed to the computation of the Jacobian matrix of equation~\eqref{eq1} at a disease-free equilibrium point.
For this, we will choose a specific representation of the global state variable $\cX$, compatible with the previous setting.
In order to 
picture the analysis results in terms of {\em matrix} properties, we will have to drop the intrinsic representation~\eqref{eq1}, expressed in terms of {\em subpopulations}, and argue in terms of {\em classes}, ordered in an order having some specified properties.
This complicates notably the writing of the conditions.
However, this complication is more of a notational nature, than of a conceptual one.

Before displaying the new representation, notice that, in any case, the class $q_j\in\cQ$ contains $|q_j|$ distinct subpopulations, so that the total number of subpopulations fulfils the identity
\begin{equation}
\label{eq375}
|\cP| = \sum \left\{
|q|\ :\ q\in \cQ
\right\}.
\end{equation}
Also, due to the fact that $x_q = \sum_{p\in q} x_p$ for any class $q\in\cQ$, one has for any $p,p'\in\cP$;
\begin{equation}
\frac{\partial x_{[p']}}{\partial x_p} = I_n\ \text{ if }\ [p]=[p'],\qquad \frac{\partial x_{[p']}}{\partial x_p} = 0_{n\times n}\ \text{ if }\ [p]\neq [p'],
\end{equation}
independently of the ordering of the components of the vector $\cX$.

For simplicity, we will assume that the $n$-dimensional vectors $x_p$ are ordered within the $n|\cP|$-dimensional vector $\cX$ according to the following principles: all vectors $x^I_p$ are grouped first, then all vectors $x^U_p$; and within the vectors $x^I_p$, resp.~$x^U_p$, the information of the subpopulation of a first class $q_1$ are put first, then of a second class $q_2$, and so on until $q_{|\cQ|}$.
Schematically, this corresponds to writing the state variable $\cX\in\Rset_+^{n|\cP|}$ as
\begin{subequations}
\label{eq519}
\begin{equation}
\label{eq519a}
\cX = \begin{pmatrix}
\cX^I \\ \cX^U
\end{pmatrix},\qquad
\cX^I\in\Rset_+^{n_I|\cP|},\quad \cX^U\in\Rset_+^{n_U|\cP|},
\end{equation}
with
\begin{equation}
\label{eq519b}
\cX^{I\mbox{\tiny\sf T}}
= \begin{pmatrix}
x^{I \mbox{\tiny\sf T}}_{q_1,p_1} & \dots & x^{I \mbox{\tiny\sf T}}_{q_1, p_{|q_1|}} & x^{I \mbox{\tiny\sf T}}_{q_2,p_1} & \dots & x^{I \mbox{\tiny\sf T}}_{q_2,p_{|q_2|}} & \dots & x^{I \mbox{\tiny\sf T}}_{q_{|\cQ|},p_1} & \dots  & x^{I \mbox{\tiny\sf T}}_{q_{|\cQ|},p_{| q_{|\cQ|} |}}
\end{pmatrix},
\end{equation}
\end{subequations}
where the vector $x^I_{q_j, p_i}$, $j=1,\dots, |\cQ|$, $i=1,\dots, |q_j|$, represents the $n_I$ infected compartments of the subpopulation $p_i$ in the class $q_j\in\cQ$.
The vector $\cX^U$ is defined similarly.
Notice that such a decomposition is {\em not} unique, as the ordering of the classes on the one hand, and the ordering of the subpopulations in every class on the other, are not unique.
Once the ordering of the classes and the ordering of each subpopulation in its class have been chosen, then the ordering of the components in vector $\cX$ is the corresponding lexicographic order: for any $q,q'\in\cQ$, $p\in q$, $p'\in q'$, the components $x_{q,p}$ are located before the components $x_{q',p'}$ if and only if $q$ precedes $q'$, or if $q=q'$ and $p$ precedes $p'$.

\begin{remark}
\label{re4}
With the convention exposed above, one may rewrite identity~\eqref{eq375} as
\begin{equation}
\label{eq376}
|\cP| = \sum_{j=1}^{|\cQ|} |q_j|.
\end{equation}
\end{remark}

\subsubsection{Computation of the Jacobian matrix}
\label{se512}

With the convention of representation of the state variable $\cX$ previously exposed, the Jacobian matrix $\cJ(\cX^*)$ at a DFE $\cX^*$ is written by blocks as
\begin{equation}
\label{eq47}
\cJ(\cX^*) := \begin{pmatrix} \cJ_{II}(\cX^*) & \cJ_{IU}(\cX^*) \\ \cJ_{UI}(\cX^*) & \cJ_{UU}(\cX^*) \end{pmatrix},
\end{equation}
for some $\cJ_{II}(\cX^*)\in\Rset^{n_I |\cP|\times n_I |\cP|}$, $\cJ_{IU}(\cX^*) \in\Rset^{n_I |\cP|\times n_U |\cP|}$, $\cJ_{UI}(\cX^*) \in\Rset^{n_U |\cP|\times n_I |\cP|}$, $\cJ_{UU}(\cX^*)\in\Rset^{n_U |\cP|\times n_U |\cP|}$.
Let us assess the value of these blocks.

\medskip \noindent $\bullet$
Let $p\in\cP$, then taking inspiration from the computations in the proof of Theorem~\ref{th1}, one has that, at any DFE $\cX^*$, for any $p\in\cP$,
\begin{multline*}
\lefteqn{\partial_{x^I_p} \left. \left[
(F_0 (\lambda_{[p]}(x_{[p]})) + V^I) x^I_p + F_+ (\lambda_{[p]}(x_{[p]})) x^U_p
\right] \right|_{\cX=\cX^*}}\\
= V^I + \left.
\partial_{x^I_p} \left[
F_+(\lambda_{[p]}(x))x^{U*}_p
\right]
\right|_{x=x^*_{[p]}}
= V^I + \left.
\partial_\lambda \left[
F_+(\lambda)x^{U*}_p
\right]
\right|_{\lambda=0_{n_\lambda}}
\left.
\partial_{x^I_p} \lambda_{[p]}(x)
\right|_{x=x^*_{[p]}}.
\end{multline*}
In coherence with the previous notations, $x^{U*}_p$ represents the uninfected components of the state vector of the subpopulation $p$ at equilibrium; and $x^*_{[p]} = \sum_{p'\in[p]} x_{p'}^*$.

\begin{remark}
\label{re3}
As noticed before, the function $\lambda \mapsto \partial_\lambda \left[
F_+(\lambda)x^{U*}_p
\right]$
is {\em constant}, as it is the derivative of an affine map.
Accordingly, here and in the sequel we will simply write
\[
\partial_\lambda \left[
F_+(\lambda)x^{U*}_p
\right]
\qquad \text{ instead of } \qquad
\partial_\lambda \left[
F_+(\lambda)x^{U*}_p
\right]
(0_{n_\lambda}).
\]
Also, $x^*_{[p]} = \sum_{p'\in[p]} x_{p'}^*$, so that the preceding derivative does not depend upon the subpopulation of the class.
One may thus write
\[
\partial_{x^I_p} \lambda_{[p]}(x^*_{[p]})
= \partial_{x^I} \lambda_{[p]}(x^*_{[p]}),
\]
where the variable $x^I_p$ has been replaced by the dummy variable $x^I$.
\end{remark}

On the other hand, for $p,p'\in\cP$ such that $[p]=[p']$ but $p\neq p'$, one has
\begin{multline*}
\lefteqn{\partial_{x^I_{p'}} \left. \left[
(F_0 (\lambda_{[p]}(x_{[p]})) + V^I) x^I_p + F_+ (\lambda_{[p]}(x_{[p]})) x^U_p
\right] \right|_{\cX=\cX^*}}\\
= \left.
\partial_{x^I_{p'}} \left[
F_+(\lambda_{[p]}(x))x^{U*}_p
\right]
\right|_{x=x^*_{[p]}}
= \left.
\partial_{\lambda} \left[
F_+(\lambda)x^{U*}_p
\right]
\partial_{x^I} \lambda_{[p]}(x)
\right|_{x=x^*_{[p]}}
\end{multline*}
(using again Remark~\ref{re3} to introduce a dummy variable).

Last, the partial derivative $\partial_{x^I_{p'}} \left. \left[
(F_0 (\lambda_{[p]}(x_{[p]})) + V^I) x^I_p + F_+ (\lambda_{[p]}(x_{[p]})) x^U_p
\right] \right|_{\cX=\cX^*}$ is equal to $0_{n_I\times n_I}$ whenever $[p]\neq [p']$.

One shows with the same techniques that, for any $p,p'\in\cP$,
\[
\partial_{x^U_{p'}} \left. \left[
(F_0 (\lambda_{[p]}(x_{[p]})) + V^I) x^I_p + F_+ (\lambda_{[p]}(x_{[p]})) x^U_p
\right] \right|_{\cX=\cX^*}
= 0_{n_I\times n_U}.
\]
Last, one has for any $p\in\cP$,
\[
\left.
\partial_{x^U_p} \left(
V^U x_p - F_- (\lambda_{[p]}(x_{[p]})) x^U_p
\right) \right|_{\cX=\cX^*}
= V^U \begin{pmatrix} 0_{n_I \times n_U} \\ I_{n_U} \end{pmatrix},
\]
while for any $p,p'\in\cP$ such that $p\neq p'$,
\[
\left.
\partial_{x^U_{p'}} \left(
V^U x_p - F_- (\lambda_{[p]}(x_{[p]})) x^U_p
\right) \right|_{\cX=\cX^*}
= 0_{n_U \times n_U}.
\]

\medskip \noindent $\bullet$
With the help of the previous computations and adopting now the component ordering defined in~\eqref{eq519}, one may proceed to compute the three blocks $\cJ_{II}(\cX^*)$, $\cJ_{IU}(\cX^*)$ and $\cJ_{UU}(\cX^*)$ in the decomposition~\eqref{eq47}.
The matrix $\cJ_{II}(\cX^*)$ is defined as
\begin{subequations}
\label{eq78}
\begin{equation}
\label{eq78a}
\cJ_{II}(\cX^*) = \diag\left\{ \cB_{q_1}(\cX^*); \dots; \cB_{q_{|\cQ|}}(\cX^*)\right\},
\end{equation}
where, for any $q\in\cQ$, the block  $\cB_q\in\Rset^{n_I|q|\times n_I|q|}$, is given by
\begin{equation}
\label{eq78b}
\cB_q := (I_{|q|}\otimes V^I)
+ \begin{pmatrix}
\partial_{x^I} \left[
F_+(\lambda(x))x^{U*}_{q,1}
\right](x^*_q)\\
\vdots \\
\partial_{x^I} \left[
F_+(\lambda(x))x^{U*}_{q,|q|}
\right](x^*_q)
\end{pmatrix} (\bfo_{|q|}\t \otimes I_{n_I}).
\end{equation}
In coherence with the previous notations,~in \eqref{eq78b} $x^{U*}_{q,i}$, $i=1,\dots, |q|$, is the equilibrium value of the subpopulation numbered $i$ in the $q$ class/location; and $x^*_q = \sum \left\{  x^*_p\ : p\in\cP,\ [p] = q \right\} = \sum_{i=1}^{|q|} x^*_{q,i}$ is the total population present in that class.
One also has:
\begin{gather}
\label{eq78c}
\cJ_{IU}(\cX^*) = 0_{n_I |\cP|\times n_U |\cP|},\\
\label{eq78d}
\cJ_{UU}(\cX^*) = I_{|\cP|} \otimes V^U \begin{pmatrix} 0_{n_I \times n_U} \\ I_{n_U} \end{pmatrix},
\end{gather}
where $\otimes$ represents the Kronecker product.
It is important to notice at this point that the 4th block $\cJ_{UI}(\cX^*)$ has a block diagonal structure, with $|\cQ|$ blocks, respectively of size $|q_1|, \dots, |q_{|\cQ|}|$.
\end{subequations}

Let us comment on that structure of the Jacobian matrix.
First, formula~\eqref{eq78a} shows a block-diagonal structure that corresponds to the class structure: there may be no influence whatsoever between two subpopulations whose classes (or locations) are different.
Then~\eqref{eq78b} unveils the interinfluence between the infected populations at the same location.
The first term therein, namely $(I_{|q|}\otimes V^I)$, originates from the passage from a compartment to another within every given subpopulation $x_p$; while the second one --- the product of two matrices of respective dimensions $|q| n_I\times n_I$ and $n_I \times |q| n_I$ --- comes from the interinfluence between the different subpopulations in the same class $q$, which results from the fact that the force of infection $\lambda$ depends upon the total population $x_q = \sum_{p\in q} x_p$ at this location.
The $n_I$ lines of this second term that correspond to a given subpopulation $p_{q,i}$, $i\in\{1,\dots, |q|\}$, are
\[
\partial_{x^I} \left[
F_+(\lambda(x))x^{U*}_{q,i}
\right](x^*_q) (\bfo_{|q|}\t \otimes I_{n_I}).
\]
They depend {\em only} upon two quantities: on the one hand upon the equilibrium value $x^{U*}_{q,i}$ of this subpopulation; and on the other hand upon the global sum $x^{U*}_q-x^{U*}_{q,i}$ of the other subpopulations in the class $q$ (and not individually of every other subpopulation in the class $q$).

Formula~\eqref{eq79} is the analogue of~\eqref{eq78c} in the simple case of a unique subpopulation.
It is a consequence of the fact that the equilibrium considered is disease-free.
Last, formula~\eqref{eq78d} testifies of the same type of diagonal structure than the first term in~\eqref{eq78b}, which  gives in $\cJ_{II}(\cX^*)$ a term $(I_{|\cP|} \otimes V^I)$ similar to~\eqref{eq78d}.


\subsubsection{DFE stability for a partioned-population model}
\label{se513}

The following result treats the case of several subpopulations distributed in different locations in a permanent manner.
This extension of Theorem~\ref{th1} is essentially a rewriting, due to the fact that the partition of the subpopulations is unchanged along the time.
It is put here for didactic reasons, as an intermediate towards the fully general case where the partition changes with respect to time, treated below in Theorem~\ref{th5}.


\begin{theorem}[Basic reproduction number and stability for epidemic models with fixed population partition]
\label{th2}

Let $\cP$ be a finite set, $\cR$ a partition of $\cP$ and $\cQ := \cP\backslash\cR$ the corresponding set of equivalence classes.
Let $f$ 
and $\lambda_q$, $q\in\cQ$, fulfilling Assumptions \ref{as1}, \ref{as3}, \ref{as4}.
Then the following properties hold.

\begin{itemize}
\item
System \eqref{eq1}-\eqref{eq300} admits nonzero disease-free equilibrium points.
The set of equilibrium points of this system is exactly the Cartesian product $\Xbf_{\eq}^{|\cP|}$, where $\Xbf_{\eq}$, defined in~\eqref{eq20}, is the set of equilibrium points for an isolated subpopulation.

\item
For any nonzero disease-free equilibrium point $\cX^*\in\Xbf_{\eq}^{|\cP|}$ of system \eqref{eq1}-\eqref{eq300}, let
\begin{subequations}
\label{eq305}
\begin{equation}
\label{eq305a}
\cR_{0,q} := \rho \left(F_q
 V^{-1}
\right),\qquad q\in\cQ,
\end{equation}
where $F_q, V\in\Rset^{n_I\times n_I}$ are given by:
\begin{gather}
\label{eq305b}
\hspace{-1cm}
F_q := \left.
\partial_{x^I} \left(
F_+(\lambda_q(x))x^{U*}_q
\right)
\right|_{x=x^*_q},\qquad
V := V^I.
\end{gather}
\end{subequations}

Let
\begin{equation}
\label{eq59}
\cR_0 := \max_{q\in\cQ} \cR_{0,q}.
\end{equation}

The following assertions are true:
\begin{itemize}
\item
if $\cR_0>1$, then $\cX^*$ is unstable;
\item
if $\cR_0<1$, then 
$\cX^*$ is marginally stable.
Moreover, if the first identity in \eqref{eq188} holds {\em with an equality}, then $\cX^*$ is locally asymptotically stable.
\end{itemize}
\end{itemize}
\end{theorem}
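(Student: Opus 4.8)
The plan is to follow closely the proof of Theorem~\ref{th1}, exploiting that, in the absence of commutations, the dynamics decouple into $|\cQ|$ uncoupled ``locations'', the only coupling inside a location being through the shared force of infection. I would first settle the statements on equilibria. An equilibrium $\cX^*=(x_1^*,\dots,x_{|\cP|}^*)$ of \eqref{eq1}-\eqref{eq300} satisfies $f(x_p^*,\lambda_{[p]}(x_{[p]}^*))=0_n$ for every $p\in\cP$. If $\cX^*$ is disease-free then $x_{[p]}^*\in\Xbf_s$ for each $p$, hence $\lambda_{[p]}(x_{[p]}^*)=0_{n_\lambda}$ by Assumption~\ref{as4}, so $f(x_p^*,0_{n_\lambda})=0_n$, i.e.\ $x_p^*\in\Xbf_\eq$; conversely, if every $x_p^*\in\Xbf_\eq$, then $x_{[p]}^*\in\Xbf_s$ and the same computation reversed shows $\cX^*$ is an equilibrium. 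Thus the equilibrium set equals $\Xbf_\eq^{|\cP|}$, and it contains nonzero points since $\Xbf_\eq\neq\{0_n\}$ by Theorem~\ref{th1}.

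I would then use the block description of the Jacobian $\cJ(\cX^*)$ at a nonzero DFE, computed in \eqref{eq47}--\eqref{eq78}. The key facts are $\cJ_{IU}(\cX^*)=0$, so that the spectrum of $\cJ(\cX^*)$ is the union of the spectra of $\cJ_{II}(\cX^*)$ and $\cJ_{UU}(\cX^*)$; $\cJ_{UU}(\cX^*)=I_{|\cP|}\otimes V^U\begin{pmatrix}0_{n_I\times n_U}\\ I_{n_U}\end{pmatrix}$; and $\cJ_{II}(\cX^*)=\diag\{\cB_{q}(\cX^*)\ :\ q\in\cQ\}$. As in the proof of Theorem~\ref{th1}, Assumption~\ref{as3} implies that $V^U\begin{pmatrix}0_{n_I\times n_U}\\ I_{n_U}\end{pmatrix}$ is Metzler with stability modulus $0$, its only imaginary eigenvalue being a simple $0$; hence $s(\cJ_{UU}(\cX^*))=0$, with $0$ a semisimple eigenvalue of multiplicity $|\cP|$ and all other eigenvalues of negative real part.

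The heart of the proof is to show, for each $q\in\cQ$, that $s(\cB_q(\cX^*))<0\Leftrightarrow\cR_{0,q}<1$ and $s(\cB_q(\cX^*))>0\Leftrightarrow\cR_{0,q}>1$. Writing $G_{q,i}:=\partial_{x^I}\bigl[F_+(\lambda_q(x))x^{U*}_{q,i}\bigr](x_q^*)$, formula \eqref{eq78b} reads $\cB_q=I_{|q|}\otimes V^I+C_q(\bfo_{|q|}\t\otimes I_{n_I})$, where $C_q$ is the column stack of $G_{q,1},\dots,G_{q,|q|}$, and linearity of $F_+$ with $\sum_{i=1}^{|q|}x^{U*}_{q,i}=x^{U*}_q$ gives $\sum_{i=1}^{|q|}G_{q,i}=F_q$. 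As in the proof of Theorem~\ref{th1}, each $G_{q,i}$ is a nonnegative matrix and $V^I$ is a Hurwitz Metzler matrix (using \eqref{eq188}), so $\cB_q$ is Metzler. For $\mu$ outside the spectrum of $V^I$, the matrix $I_{|q|}\otimes(\mu I_{n_I}-V^I)$ is invertible, and the determinant identity $\det(A+UW)=\det A\cdot\det(I+WA^{-1}U)$ together with $(\bfo_{|q|}\t\otimes I_{n_I})\bigl(I_{|q|}\otimes(\mu I_{n_I}-V^I)^{-1}\bigr)C_q=(\mu I_{n_I}-V^I)^{-1}F_q$ yields
\[
\det\!\bigl(\mu I_{n_I|q|}-\cB_q\bigr)=\det\!\bigl(\mu I_{n_I}-V^I\bigr)^{|q|}\,\det\!\bigl(I_{n_I}-(\mu I_{n_I}-V^I)^{-1}F_q\bigr).
\]
Hence the eigenvalues of $\cB_q$ outside the spectrum of $V^I$ are precisely those of $F_q+V^I$ outside it; since $V^I$ is Hurwitz, any eigenvalue of either matrix with nonnegative real part falls outside the spectrum of $V^I$, so $s(\cB_q)$ and $s(F_q+V^I)$ are simultaneously negative, zero, or positive. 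It remains to invoke the van den Driessche--Watmough threshold recalled in \eqref{eq400}, applied with $F:=F_q\geq 0$ and $V:=V^I$, which gives $s(F_q+V^I)<0\Leftrightarrow\cR_{0,q}<1$ and $s(F_q+V^I)>0\Leftrightarrow\cR_{0,q}>1$.

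Assembling, $s(\cJ_{II}(\cX^*))=\max_{q\in\cQ}s(\cB_q(\cX^*))$, so with $\cR_0=\max_{q\in\cQ}\cR_{0,q}$: if $\cR_0>1$ then $s(\cJ(\cX^*))=\max\{s(\cJ_{II}(\cX^*)),0\}>0$ and $\cX^*$ is unstable; if $\cR_0<1$ then $\cJ_{II}(\cX^*)$ is Hurwitz while $\cJ_{UU}(\cX^*)$ contributes only a semisimple eigenvalue $0$ of multiplicity $|\cP|$ and otherwise eigenvalues of negative real part, so $\cX^*$ is marginally stable. If in addition the first identity in \eqref{eq188} holds with equality, then $\bfo_n\t x_p$ is conserved for every $p\in\cP$ (Section~\ref{se2}), giving $|\cP|$ independent conserved quantities; restricting to the invariant affine subspace through $\cX^*$ and eliminating one coordinate per subpopulation produces a reduced system whose Jacobian at $\cX^*$ carries exactly the $n|\cP|-|\cP|$ eigenvalues of $\cJ(\cX^*)$ with negative real part, whence local asymptotic stability --- exactly as at the end of the proof of Theorem~\ref{th1}. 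The only genuinely new point, and hence the main obstacle, is the spectral analysis of the coupled block $\cB_q$: one must check that coupling several subpopulations through a single force of infection does not move the threshold $\cR_{0,q}$, which is precisely what the determinant factorization above delivers; everything else is bookkeeping.
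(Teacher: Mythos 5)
Your proof is correct, and the central step is handled by a genuinely different argument from the paper's. Where you factorize the characteristic polynomial of the coupled block via the identity $\det(A+UW)=\det A\cdot\det(I+WA^{-1}U)$, obtaining
\[
\det\bigl(\mu I_{n_I|q|}-\cB_q\bigr)=\det\bigl(\mu I_{n_I}-V^I\bigr)^{|q|-1}\det\bigl(\mu I_{n_I}-V^I-F_q\bigr),
\]
the paper instead takes a dominant eigenvalue $\nu$ of $\cB_q$ with a positive right-eigenvector $(v_1,\dots,v_{|q|})$, left-multiplies the eigenvalue equation by $\bfo_{|q|}\t\otimes I_{n_I}$, and observes that $v_q:=v_1+\cdots+v_{|q|}$ is then a positive eigenvector of $V^I+F_q$ for the same $\nu$, hence its dominant eigenvalue. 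Your route buys more: it identifies the \emph{entire} spectrum of $\cB_q$ as that of $V^I+F_q$ together with $|q|-1$ copies of the (Hurwitz) spectrum of $V^I$, and it sidesteps the Perron--Frobenius positivity of the eigenvector of $\cB_q$, which the paper asserts but which strictly requires an irreducibility hypothesis not made anywhere; the paper's argument is shorter and more structural (it makes visible why only the aggregated population $x_q$ matters), but yours is the more airtight of the two. Both then invoke the van den Driessche--Watmough threshold for $F_q+V^I$ exactly as in Theorem~\ref{th1}. One caution: in the case $\cR_0<1$ with \eqref{eq188} holding only as an inequality, the semisimplicity of the zero eigenvalue of $\cJ_{UU}(\cX^*)$ does \emph{not} by itself yield Lyapunov stability of the nonlinear system; you must, as the paper does, rerun the argument from the proof of Theorem~\ref{th1} showing that $\bfo_{n_U}\t\delta x_p^U$ stays bounded for each subpopulation while the remaining uninfected coordinates decay. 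Since you already defer to that proof for the asymptotic-stability case, this is a presentational slip rather than a real gap, but the marginal-stability claim should explicitly lean on that nonlinear estimate rather than on the spectrum alone.
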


Recall that the notation $\partial_{x^I}$ is defined in~\eqref{eq195}.

While apparently complicated, the result enunciated in Theorem~\ref{th2} is in fact quite natural.
On the one hand, formula~\eqref{eq59} states that the basic reproduction number of the system is the largest of the basic reproduction numbers of the (sub-)systems describing the infection spread in each independent class (location) $q\in\cQ$.
On the other hand, due to perfect mixing within every class, formula~\eqref{eq305} states that this value may be computed for any class $q\in\cQ$ as the basic reproduction number corresponding to a {\em unique} population at the equilibrium point $x_q^* = \sum_{p\in q} x_p^*$.
In this respect, it is important to notice that  $F_q$ in~\eqref{eq305b} is an $n_I\times n_I$ matrix, while the block $\cB_q$ corresponding to the class $q$ of the Jacobian matrix defined in~\eqref{eq78b} is a $n_I|q| \times n_I|q|$ matrix.

Notice that when $\cQ$ contains a unique class gathering all subpopulations, then Theorem~\ref{th2} reduces to the case, treated in Theorem~\ref{th1}, of a unique subpopulation.
This is a consequence of the fact that what matters to portray the asymptotic evolution at the scale of a class $q\in\cQ$, is exclusively the total population $x_q$ present therein.

\begin{proof}[Proof of Theorem~\rm\ref{th2}]
\mbox{}

\noindent $\bullet$
Due to~\eqref{eq1}, at equilibrium, every subpopulation $p\in\cP$ evolves uncoupled, according to
\[
\dot x_p = f(x_p,0_{n_\lambda}).
\]
It is therefore clear that the set of equilibrium points of~\eqref{eq1}-\eqref{eq300} is exactly the Cartesian product $\Xbf_{\eq}^{|\cP|}$.

\noindent $\bullet$
Let us now consider the Jacobian matrix.
As $\cJ_{IU}(\cX^*)$ is null, one  has
\[
s(\cJ(\cX^*)) = \max\{s(\cJ_{II}(\cX^*)),s(\cJ_{UU}(\cX^*)) \},
\]
and one is led to consider the two diagonal blocks $\cJ_{II}(\cX^*)$ and $\cJ_{UU}(\cX^*)$.
We will exploit the block diagonal structure of these two matrices.

\noindent $\bullet$
Let us compute first the value of the dominant eigenvalues of the diagonal blocks $\cB_q$ of $\cJ_{II}(\cX^*)$, $q\in\cQ$.
For this, let us consider a subpopulation class $q$.
We will assess the dominant eigenvalue of the corresponding matrix $\cB_q$ and show that it is equal to $\cR_{0,q}$ defined in~\eqref{eq305a}.
Let $\nu\in\Rset$ and the vector
\[
\begin{pmatrix}
v_1 \\ \vdots \\ v_{|q|}
\end{pmatrix} \in \Rset_+^{|q|n_I}
\]
be respectively the dominant eigenvalue and a positive right-eigenvector of the matrix $\cB_q$ defined in~\eqref{eq78b}.
In other words,
%
%
\begin{equation}
\label{eq381}
\left(
(I_{|q|} \otimes V^I )
+ \begin{pmatrix}
\partial_{x^I} \left[
F_+(\lambda_q(x))x^{U*}_{q,1}
\right](x^*_q)\\
\vdots \\
\partial_{x^I} \left[
F_+(\lambda_q(x))x^{U*}_{q,|q|}
\right](x^*_q)
\end{pmatrix} (\bfo_{|q|}\t \otimes I_{n_I})
\right)
\begin{pmatrix}
v_1 \\ \vdots \\ v_{|q|}
\end{pmatrix}
= \nu \begin{pmatrix}
v_1 \\ \vdots \\ v_{|q|}
\end{pmatrix}.
\end{equation}

The map $x^U\mapsto F_+(\lambda)x^U$ is linear.
Therefore, due to the fact that $x_q^* = x_{q,1}^*+\dots+x_{q,|q|}^*$, so that $x_q^{U*} = x_{q,1}^{U*}+\dots+x_{q,|q|}^{U*}$, one has
\[
(\bfo_{|q|}\t \otimes I_{n_I} )
\begin{pmatrix}
\partial_{x^I} \left[
F_+(\lambda_q(x))x^{U*}_{q,1}
\right](x^*_q)\\
\vdots \\
\partial_{x^I} \left[
F_+(\lambda_q(x))x^{U*}_{q,|q|}
\right](x^*_q)
\end{pmatrix}
= \sum_{i=1}^{|q|}
\partial_{x^I} \left[
F_+(\lambda_q(x))x^{U*}_{q,i}
\right](x^*_q)
= \partial_{x^I} \left[
F_+(\lambda_q(x))x^{U*}_q
\right](x^*_q).
\]

The emergence of an epidemic in the class $q$ depends only upon the total population present therein and upon the local conditions of the infection, but not upon the division of this class in several subpopulations: the stability of the DFE is indeed a property of the class itself, independent of its constituting subpopulations.
In coherence with this intuition, we now exhibit a dominant eigenvector at the level of the whole population.
Using the previous computations, left-multiplying formula~\eqref{eq381} by the matrix $(\bfo_{|q|}\t \otimes I_{n_I} ) \in\Rset^{n_I\times |q|n_I}$ yields the identity:
\[
\left(
(\bfo_{|q|}\t \otimes V^I )
+
\partial_{x^I} \left[
F_+(\lambda_q(x))x^{U*}_q
\right](x^*_q)
(\bfo_{|q|}\t \otimes I_{n_I})
\right)
\begin{pmatrix}
v_1 \\ \vdots \\ v_{|q|}
\end{pmatrix}
= \nu (\bfo_{|q|}\t \otimes I_{n_I} ) \begin{pmatrix}
v_1 \\ \vdots \\ v_{|q|}
\end{pmatrix},
\]
that is, for any $q\in\cQ$,
\[
\left(
V^I +
\partial_{x^I} \left[
F_+(\lambda_q(x))x^{U*}_q
\right](x^*_q)
\right)
v_q
= \nu\, v_q,\qquad
\text{ with }\
v_q := v_1 + \dots + v_{|q|}.
\]
The vector $v_q$ is a positive eigenvector, it is therefore associated to the dominant eigenvalue $\nu$.
The matrix involved is the same than the one appearing in Theorem~\ref{th1}, and the same handling yields the formulas in~\eqref{eq305}, similar to~\eqref{eq355}.
The basic reproduction number of the system then appears as the largest of the dominant eigenvalues of the diagonal blocks $\cB_q$ of $\cJ_{II}(\cX^*)$, $q\in\cQ$.

\noindent $\bullet$
Let us now consider the matrix $\cJ_{UU}(\cX^*)$, expressed in~\eqref{eq78d}.
Each of the $|\cP|$ blocks possesses a unique eigenvector associated to the eigenvalue $0$, therefore $0$ is an eigenvalue of $\cJ_{UU}(\cX^*)$ whose algebraic and geometrical multiplicities are both equal to $|\cP|$.
The evolution in each class of $\cQ$ is independent of what happens in the other ones, and conducting the same analysis than in the proof of Theorem~\ref{th1} gives the same stability result, depending on whether $\cR_0$ given in~\eqref{eq59} is smaller or larger than 1.
This achieves the proof of Theorem~\ref{th2}.
\end{proof}

\subsection{Basic reproduction number of general epidemiological models in commuting populations}
\label{se62}

We study now the behaviour of the solutions of an epidemiological model in commuting population given by~\eqref{eq2}-\eqref{eq300}, for a $T$-periodic admissible switching signal $\cR(\cdot)$ with interswitch durations in a given set $\Lambda_m$ defined in~\eqref{eq58}, assuming Assumptions~\ref{as1},~\ref{as3},~\ref{as4} fulfilled.

We first show in Section~\ref{se621} that the disease-free periodic solutions are indeed equilibrium points.
We then extend in Section~\ref{se622} the framework introduced in Section~\ref{se511} for representing the state variable, and state the main result in Section~\ref{se523}, together with an important corollary.
Section~\ref{se5235} provides a method to compute more easily the quantities involved.
Finally a simple example is displayed and analyzed in Section~\ref{se524}, in order to make visible the different steps.

\subsubsection{Equilibrium states of a periodic commuting population model}
\label{se621}

We are interested in modelling and analysing the effects of periodic `commutations'.
The latter are specific variations in time of the considered system.
Having a more acute look at the formulas~\eqref{eq2} with $T-$periodic switching signal, one sees that {\em the time-variation only affects the definition of the (piecewise constant) force of infection}, namely the subpopulation mixing on each time interval.

Consider a disease-free evolution of system~\eqref{eq2}-\eqref{eq300}, that is one for which $x^I(t) = 0_{n_I}$ and $\lambda_{[p]_{\cR(t)}}(t) = 0_{n_\lambda}$ for any $t\in\cI$.
Seen from the subpopulation $p\in\cP$, this implies (see formula~\eqref{eq333})
\[
\dot x_p = f(x_p(t),0_{n_\lambda}) = \begin{pmatrix}  0_{n_I}\\ V^U x_p \end{pmatrix},\qquad t\in\cI.
\]
The only changes considered here are the commutations.
The latter modify the mixing of the subpopulations, and possibly the value of the components of the force of infection.
However, the structure of the model is not touched, and in particular the non-infective passages from a compartment to another are not modified along time.
In consequence, contrary e.g.~to what happens with the settings developed to analyse seasonal phenomena (see for example~the papers~\cite{Bacaer:2007aa,Bacaer:2007ab,Wang:2008aa}), the disease-free attractors are all the equilibrium points in the space $\Rset_+^{|\cP|n}$.
See more details in the first part of Theorem~\ref{th5}.



\subsubsection{State variable and Jacobian matrix representation for commuting population model}
\label{se622}


We extend here the notational setting introduced in Section~\ref{se511}, in order to accommodate the changes of partition from a maximal subinterval of $\cI$ to the next one that model the commutations.
Denote $\cR(\cI_k)$ the partition in force during any maximal subinterval of $\cI_k$.
The class of subpopulation $p$ during this period of time will be denoted accordingly $[p]_{\cR(\cI_k)}$.

For simplicity we assume the existence of a `reference ordering of the subpopulations', pre-existing the infection; and, for any union of intervals $\cI_k$, $k=1,\dots,m$, of another lexical ordering such as the ones presented in Section~\ref{se511}, having the property of attributing contiguous ranks to the subpopulations located in the same class.
Denote $\cM_k\in\Rset^{|\cP|\times|\cP|}$ the permutation matrix allowing to move from the reference ordering to the ordering attached to $\cI_k$.
In other words, the subpopulation with rank $i$ in the initial ordering, is ranked $i'$ in the new one, such that $\cM_k e_i = e_{i'}$ during $\cI_k$, where $e_i, e_{i'}$ are respectively the $i$-th and $i'$-th vectors of the canonical basis of $\Rset^{|\cP|}$.

Consider a given DFE $\cX^*\in\Rset_+^{n|\cP|}$.
The Jacobian matrix of system~\eqref{eq2}-\eqref{eq300} is constant on any interval in $\cI_k$.
Referring to formula~\eqref{eq47}, its value when expressed according to the subpopulation ordering adapted to the partition $\cR(\cI_k)$, will be denoted $\cJ_k(\cX^*)$, and the corresponding nonzero blocks $\cJ_{II,k}(\cX^*), \cJ_{UI,k}(\cX^*), \cJ_{UU,k}(\cX^*)$.
{\em In the reference ordering,} the Jacobian matrix $\hcJ_k(\cX^*)$ is then equal to
\begin{subequations}
\label{eq599}
\begin{gather}
\label{eq599a}
\hcJ_k(\cX^*)
= \begin{pmatrix} \hcJ_{II,k}(\cX^*) & \hcJ_{IU,k}(\cX^*) \\ \hcJ_{UI,k}(\cX^*) & \hcJ_{UU,k}(\cX^*) \end{pmatrix},\\
\label{eq599b}
\hcJ_{\eta\eta',k}(\cX^*) := (\cM_k\t \otimes I_{n_\eta}) \cJ_{\eta\eta',k}(\cX^*) (\cM_k\otimes I_{n_{\eta'}}),\qquad
\eta,\eta'\in\{I,U\}.
\end{gather}
\end{subequations}
In particular, one has $\hcJ_{IU,k}(\cX^*)=\cJ_{IU,k}(\cX^*)=0_{n_I|\cP|\times n_U|\cP|}$ at any equilibrium point $\cX^*$.

\subsubsection{DFE stability for a commuting population model}
\label{se523}

With the previous notations, it is now possible to state the main result of the paper.

\begin{theorem}[Basic reproduction number and stability for epidemic models in commuting populations]
\label{th5}
Let $\cP$ be a finite set, $\cR(\cdot)$ a $T$-periodic admissible switching signal with set of switching times $\Rset_+\setminus\cI$ according to pattern \eqref{eq58}, and $\cQ(t) := \cP\backslash\cR(t)$ the corresponding set of equivalence classes at time $t\in\cI$.
Let $f$ be given as in~\eqref{eq300}
and $\lambda_q$, $q\in\cQ(\cI_k)$, $k=1,\dots,m$, fulfilling Assumptions \ref{as1}, \ref{as3}, \ref{as4}.
Then the following properties hold.

\begin{itemize}
\item
The disease-free $T$-periodic solutions of system \eqref{eq2}-\eqref{eq300} are exactly the equilibrium points of the Cartesian product $\Xbf_{\eq}^{|\cP|}$, where $\Xbf_{\eq}$, defined in~\eqref{eq20}, is the set of equilibrium points for an isolated subpopulation.
\item
For any nonzero disease-free equilibrium point $\cX^*\in\Xbf_{\eq}^{|\cP|}$ of system \eqref{eq2}-\eqref{eq300}, let
\begin{equation}
\label{eq559}
\cR_0 := \rho\left(
e^{\hcJ_{II,m}(\cX^*)(\tau_m-\tau_{m-1})} \dots\ e^{\hcJ_{II,2}(\cX^*)(\tau_2-\tau_1)}\ e^{\hcJ_{II,1}(\cX^*)(\tau_1-\tau_0)}
\right),
\end{equation}
for the $|\cP|n_I\times |\cP|n_I$ Jacobian matrices $\hcJ_{II,k}(\cX^*)$, $k=1,\dots,m$, defined in~\eqref{eq599}.
The following assertions are true:
\begin{itemize}
\item
if $\cR_0>1$, then $\cX^*$ is unstable;
\item
if $\cR_0<1$, then 
$\cX^*$ is marginally stable.
Moreover, if the first identity in \eqref{eq188} holds {\em with an equality}, then $\cX^*$ is locally asymptotically stable.
\end{itemize}
\end{itemize}
\end{theorem}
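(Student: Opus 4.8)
The plan is to analyze the stability of the disease-free equilibrium of the $T$-periodic switched system~\eqref{eq2}-\eqref{eq300} by a Floquet-type argument on the period (Poincaré) map, reducing the non-borderline cases to the spectral radius of a monodromy matrix and handling the borderline case by transcribing the hands-on argument of Theorem~\ref{th1}.

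For the first bullet, I would use that along a disease-free evolution $x_p^I\equiv 0_{n_I}$, hence $\lambda_{[p]_{\cR(t)}}\equiv 0_{n_\lambda}$ by Assumption~\ref{as4}, so every subpopulation obeys the \emph{autonomous} equation $\dot x_p^U = M x_p^U$ with $M := V^U\begin{pmatrix} 0_{n_I\times n_U}\\ I_{n_U}\end{pmatrix}$, independently of the switching signal. As recalled in the proof of Theorem~\ref{th1}, $M$ is Metzler with stability modulus $0$ attained at the simple eigenvalue $0$, all other eigenvalues having negative real part; hence $e^{MT}$ has $1$ as a simple eigenvalue and all other eigenvalues strictly inside the unit disk. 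A $T$-periodic disease-free solution therefore satisfies $e^{MT}x_p^U(0) = x_p^U(0)$, i.e.~$x_p^U(0)\in\ker M$, so that $x_p$ is an equilibrium lying in $\Xbf_s\cap\ker V^U = \Xbf_{\eq}$ by~\eqref{eq42}; and conversely every element of $\Xbf_{\eq}^{|\cP|}$ is a constant, hence $T$-periodic, disease-free solution.

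For the second bullet, fix a nonzero DFE $\cX^*\in\Xbf_{\eq}^{|\cP|}$ and consider the $C^1$ period map $P:\cX(0)\mapsto\cX(T)$, which fixes $\cX^*$. On each interval of $\cI_k$ the linearization at $\cX^*$ is $\dot{\delta\cX} = \hcJ_k(\cX^*)\delta\cX$, which by~\eqref{eq599} and the vanishing of $\hcJ_{IU,k}(\cX^*)$ at a DFE is block lower-triangular; hence $DP(\cX^*)$ is the block lower-triangular monodromy matrix
\[
\Psi := e^{\hcJ_m(\cX^*)(\tau_m-\tau_{m-1})}\cdots\, e^{\hcJ_1(\cX^*)(\tau_1-\tau_0)},
\]
whose diagonal blocks are $\Psi_{II}$, equal to the matrix in~\eqref{eq559} so that $\cR_0=\rho(\Psi_{II})$, and $\Psi_{UU}$. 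Since by~\eqref{eq78d} (valid for \emph{every} partition), together with~\eqref{eq599b} and $\cM_k\t\cM_k = I_{|\cP|}$, one has $\hcJ_{UU,k}(\cX^*) = I_{|\cP|}\otimes M$ for every $k$, the durations telescope and $\Psi_{UU} = I_{|\cP|}\otimes e^{MT}$, which has $1$ as a semisimple eigenvalue of multiplicity exactly $|\cP|$ and all other eigenvalues strictly inside the unit disk. Consequently $\mathrm{spec}(\Psi) = \mathrm{spec}(\Psi_{II})\cup\mathrm{spec}(\Psi_{UU})$ and $\rho(\Psi) = \max\{\cR_0,1\}$. If $\cR_0>1$ then $\rho(\Psi)>1$, so $\cX^*$ is an unstable fixed point of $P$, hence an unstable equilibrium. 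If $\cR_0<1$, then $\rho(\Psi_{II})<1$ while $\rho(\Psi)=1$, so linearization is inconclusive and one argues by hand as in the proof of Theorem~\ref{th1}: since $\lambda$ vanishes on $\Xbf_s=\{\cX^I=0\}$ one has $\|\lambda_q(x)\|\le C\|x^I\|$ near the DFE, so the $\delta\cX^I$-equation reads $\dot{\delta\cX^I} = \hcJ_{II,k}(\cX^*)\delta\cX^I + R_k(\cX)$ with $\|R_k(\cX)\|\le C\|\delta\cX^I\|\,(\|\delta\cX^I\|+\|\delta\cX^U\|)$, and, the switched linear part having monodromy $\Psi_{II}$ of spectral radius $\cR_0<1$ hence being uniformly exponentially stable, a Gronwall estimate yields $\|\delta\cX^I(t)\|\le ce^{-kt}\|\delta\cX(0)\|$ as long as the trajectory stays in a small enough neighborhood. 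The mass-balance computation of Section~\ref{se2}, using~\eqref{eq189} and $\bfo_{n_U}\t M = 0_{n_U}\t$, then gives $\bfo_n\t\dot x_p = \left(\bfo_{n_I}\t V^I + \bfo_{n_U}\t V^U\begin{pmatrix} I_{n_I}\\ 0_{n_U\times n_I}\end{pmatrix}\right)x_p^I$ for every $p$ and every partition, so $\bfo_{n_U}\t\delta x_p^U$ stays bounded, while the part of $\delta x_p^U$ transverse to the neutral eigendirection of $M$ is governed by the autonomous Hurwitz restriction of $M$ with exponentially decaying forcing; combining these as in Theorem~\ref{th1} shows $\|\delta\cX(t)\|$ never reaches a prescribed small $\varepsilon$, whence marginal stability. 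Finally, if the first identity in~\eqref{eq188} holds with equality, then $\bfo_n\t\dot x_p\equiv 0$ for each $p$, yielding $|\cP|$ independent conserved quantities whose elimination produces a $|\cP|(n-1)$-dimensional $T$-periodic system whose monodromy carries exactly the $|\cP|(n-1)$ eigenvalues of $\Psi$ of modulus strictly below $1$; its DFE, and hence $\cX^*$ in the full system, is therefore locally asymptotically stable.

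The main obstacle will be the marginal case $\cR_0<1$: preserving the exponential decay of $\delta\cX^I$ despite the non-decaying $\delta\cX^U$ — which works only because every term of the remainder $R_k$ carries a factor $\|\delta\cX^I\|$, a consequence of Assumption~\ref{as4} — and then splicing the conservation bound together with the Hurwitz decay of the reduced non-infected coordinates, now in a switched periodic setting with Kronecker-structured blocks. All the remaining steps are routine adaptations of Theorem~\ref{th1} and of the block computations preceding~\eqref{eq599}.
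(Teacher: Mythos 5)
Your proposal is correct and takes essentially the same route as the paper, whose own proof of Theorem~\ref{th5} is a short sketch deferring to the block structure of $\hcJ_k(\cX^*)$ and to ``arguments identical to the demonstration of Theorem~\ref{th2}'' (hence of Theorem~\ref{th1}) for the marginal and asymptotically stable cases. Your write-up simply fills in those deferred details — the identification $\Psi_{UU}=I_{|\cP|}\otimes e^{MT}$, the Gronwall estimate on $\delta\cX^I$ using that the remainder carries a factor $\|\delta\cX^I\|$, the mass-balance bound, and the reduction by the $|\cP|$ conserved quantities — faithfully and correctly.
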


Recall that the notion of admissible switching signal and the notation $\cI$ have been defined in Definition~\ref{de2}.
Computing exactly $\cR_0$ involves computing the spectral radius of a matrix of size $|\cP|n_I\times |\cP|n_I$.

\begin{remark}

The computation of $\cR_0$ by~\eqref{eq559} is made easier in the case where, in spite of the commutations, the infection evolves independently in completely disjoint subgroups of the population.
This is indeed the case if for some subpopulations $p,p'\in\cP$, there is no positive integer $l$ and finite sequence $p_0, p_1, \dots, p_l$ of elements of $\cP$ such that $p_0 := p$, $p_l:=p'$ and for any $i=0,\dots, l-1$, there exists $k\in\{1,\dots,m\}$ such that
\[
[p_i]_{\cR(\cI_k)} = [p_{i+1}]_{\cR(\cI_k)}.
\]
(Alternatively, this means that the transitive closure of the union of the relations $\cR(\cI_k)$, $k=1,\dots,m$, has at least two distinct classes of equivalence.)
When such disconnection occurs, the infection initially present in the subpopulation $p$ cannot move to the subpopulation $p'$ and vice versa, and the matrix product in~\eqref{eq559} involves several diagonal blocks.
As is the case for Theorem~{\rm\ref{th2}}, the value of $\cR_0$ then appears as the maximal value of the corresponding quantities computed in each connected component of $\cP$ (that is, on each class of equivalence of the transitive closure of the union of the relations $\cR(\cI_k)$, $k=1,\dots,m$).
%
\end{remark}

\begin{proof}[Proof of Theorem~\rm\ref{th5}]
\mbox{}

\noindent $\bullet$
Let us first consider a disease-free trajectory for~\eqref{eq2}-\eqref{eq300}.
As observed previously, the coupling between equations describing the evolution of different subpopulations is manifested only through the infection process: due to~\eqref{eq2}, every subpopulation $p\in\cP$ evolves according to
\[
\dot x_p = f(x_p,0_{n_\lambda}).
\]
It is therefore clear that the set of equilibrium points of~\eqref{eq2}-\eqref{eq300} is exactly the Cartesian product $\Xbf_{\eq}^{|\cP|}$.

More precisely, for any disease-free periodic trajectory, any subpopulation $p\in\cP$ undergoes a stationary evolution, governed (see~\eqref{eq300c}) by the evolution
\[
\dot x^U_p = V^U \begin{pmatrix} 0_{n_I \times n_U} \\ I_{n_U} \end{pmatrix} x^U_p.
\]
By Assumption~\ref{as3}, the matrix present in this formula has 0 as simple eigenvalue, and all other eigenvalues have negative real parts.
Therefore, every trajectory converges towards an equilibrium point, and no other periodic solution exists.

\noindent $\bullet$
As established in Section~\ref{se622}, the Jacobian matrix of the system is constant on every maximal subinterval of $\cI$, and equal to  $\hcJ_k(\cX^*)$ on any $\cI_k$, $k=1,\dots,m$.
Using the block-diagonal structure and arguments identical to the demonstration of Theorem~\ref{th2}, one shows the marginal stability of the uninfected components; while the stability of the infected components depends upon the position  with respect to 1 of the spectral radius of the monodromy matrix.
The latter is exactly $\cR_0$.
This achieves the proof of Theorem~\ref{th5}.
\end{proof}

We now present an important corollary of Theorem~{\rm\ref{th5}}.
The following result shows that, when the conditions of transmission are identical in the whole population (in a sense made precise in the statement), then the disease evolution is identical to that of a perfectly mixed population gathering all the subpopulations.

\begin{corollary}[Homogeneous density-independent transmission conditions]
\label{co0}
Under the conditions of Theorem~{\rm\ref{th5}}, let $\cX^*\in\Xbf_{\eq}^{|\cP|}$ be a nonzero disease-free equilibrium point of system \eqref{eq2}-\eqref{eq300} such that
\begin{equation}
\label{eq789}
\forall k\in\{1,\dots, m\},\quad \forall q,q'\in\cQ(\cI_k),\quad
\partial_{x^I} \left[
F_+(\lambda_q(x))x^{U*}_q
\right](x^*_q)
= \partial_{x^I} \left[
F_+(\lambda_{q'}(x))x^{U*}_{q'}
\right](x^*_{q'})
= : G_k.
\end{equation}
%
Then, the basic reproduction number given in~\eqref{eq559} is equal to
\begin{equation}
\label{eq560}
\cR_0 = \rho\left(
e^{\left(
V^I+G_m
\right)
(\tau_m-\tau_{m-1})} \dots\ e^{\left(
V^I+G_1
\right)
(\tau_1-\tau_0)}
\right).
\end{equation}
\end{corollary}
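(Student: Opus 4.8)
The plan is to reduce the monodromy matrix in~\eqref{eq559} to the simpler expression~\eqref{eq560} by exhibiting, on each interswitch interval $\cI_k$, a suitable block structure on $\hcJ_{II,k}(\cX^*)$ and then applying to each exponential factor the same eigenvector reduction already used in the proof of Theorem~\ref{th2}. First I would recall that, by~\eqref{eq78a}--\eqref{eq78b} (applied with the partition $\cR(\cI_k)$ and then conjugated back to the reference ordering via~\eqref{eq599b}), the matrix $\cJ_{II,k}(\cX^*)$ is block-diagonal over the classes $q\in\cQ(\cI_k)$, with the block attached to a class $q$ equal to $\cB_q$ as in~\eqref{eq78b}. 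Under hypothesis~\eqref{eq789}, the second term of~\eqref{eq78b} becomes, for every class $q\in\cQ(\cI_k)$, the matrix whose $i$-th block-row is $G_k(\bfo_{|q|}\t\otimes I_{n_I})$ — that is, $\cB_q = (I_{|q|}\otimes V^I) + (\bfo_{|q|}\otimes I_{n_I})G_k(\bfo_{|q|}\t\otimes I_{n_I})$, independently of $q$ except through the size $|q|$. The key algebraic observation is then that a matrix of this form has spectrum $\{\text{eigenvalues of }V^I+|q|G_k\}\cup\{\text{eigenvalues of }V^I \text{ with multiplicity } |q|-1\}$; more precisely, its exponential acts on the "aggregated" variable $v_q = v_1+\dots+v_{|q|}$ exactly as $e^{(V^I+G_k)t}$ — wait, one must be careful about the factor $|q|$, and I would instead work directly at the level of the sum, as in Theorem~\ref{th2}: left-multiplying an eigen-relation by $\bfo_{|q|}\t\otimes I_{n_I}$ collapses $\cB_q$ to $V^I+G_k$ acting on $v_q$.

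The core of the argument is the following. Let $w$ be a dominant (nonnegative) right-eigenvector of the full monodromy matrix $\Psi := e^{\hcJ_{II,m}(\tau_m-\tau_{m-1})}\cdots e^{\hcJ_{II,1}(\tau_1-\tau_0)}$, associated to the eigenvalue $\cR_0=\rho(\Psi)$; such an eigenvector exists because $\Psi$ is a product of exponentials of Metzler matrices (each $\hcJ_{II,k}$ is Metzler by Assumptions~\ref{as1},~\ref{as3}), hence is nonnegative, and Perron--Frobenius applies. The plan is to track the image of $w$ under each factor and show that the "aggregation map" intertwines each $e^{\hcJ_{II,k}t}$ with $e^{(V^I+G_k)t}$ acting on aggregated coordinates. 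Concretely, I would introduce the aggregation operator that, to a vector indexed by subpopulations, associates the vector indexed by the classes of $\cR(\cI_k)$ obtained by summing over each class; using the identity $(\bfo_{|q|}\t\otimes I_{n_I})\cB_q = (V^I+G_k)(\bfo_{|q|}\t\otimes I_{n_I})$ (exactly the computation carried out in the proof of Theorem~\ref{th2}, now with $\partial_{x^I}[F_+(\lambda_q(x))x^{U*}_q](x^*_q)$ replaced by the common value $G_k$), one gets that this aggregation operator intertwines $\cJ_{II,k}$ with the block-diagonal matrix $I_{|\cQ(\cI_k)|}\otimes(V^I+G_k)$ on the class-indexed space, hence also intertwines the exponentials. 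The subtlety is that the aggregation map changes from one interval to the next because the partition changes; so what one really obtains is that $\cR_0$ equals the spectral radius of a product $\prod_k e^{(V^I+G_k)(\tau_k-\tau_{k-1})}$ but on a space whose "shape" (number of classes) varies with $k$. To reconcile this with~\eqref{eq560}, which features no reference to any partition at all, I would argue that because the transitive closure over all intervals connects everything (or, if it does not, treat each connected component separately, as the preceding remark indicates), the relevant Perron eigenvector pushes forward consistently and the spectral radius collapses to $\rho(\prod_k e^{(V^I+G_k)(\tau_k-\tau_{k-1})})$ where now each factor acts on the single $n_I$-dimensional aggregated space of the whole population.

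A cleaner way to organize the last step, which I would prefer in the write-up, is to avoid carrying the partition-dependent intermediate spaces: instead, observe that the \emph{total-population} aggregation $\Sigma := \bfo_{|\cP|}\t\otimes I_{n_I} \in \Rset^{n_I\times n_I|\cP|}$ satisfies $\Sigma\,\hcJ_{II,k}(\cX^*) = (V^I+G_k)\,\Sigma$ for every $k$. Indeed, summing $\cB_q v = \nu v$-type relations over \emph{all} classes of $\cR(\cI_k)$ simultaneously, and using that $\Sigma$ restricted to each class is $\bfo_{|q|}\t\otimes I_{n_I}$ while the common value of the infection-derivative term is $G_k$, gives precisely $\Sigma\hcJ_{II,k} = (V^I+G_k)\Sigma$; here the conjugation by the permutation $\cM_k$ in~\eqref{eq599b} is harmless because $\Sigma$ sums over all subpopulations and is therefore invariant under reordering ($\Sigma(\cM_k\otimes I_{n_I}) = \Sigma$). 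Iterating, $\Sigma\,\Psi = \big(e^{(V^I+G_m)(\tau_m-\tau_{m-1})}\cdots e^{(V^I+G_1)(\tau_1-\tau_0)}\big)\Sigma$. Applying this to a Perron eigenvector $w$ of $\Psi$ gives that $\Sigma w$ is an eigenvector of the right-hand-side product with eigenvalue $\cR_0$, provided $\Sigma w\neq 0$, which holds because $w\geq 0$, $w\neq 0$ and $\Sigma$ sums nonnegative entries; hence $\cR_0$ is an eigenvalue of the matrix in~\eqref{eq560}, and since the latter is a product of exponentials of Metzler matrices its spectral radius is attained at a nonnegative eigenvector $z$, which one lifts to an eigenvector of $\Psi$ of the same eigenvalue by a dual intertwining $\hcJ_{II,k}(\cX^*)\,\Sigma\t\! = \Sigma\t(V^I+G_k)$ — no, this last dual relation is false in general, so instead I would conclude the reverse inequality by comparing Perron roots: $\rho$ of the right-hand side equals $\Sigma$-image of $\rho(\Psi)$ from one side, and a symmetric comparison (every nonnegative eigenvector of the reduced product lifts to the "uniformly spread" nonnegative vector, which is mapped by $\Psi$ into the cone and whose $\Sigma$-image scales by the reduced Perron root) gives the matching bound.

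The main obstacle I anticipate is making the two-sided comparison fully rigorous, i.e. proving not just that $\cR_0$ is \emph{an} eigenvalue of the reduced product but that it is its \emph{spectral radius}: the intertwining $\Sigma\Psi=(\text{reduced product})\Sigma$ is one-directional and only yields $\rho(\text{reduced}) \geq \cR_0$ for free; the reverse needs the Metzler/Perron structure of both sides and a careful choice of test vectors (e.g. lifting the reduced Perron eigenvector by spreading it uniformly over the subpopulations of each class, checking that each $e^{\hcJ_{II,k}t}$ preserves this "uniformly spread" subspace — which it does, precisely because under~\eqref{eq789} $\cB_q$ maps the uniform subspace into itself — and then reading off that $\Psi$ acts on that invariant subspace exactly as the reduced product). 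Once that invariant-subspace argument is in place, Perron--Frobenius on the irreducible-enough pieces, or the connected-component decomposition otherwise, closes the equality $\cR_0=\rho(\text{reduced product})$, which is~\eqref{eq560}.
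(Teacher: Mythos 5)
Your first two steps are sound and coincide with the paper's core computation: the aggregation identity $\Sigma\,\hcJ_{II,k}(\cX^*)=(V^I+G_k)\,\Sigma$ with $\Sigma:=\bfo_{|\cP|}\t\otimes I_{n_I}$ (the paper writes the same thing tested against vectors, as $(\bfo_{|\cP|}\otimes v)\t e^{\hcJ_{II,k}(\cX^*)t}=(\bfo_{|\cP|}\t\otimes v\t e^{(V^I+G_k)t})$, using $\bfo_{|\cP|}\t\cM_k=\bfo_{|\cP|}\t$ to dispose of the permutations), and the push-forward of a nonnegative Perron eigenvector $w$ of $\Psi$, which shows that $\cR_0$ is an eigenvalue of the reduced product $R$ and hence $\cR_0\le\rho(R)$. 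The gap is in the reverse inequality. Your proposed lift of the reduced Perron eigenvector $z$ by uniform spreading relies on the claim that $\cB_q$ maps the subspace $\bfo_{|q|}\otimes\Rset^{n_I}$ into itself under~\eqref{eq789}; this is false. Indeed
\[
\cB_q(\bfo_{|q|}\otimes z)
= \bfo_{|q|}\otimes V^Iz
+ |q|\begin{pmatrix}
\partial_{x^I}[F_+(\lambda_q(x))x^{U*}_{q,1}](x^*_q)\,z\\ \vdots \\ \partial_{x^I}[F_+(\lambda_q(x))x^{U*}_{q,|q|}](x^*_q)\,z
\end{pmatrix},
\]
and the individual blocks $\partial_{x^I}[F_+(\lambda_q(x))x^{U*}_{q,i}](x^*_q)$ depend on the equilibrium sizes $x^{U*}_{q,i}$ of the separate subpopulations, which are in general all different; hypothesis~\eqref{eq789} only constrains their \emph{sum} over $i$. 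So the image is not uniformly spread, the invariant-subspace argument collapses, and your two-sided comparison is not closed.

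The repair is the step you glimpsed and then discarded. The dual relation you rejected, $\hcJ_{II,k}(\cX^*)\Sigma\t=\Sigma\t(V^I+G_k)$, is indeed false (it is equivalent to the invariance above), but the relevant dual is simply the \emph{transpose} of your intertwining, $\hcJ_{II,k}(\cX^*)\t\,\Sigma\t=\Sigma\t(V^I+G_k)\t$, which holds automatically. It gives $(\bfo_{|\cP|}\otimes u)\t\,\Psi=(\bfo_{|\cP|}\otimes R\t u)\t$ for every $u\in\Rset^{n_I}$; taking $u$ a positive left Perron eigenvector of $R$ makes $\bfo_{|\cP|}\otimes u$ a \emph{positive left eigenvector} of the nonnegative matrix $\Psi$ with eigenvalue $\rho(R)$, and a positive left eigenvector of a nonnegative matrix is necessarily associated with its spectral radius, whence $\rho(R)=\rho(\Psi)=\cR_0$ in one stroke (your forward inequality then becomes superfluous). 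This left-eigenvector route is precisely the paper's proof of Corollary~\ref{co0}.
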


A major interest of Corollary~\ref{co0}, when it applies, is of course to reduce the determination of the basic offspring number to the computation of the spectral radius of a square matrix of size $n_I$.

To apply this result,
the conditions of transmission of the infection should be identical in all classes and at all times, in the sense precisely defined by~\eqref{eq789}.
The function $F_+$ being the same for the whole system (independent of the class and time), this condition relates only to the forces of infection $\lambda_q$.
Except in very special conditions where the sizes of the classes match exactly, one has to assume that, for any $k\in\{1,\dots, m\}$, $q\in\cQ(\cI_k)$, the function
\[
x^* \mapsto
\partial_{x^I} \left[
F_+(\lambda_q(x))x^{U*}
\right](x^*)
\]
is {\em constant on the cone $\Xbf_\eq$}.
In particular, due to the linearity of the first factor with respect to $x^{U*}$, one should have, for any scalar $\alpha > 0$,
 \[
\frac{\partial \left[
F_+(\lambda)\alpha x^{U*}
\right]}{\partial\lambda}
\frac{\partial \lambda_q}{\partial x^I}(\alpha x^*)
= \frac{\partial \left[
F_+(\lambda) x^{U*}
\right]}{\partial\lambda}
\frac{\partial \lambda_q}{\partial x^I}(x^*)
= \frac{\partial \left[
F_+(\lambda) \alpha x^{U*}
\right]}{\partial\lambda}
\frac{1}{\alpha}
\frac{\partial \lambda_q}{\partial x^I}(x^*).
\]
A natural condition is therefore that the partial derivative with respect to $x^I$ of the functions $\lambda_q$ defining the forces of infection, are homogeneous of degree $-1$.
It is sufficient for this that the functions $\lambda_q$ themselves are {\em homogeneous of degree $0$}, a characteristic of frequency-dependent transmission.

One shows without difficulty that~\eqref{eq789} is verified e.g.~when the functions $\lambda_q$ are all identical in every class $q\in\cQ(\cI_k)$, and constant on the cone $\Xbf_\eq$.
(This  constant function is homogeneous of degree $0$.)
Thus, Corollary~\ref{co0} establishes in particular that, if the conditions of the transmission are identical in all locations (classes), then the value of the basic reproduction number is identical {\em whatever the mixing conditions (that is, whatever the commutation pattern and the number and composition of the classes between two successive switchings)}.
Up to our knowledge, this property of invariance relatively to the mixing is new.
See an application to the SIR model in Section~\ref{se524}.

As a last remark, notice that when Corollary~\ref{co0} holds, the evolution of the infected compartments of any subpopulation $p\in\cP$ in the vicinity of $x^*_{I,p}$ is identical  on $\cI_k$ to that of the uncoupled autonomous linear equation $\dot x_{I,p} = (V^I+G_k) (x_{I,p}-x^*_{I,p})$.

\begin{proof}[Proof of Corollary~\rm\ref{co0}]
Let $v\in\Rset^{n_I}$, and $k\in\{1,\dots,m\}$.
For any $q\in\cQ(\cI_k)$, one has
\[
(\bfo_{|q|} \otimes v)\t \cB_q
= (\bfo_{|q|}\t \otimes v\t (V^I + F_q))
\]
for $\cB_q$ in~\eqref{eq78b}.

Then, left-multiplying the matrix $e^{\cJ_{II,k}(\cX^*)(\tau_k-\tau_{k-1})}$ by the vector $(\bfo_{|\cP|} \otimes v)\t$ yields
\begin{multline*}
(\bfo_{|\cP|} \otimes v)\t e^{\cJ_{II,k}(\cX^*)(\tau_k-\tau_{k-1})}\\
= \begin{pmatrix}
(\bfo_{|q_1|}\t \otimes v\t e^{(V^I + F_{q_1}) (\tau_k-\tau_{k-1})})
& \dots &
(\bfo_{|q_{|\cQ(\cI_k)|}|}\t \otimes v\t e^{(V^I + F_{q_{|\cQ(\cI_k)|}}) (\tau_k-\tau_{k-1})})
\end{pmatrix}.
\end{multline*}

We now use assumption~\eqref{eq789}, which states that $F_{q_1} = \dots = F_{q_{|\cQ(\cI_k)|}} = G_k$, to deduce
\[
(\bfo_{|\cP|} \otimes v)\t e^{\cJ_{II,k}(\cX^*)(\tau_k-\tau_{k-1})}
= (\bfo_{|\cP|}\t \otimes v\t e^{(V^I + G_k) (\tau_k-\tau_{k-1})}),
\]
as the number of elements in all classes is equal to the total number of subpopulations, see~\eqref{eq376}.
One thus obtains
\begin{equation}
\label{eq791}
(\bfo_{|\cP|} \otimes v)\t e^{\cJ_{II,k}(\cX^*)(\tau_k-\tau_{k-1})}
= (\bfo_{|\cP|} \otimes v)\t (I_{|\cP|} \otimes e^{(V^I + G_k) (\tau_k-\tau_{k-1})}).
\end{equation}

From the previous formula, one deduces the same property for $e^{\hcJ_{II,k}(\cX^*)}$.
In effect,~\eqref{eq599b} yields
\begin{eqnarray*}
(\bfo_{|\cP|} \otimes v)\t e^{\hcJ_{II,k}(\cX^*)(\tau_k-\tau_{k-1})}
& = &
(\bfo_{|\cP|} \otimes v)\t (\cM_k\t\otimes I_{n_I}) e^{\cJ_{II,k}(\cX^*)(\tau_k-\tau_{k-1})} (\cM_k\otimes I_{n_I}) \\
& = &
(\bfo_{|\cP|} \otimes v)\t e^{\cJ_{II,k}(\cX^*)(\tau_k-\tau_{k-1})} (\cM_k\otimes I_{n_I}) \\
& &
\qquad \text{(because $\bfo_{|\cP|} \t \cM_k =\bfo_{|\cP|} \t $, as $\cM_k$ is a permutation matrix)}\\
& = &
(\bfo_{|\cP|} \otimes v)\t (I_{|\cP|} \otimes e^{(V^I + G_k) (\tau_k-\tau_{k-1})}) (\cM_k\otimes I_{n_I})\\
& &
\qquad \text{(thanks to~\eqref{eq791})}\\
& = &
(\bfo_{|\cP|} \otimes v)\t (\cM_k \otimes e^{(V^I + G_k) (\tau_k-\tau_{k-1})})\\
& = &
(\bfo_{|\cP|} \otimes v)\t (I_{|\cP|} \otimes e^{(V^I + G_k) (\tau_k-\tau_{k-1})})\\
& &
\qquad \text{(because $\bfo_{|\cP|} \t \cM_k =\bfo_{|\cP|} \t $)}\\
& = &
(\bfo_{|\cP|}\t \otimes v\t e^{(V^I + G_k) (\tau_k-\tau_{k-1})})\\
& = &
(\bfo_{|\cP|} \otimes e^{(V^I + G_k)\t (\tau_k-\tau_{k-1})} v)\t.
\end{eqnarray*}

In order to finish the proof, one applies this property recursively for $k=m,\dots,1$:
\begin{eqnarray*}
\lefteqn{(\bfo_{|\cP|} \otimes v)\t e^{\hcJ_{II,m}(\cX^*)(\tau_m-\tau_{m-1})} \dots\ e^{\hcJ_{II,2}(\cX^*)(\tau_2-\tau_1)}\ e^{\hcJ_{II,1}(\cX^*)(\tau_1-\tau_0)}}\\
& = &
(\bfo_{|\cP|} \otimes e^{\left(
V^I+G_m
\right)\t
(\tau_m-\tau_{m-1})}v)\t
e^{\hcJ_{II,m-1}(\cX^*)(\tau_{m-1}-\tau_{m-2})} \dots\ e^{\hcJ_{II,2}(\cX^*)(\tau_2-\tau_1)}\ e^{\hcJ_{II,1}(\cX^*)(\tau_1-\tau_0)}\\
& = &
(\bfo_{|\cP|} \otimes e^{\left(
V^I+G_1
\right)\t
(\tau_1-\tau_0)}
\dots
e^{\left(
V^I+G_m
\right)\t
(\tau_m-\tau_{m-1})}
v)\t\\
& = &
(\bfo_{|\cP|} \otimes v)\t (I_{|\cP|} \otimes e^{\left(
V^I+G_m
\right)
(\tau_m-\tau_{m-1})} \dots\ e^{\left(
V^I+G_1
\right)
(\tau_1-\tau_0)}).
\end{eqnarray*}
Choose now for $v\in\Rset^{n_I}$ the positive eigenvector corresponding to the dominant eigenvalue of the positive matrix $e^{\left(
V^I+G_m
\right)
(\tau_m-\tau_{m-1})} \dots\ e^{\left(
V^I+G_1
\right)
(\tau_1-\tau_0)}$.
The vector $(\bfo_{|\cP|} \otimes v)\t$ is then a left-eigenvector of the matrix $e^{\hcJ_{II,m}(\cX^*)(\tau_m-\tau_{m-1})} \dots\ e^{\hcJ_{II,2}(\cX^*)(\tau_2-\tau_1)}\ e^{\hcJ_{II,1}(\cX^*)(\tau_1-\tau_0)}$.
Being positive, it corresponds to its dominant eigenvalue.
This shows~\eqref{eq560} and achieves the demonstration of Corollary~\ref{co0}.
\end{proof}

\subsubsection{Computing the exponentials of the blocks $\cB_q$ in the Jacobian matrix $\cJ_{II,k}(\cX^*)$}
\label{se5235}

The following result is useful to reduce the dimension of the matrix product that appears in the expression of $\cR_0$ in~\eqref{eq559}.

\begin{lemma}
\label{le87}
For $\cB_q$ defined in~\eqref{eq78b}, one has for any $t\geq 0$,
\begin{multline}
\label{eq288}
e^{\cB_qt}
= \left(
I_{|q|}\otimes e^{V^It}
\right)\\
+ \int_0^t
\begin{pmatrix}
e^{V^I(t-s)} \left.
\partial_{x^I}
[F_+(\lambda_q(x))x^{U*}_{q,1}]
\right|_{x=x^*_q}\\
\vdots\\
e^{V^I(t-s)} \left.
\partial_{x^I}
[F_+(\lambda_q(x))x^{U*}_{q,|q|}]
\right|_{x=x^*_q}
\end{pmatrix}
e^{\left(
V^I+\left.
\partial_{x^I} \left[
F_+(\lambda_q(x))x^{U*}_q
\right]
\right|_{x=x^*_q}
\right)
s} ds\
(\bfo_{|q|} \otimes I_{n_I})\t.
\end{multline}
In particular, for any infection model~\eqref{eq300} presenting a {\em unique} infected compartment ($n_I=1$), one has
\begin{equation}
\label{eq535}
e^{\cB_q t}
= e^{V^It} I_{|q|}
+ e^{V^It} \left(
e^{\left.
\partial_{x^I} \left[
F_+(\lambda_q(x))x^{U*}_q
\right]
\right|_{x=x^*_q} t}
-1
\right)
\begin{pmatrix}
 \frac{\left.
\partial_{x^I}
[F_+(\lambda_q(x))x^{U*}_{q,1}]
\right|_{x=x^*_q}}
{\left.
\partial_{x^I} \left[
F_+(\lambda_q(x))x^{U*}_q
\right]
\right|_{x=x^*_q}}\\
\vdots\\
 \frac{\left.
\partial_{x^I}
[F_+(\lambda_q(x))x^{U*}_{q,|q|}]
\right|_{x=x^*_q}}
{\left.
\partial_{x^I} \left[
F_+(\lambda_q(x))x^{U*}_q
\right]
\right|_{x=x^*_q}}
\end{pmatrix} \bfo_{|q|}\t.
\end{equation}
\end{lemma}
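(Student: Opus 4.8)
The plan is to exploit the fact that, beyond its first term $I_{|q|}\otimes V^I$, the matrix $\cB_q$ in~\eqref{eq78b} has ``block rank $n_I$'': writing
\[
\cB_q = A + R\,C,\qquad A := I_{|q|}\otimes V^I,\quad C := \bfo_{|q|}\t\otimes I_{n_I}\in\Rset^{n_I\times n_I|q|},
\]
where $R\in\Rset^{n_I|q|\times n_I}$ is the block column whose $i$-th block is $\partial_{x^I}[F_+(\lambda_q(x))x^{U*}_{q,i}]|_{x=x^*_q}$, this is precisely~\eqref{eq78b} since $(\bfo_{|q|}\t\otimes I_{n_I})=C$.

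First I would record two elementary identities. By linearity of $x^U\mapsto F_+(\lambda)x^U$ together with $x^{U*}_q=\sum_{i=1}^{|q|}x^{U*}_{q,i}$ (already used in the proof of Theorem~\ref{th2}), one has
\[
CR = \sum_{i=1}^{|q|}\partial_{x^I}\big[F_+(\lambda_q(x))x^{U*}_{q,i}\big]\big|_{x=x^*_q}
= \partial_{x^I}\big[F_+(\lambda_q(x))x^{U*}_q\big]\big|_{x=x^*_q} =: F_q,
\]
while the mixed-product rule for the Kronecker product gives $CA = \bfo_{|q|}\t\otimes V^I = V^I C$. Hence $C\cB_q = CA + (CR)\,C = (V^I+F_q)\,C$, and by an immediate induction $C\cB_q^{k}=(V^I+F_q)^{k}C$ for all $k\ge 0$; summing the exponential series yields the key relation $C\,e^{\cB_q t} = e^{(V^I+F_q)t}\,C$ for $t\ge 0$.

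The second step is the variation-of-constants (Duhamel) formula for $\dot\Psi=\cB_q\Psi=A\Psi+(R C)\Psi$, $\Psi(0)=I_{n_I|q|}$, treating $R C\,e^{\cB_q s}$ as a forcing term:
\[
e^{\cB_q t} = e^{At} + \int_0^t e^{A(t-s)}R\;C\,e^{\cB_q s}\,ds
= \big(I_{|q|}\otimes e^{V^I t}\big) + \Big(\int_0^t e^{A(t-s)}R\;e^{(V^I+F_q)s}\,ds\Big)C,
\]
using $e^{At}=I_{|q|}\otimes e^{V^I t}$ and the key relation. Since $e^{A(t-s)}=I_{|q|}\otimes e^{V^I(t-s)}$ and $R$ is a stack of $|q|$ blocks of size $n_I\times n_I$, left-multiplication by $e^{A(t-s)}$ simply multiplies each block of $R$ by $e^{V^I(t-s)}$, and $C=(\bfo_{|q|}\otimes I_{n_I})\t$; substituting $F_q=\partial_{x^I}[F_+(\lambda_q(x))x^{U*}_q]|_{x=x^*_q}$ reproduces~\eqref{eq288} verbatim.

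Finally, for $n_I=1$ everything is scalar: $V^I=v$, $F_q$, and $r_i:=\partial_{x^I}[F_+(\lambda_q(x))x^{U*}_{q,i}]|_{x=x^*_q}$ with $F_q=\sum_i r_i$; the inner integral becomes $r_i e^{vt}\int_0^t e^{F_q s}\,ds = r_i e^{vt}(e^{F_q t}-1)/F_q$, and pulling out $e^{vt}$ and $(e^{F_q t}-1)$ gives~\eqref{eq535}, the ratios $r_i/F_q$ being read as $0$ in the degenerate case $F_q=0$ (where every $r_i\ge 0$ vanishes, so the second term disappears). I do not anticipate a genuine obstacle: the only delicate point is the bookkeeping of the Kronecker/block identities ($CA=V^IC$ and the action of $e^{A(t-s)}$ on the column $R$) and verifying that the trailing factor $C=(\bfo_{|q|}\otimes I_{n_I})\t$ sits exactly where~\eqref{eq288} places it.
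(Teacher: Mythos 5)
Your proof is correct and follows essentially the same route as the paper: your intertwining relation $C\,e^{\cB_q t}=e^{(V^I+F_q)t}C$ is exactly the paper's observation that the aggregate $\by=(\bfo_{|q|}\otimes I_{n_I})\t y$ satisfies an autonomous closed equation, and your Duhamel step is the paper's variation-of-constants integration of the resulting block-triangular system. (Your explicit handling of the degenerate case $F_q=0$ in~\eqref{eq535} is a minor point the paper leaves implicit.)
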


Using~\eqref{eq288} to compute $e^{\cB_q t}$ amounts essentially to compute the $|q|+1$ matrices of size $n_I\times n_I$
\[
e^{V^It} \qquad \text{and} \qquad
\int_0^t
e^{-V^Is} \left.
\partial_{x^I}
[F_+(\lambda_q(x))x^{U*}_{q,i}]
\right|_{x=x^*_q}
e^{\left(
V^I+\left.
\partial_{x^I} \left[
F_+(\lambda_q(x))x^{U*}_q
\right]
\right|_{x=x^*_q}
\right)
s} ds,\quad i=1,\dots,|q|.
\]

Notice also that formula~\eqref{eq535}, which holds true when $n_I=1$, may be generalised for $n_I>1$ when the matrix $V^I$ commutes with every $\left.
\partial_{x^I}
[F_+(\lambda_q(x))x^{U*}_{q,i}]
\right|_{x=x^*_q}$, $i=1,\dots, |q|$.

\begin{proof}[Proof of Lemma~\rm\ref{le87}]

By definition, $e^{\cB_qt}$ is characterised by the fact that: for any $y(0)\in\Rset^{n_I|q|}$,
\[
y(t) = e^{\cB_qt}y(0),\quad t\geq 0 \qquad \Leftrightarrow \qquad \dot y = \cB_qy,\quad t\geq 0.
\]
Let $\by := (\bfo_{|q|} \otimes I_{n_I})\t y = y_1+\cdots+y_q$.
From the structure of $\cB_q$ in~\eqref{eq78b} and the fact that $x^{U*}_{q,1} +\cdots+ x^{U*}_{q,|q|}= x^{U*}_q$, one deduces, for any $i=1,\dots, |q|$,
\[
\dot \by = \left(
V^I + \left.
\partial_{x^I} \left[
F_+(\lambda_q(x))x^{U*}_q
\right]
\right|_{x=x^*_q}
\right) \by,\qquad
\dot y_i = V^Iy_i + \left.
\partial_{x^I} \left[
F_+(\lambda_q(x))x^{U*}_{q,i}
\right]
\right|_{x=x^*_q}
\by,
\]
which writes as the following triangular system
\[
\begin{pmatrix}
\dot \by \\ \dot y_i
\end{pmatrix}
= \begin{pmatrix}
V^I + \left.
\partial_{x^I} \left[
F_+(\lambda_q(x))x^{U*}_q
\right]
\right|_{x=x^*_q}
& 0_{n_I\times n_I} \\
 \left.
\partial_{x^I} \left[
F_+(\lambda_q(x))x^{U*}_{q,i}
\right]
\right|_{x=x^*_q}
& V^I
\end{pmatrix}
\begin{pmatrix}
\by \\ y_i
\end{pmatrix}.
\]
By integration of the first equation one gets
\[
\by(t) = e^{\left(
V^I+\left.
\partial_{x^I} \left[
F_+(\lambda_q(x))x^{U*}_q
\right]
\right|_{x=x^*_q}
\right)
t} \by(0)
= e^{\left(
V^I+\left.
\partial_{x^I} \left[
F_+(\lambda_q(x))x^{U*}_q
\right]
\right|_{x=x^*_q}
\right)
t} (\bfo_{|q|} \otimes I_{n_I})\t y(0);
\]
and the second equation thus yields, for any $i=1,\dots, |q|$,
\[
y_i(t) = e^{V^It} y_i(0)
+ \int_0^t e^{V^I(t-s)} \left.
\partial_{x^I} \left[
F_+(\lambda_q(x))x^{U*}_{q,i}
\right]
\right|_{x=x^*_q}
e^{\left(
V^I+\left.
\partial_{x^I} \left[
F_+(\lambda_q(x))x^{U*}_q
\right]
\right|_{x=x^*_q}
\right)
s} ds\
(\bfo_{|q|} \otimes I_{n_I})\t y(0).
\]
This yields
\begin{multline*}
y(t)
= \left(
I_{|q|}\otimes e^{V^It}
\right) y(0)\\
+ \int_0^t
\left(
I_{|q|}\otimes e^{V^I(t-s)}
\right)
\left.
\begin{pmatrix}
\partial_{x^I}
[F_+(\lambda_q(x))x^{U*}_{q,1}]\\
\vdots\\
\partial_{x^I}
[F_+(\lambda_q(x))x^{U*}_{q,|q|}]
\end{pmatrix}
\right|_{x=x^*_q}
\hspace{-.5cm}
e^{\left(
V^I+\left.
\partial_{x^I} \left[
F_+(\lambda_q(x))x^{U*}_q
\right]
\right|_{x=x^*_q}
\right)
s} ds\
(\bfo_{|q|} \otimes I_{n_I})\t y(0),
\end{multline*}
and finally, as $y(t) = e^{\cB_qt} y(0)$, one identifies
\begin{multline*}
e^{\cB_qt}
= \left(
I_{|q|}\otimes e^{V^It}
\right)\\
+ \int_0^t
\left(
I_{|q|}\otimes e^{V^I(t-s)}
\right)
\left.
\begin{pmatrix}
\partial_{x^I}
[F_+(\lambda_q(x))x^{U*}_{q,1}]\\
\vdots\\
\partial_{x^I}
[F_+(\lambda_q(x))x^{U*}_{q,|q|}]
\end{pmatrix}
\right|_{x=x^*_q}
\hspace{-.5cm}
e^{\left(
V^I+\left.
\partial_{x^I} \left[
F_+(\lambda_q(x))x^{U*}_q
\right]
\right|_{x=x^*_q}
\right)
s} ds\
(\bfo_{|q|} \otimes I_{n_I})\t.
\end{multline*}
which yields~\eqref{eq288}.

Deduction of formula~\eqref{eq535} in case where $n_I=1$ and all matrix components are then scalar, is then straightforward.
\end{proof}

\subsubsection{An illustrative example}
\label{se524}

$\bullet$
In order to illustrate the previous result, we consider the simple example shown in Figures~\ref{fi1} and~\ref{fi2}.
It consists in five subpopulations, $\cP=\{p_1,p_2,p_3,p_4,p_5\}$, with $T$-periodic commutations between two different modes.
The {\em reference ordering} mentioned in Section~\ref{se622} is for example the order of the enumeration of the elements of $\cP$ in Figure~\ref{fi1}.
To fix the ideas, we put $\cI := \Rset_+\setminus\Nset$, and $\cI = \cI_1\cup\cI_2$, with
\[
\cI_1 := (0,\tau) \mod T,\qquad \cI_2 := (\tau,T) \mod T,
\]
for some $0< \tau < T$.
This is a switching pattern from the set $\Lambda_2$ defined in~\eqref{eq58}.
We assume that the corresponding partitions are given by the two partitions $\cR_1, \cR_2$ presented in Figure~\ref{fi2}, that is
\begin{equation}
\label{eq601}
\cR(\cI_1)=\cR_1 := \{ \{p_1,p_2,p_3\}, \{p_4,p_5\} \},\qquad
\cR(\cI_2)=\cR_2 := \{ \{p_1\}, \{p_2,p_5\}, \{p_3,p_4\} \},
\end{equation}
in such a way that $m=2$, $|\cQ(\cI_1)| = 2, |\cQ(\cI_2)| = 3$ for the quotient sets $\cQ(\cI_1)=\cQ_1=\cP\setminus\cR_1$, $\cQ(\cI_2)=\cQ_2=\cP\setminus\cR_2$.

Choosing for example to order the classes of subpopulations, and the subpopulations within each class, in the order in which they appear in~\eqref{eq601}, yields the correspondence
\[
p_1 = p_{1,1},\ 
p_2 = p_{1,2},\ 
p_3 = p_{1,3},\ 
p_4 = p_{2,1},\ 
p_5 = p_{2,2}
\]
during $\cI_1$, and
\[
p_1 = p_{1,1},\ 
p_2 = p_{2,1},\ 
p_3 = p_{3,1},\ 
p_4 = p_{3,2},\ 
p_5 = p_{2,2}
\]
during $\cI_2$.
With this choice, the permutation matrices $\cM_k$, $k=1,2$, defined in Section~\ref{se622} are:
\begin{equation}
\label{eq794}
\cM_1 := I_5,\qquad
\cM_2 :=
\begin{pmatrix}
1 & 0 & 0 & 0 & 0\\
0 & 1 & 0 & 0 & 0\\
0 & 0 & 0 & 0 & 1\\
0 & 0 & 1 & 0 & 0\\
0 & 0 & 0 & 1 & 0
\end{pmatrix}.
\end{equation}

Theorem~\ref{th5} provides the exact value of $\cR_0$, namely
\begin{equation}
\label{eq793}
\cR_0
= \rho\left(
e^{\hcJ_{II,2}(\cX^*)(T-\tau)} e^{\hcJ_{II,1}(\cX^*)\tau}
\right)
= \rho\left(
\cM_2\t e^{\cJ_{II,2}(\cX^*)(T-\tau)} \cM_2 \ e^{\cJ_{II,1}(\cX^*)\tau}
\right).
\end{equation}
When $\cR_0 < 1$, Theorem~\ref{th5} yields {\em local asymptotic stability} of the equilibrium point $\cX^*$, due to the fact that the first identity in \eqref{eq188} holds here with an equality, see Remark~\ref{ex4}.
When $\cR_0 > 1$, Theorem~\ref{th5} yields instability.

\noindent $\bullet$
We consider e.g.~the SIR model~\eqref{eq5}, for which $n=3$, $n_I=1$ and $n_\lambda=1$.
The global evolution is governed by equation~\eqref{eq2}, that is here
\begin{subequations}
\begin{align}
\dot I_i &= \beta_1(\cI_1) \frac{I_1+I_2+I_3}{N_1+N_2+N_3} S_i -(\gamma+\mu)I_i,\qquad i=1,2,3\\
\dot S_i &= -\beta_1(\cI_1) \frac{I_1+I_2+I_3}{N_1+N_2+N_3} S_i + \mu(I_i+R_i),\qquad i=1,2,3,\\
\dot I_i &= \beta_2(\cI_1) \frac{I_4+I_5}{N_4+N_5} S_i -(\gamma+\mu)I_i,\qquad i=4,5,\\
\dot S_i &= -\beta_2(\cI_1) \frac{I_4+I_5}{N_4+N_5} S_i + \mu(I_i+R_i),\qquad i=4,5,\\
\dot R_i &= \gamma I_i -\mu R_i,\qquad i=1,2,3,4,5,
\end{align}
\end{subequations}
on $\cI_1$, and
\begin{subequations}
\begin{align}
\dot I_1 &= \beta_1(\cI_2) \frac{I_1}{N_1} S_1 -(\gamma+\mu)I_1,\\
\dot S_1 &= -\beta_1(\cI_2) \frac{I_1}{N_1} S_1 + \mu(I_1+R_1),\\
\dot I_i &= \beta_2(\cI_2) \frac{I_2+I_5}{N_2+N_5} S_i -(\gamma+\mu)I_i,\qquad i=2,5,\\
\dot S_i &= -\beta_2(\cI_2) \frac{I_2+I_5}{N_2+N_5} S_i + \mu(I_i+R_i),\qquad i=2,5,\\
\dot I_i &= \beta_3(\cI_2) \frac{I_3+I_4}{N_3+N_4} S_i -(\gamma+\mu)I_i,\qquad i=3,4,\\
\dot S_i &= -\beta_3(\cI_2) \frac{I_3+I_4}{N_3+N_4} S_i + \mu(I_i+R_i),\qquad i=3,4,\\
\dot R_i &= \gamma I_i -\mu R_i,\qquad i=1,2,3,4,5,
\end{align}
\end{subequations}
on $\cI_2$.

The decomposition of the right-hand side $f$ for the SIR model~\eqref{eq5} has been given in the formulas~\eqref{eq613}.
In particular,
\begin{equation*}
V^I = -(\gamma+\mu),\qquad
\lambda(x) = \beta\frac{I}{S+I+R},\qquad
F_+(\lambda) = \lambda \begin{pmatrix} 1 & 0\end{pmatrix},
\end{equation*}
and for any $x^*\in\Xbf_\eq$,
\begin{equation}
\label{eq603b}
\frac{\partial \left[
F_+(\lambda)x^{U*}
\right]}{\partial\lambda}
\frac{\partial \lambda}{\partial x^I}(x^*)
= \begin{pmatrix} 1 & 0 \end{pmatrix} x^{U*}
\frac{\beta}{\bfo_{n_U}\t x^{U*}}
= \beta \frac{S^*}{S^*} = \beta.
\end{equation}

\noindent $\bullet$
With these computations, let us now apply the preceding results to analyze the behaviour of this system, in the vicinity of a given equilibrium point $\cX^*$.
The latter is uniquely defined by the numbers of susceptible individuals $S_i^*$, $i=1,\dots, 5$, in the five subpopulations.
At equilibrium, all individuals are susceptible and one has necessarily $S_i^*=|p_i|$.
For the subpopulations chosen in Figure~\ref{fi1} this means 
\begin{equation}
\label{eq604}
S_1^*= 6,\qquad S_2^*= 1,\qquad S_3^*= 5,\qquad S_4^*= 3,\qquad S_5^*= 7.
\end{equation}

In the general case, applying~\eqref{eq793} necessitates first to assess the value of the exponential of the two matrices $\hcJ_{II,1}(\cX^*)\tau$ and $\hcJ_{II,2}(\cX^*)(T-\tau)$.
The first matrix may be computed from the fact that $\hcJ_{II,1}(\cX^*)+(\gamma+\mu)I_5 = \cJ_{II,1}(\cX^*)+(\gamma+\mu)I_5$ is equal to 
\[
\begin{pmatrix}
\beta_1(\cI_1) \frac{S_1^*}{S_1^*+S_2^*+S_3^*} &  \beta_1(\cI_1) \frac{S_1^*}{S_1^*+S_2^*+S_3^*} &  \beta_1(\cI_1) \frac{S_1^*}{S_1^*+S_2^*+S_3^*} & 0 & 0\\
\beta_1(\cI_1) \frac{S_2^*}{S_1^*+S_2^*+S_3^*} & \beta_1(\cI_1) \frac{S_2^*}{S_1^*+S_2^*+S_3^*} & \beta_1(\cI_1) \frac{S_2^*}{S_1^*+S_2^*+S_3^*} & 0 & 0\\
\beta_1(\cI_1) \frac{S_3^*}{S_1^*+S_2^*+S_3^*} & \beta_1(\cI_1) \frac{S_3^*}{S_1^*+S_2^*+S_3^*} & \beta_1(\cI_1) \frac{S_3^*}{S_1^*+S_2^*+S_3^*} & 0 & 0\\
0 & 0 & 0 & \beta_2(\cI_1) \frac{S_4^*}{S_4^*+S_5^*} & \beta_2(\cI_1) \frac{S_4^*}{S_4^*+S_5^*}\\
0 & 0 & 0 & \beta_2(\cI_1) \frac{S_5^*}{S_4^*+S_5^*} & \beta_2(\cI_1) \frac{S_5^*}{S_4^*+S_5^*}
\end{pmatrix}.
\]
The ad hoc ordering introduced in Section~\ref{se622} provides the second matrix $\cJ_{II,2}(\cX^*)+(\gamma+\mu)I_5$ with a block diagonal structure, for example:
\[
\begin{pmatrix}
\beta_1(\cI_2) &  0 & 0 & 0 & 0\\
0 & \beta_2(\cI_2) \frac{S_2^*}{S_2^*+S_5^*} & \beta_2(\cI_2) \frac{S_2^*}{S_2^*+S_5^*} & 0 & 0\\
0 & \beta_2(\cI_2) \frac{S_5^*}{S_2^*+S_5^*} & \beta_2(\cI_2) \frac{S_5^*}{S_2^*+S_5^*} & 0 & 0\\
0 & 0 & 0 & \beta_3(\cI_2) \frac{S_3^*}{S_3^*+S_4^*} & \beta_3(\cI_2) \frac{S_3^*}{S_3^*+S_4^*}\\
0 & 0 &0 & \beta_3(\cI_2) \frac{S_4^*}{S_3^*+S_4^*} & \beta_3(\cI_2) \frac{S_4^*}{S_3^*+S_4^*}
\end{pmatrix}.
\]
The value back in the reference ordering is such that $\hcJ_{II,2}(\cX^*)+(\gamma+\mu)I_5$ equals
\[
\begin{pmatrix}
\beta_1(\cI_2) &  0 & 0 & 0 & 0\\
0 & \beta_2(\cI_2) \frac{S_2^*}{S_2^*+S_5^*} & 0 & 0 & \beta_2(\cI_2) \frac{S_2^*}{S_2^*+S_5^*}\\
0 & 0 & \beta_3(\cI_2) \frac{S_3^*}{S_3^*+S_4^*} & \beta_3(\cI_2) \frac{S_3^*}{S_3^*+S_4^*} & 0\\
0 & 0 & \beta_3(\cI_2) \frac{S_4^*}{S_3^*+S_4^*} & \beta_3(\cI_2) \frac{S_4^*}{S_3^*+S_4^*} & 0\\
0 & \beta_2(\cI_2) \frac{S_5^*}{S_2^*+S_5^*} & 0 & 0 & \beta_2(\cI_2) \frac{S_5^*}{S_2^*+S_5^*}
\end{pmatrix}.
\]

Notice that in general the two matrices $\hcJ_{II,1}(\cX^*)$ and $\hcJ_{II,2}(\cX^*)$ do {\em not} commute; so that the product of the exponentials is not the exponential of the sum.

\noindent $\bullet$
Two degenerate cases that present a constant contact pattern are easy to study, as they actually boil down to the situation of Theorem~\ref{th2}.
These are the following.
\begin{itemize}
\item[$\circ$]
When $\tau=0$, then~\eqref{eq793} yields
\[
\cR_0
= \rho\left(
\cM_2\t e^{\cJ_{II,2}(\cX^*)T} \cM_2
\right)
= \rho\left(
e^{\cJ_{II,2}(\cX^*)T}
\right)
= e^{-(\gamma+\mu)T} \max
\left\{
e^{\beta_1(\cI_2)T}; e^{\beta_2(\cI_2)T}; e^{\beta_3(\cI_2)T}
\right\}.
\]
\item[$\circ$]
 When $\tau=T$, then~\eqref{eq793} gives
\[
\cR_0
= \rho\left(
e^{\cJ_{II,1}(\cX^*)T}
\right)
= e^{-(\gamma+\mu)T} \max
\left\{
e^{\beta_1(\cI_1)T}; e^{\beta_2(\cI_1)T}
\right\}.
\]
\end{itemize}

\noindent $\bullet$
Let us compute $e^{\cJ_{II,1}(\cX^*)\tau}$ and $e^{\cJ_{II,2}(\cX^*)(T-\tau)}$ in the general case, thanks to formula~\eqref{eq535}.
In view of~\eqref{eq604}, one has
\begin{gather*}
\frac{S_1^*}{S_1^*+S_2^*+S_3^*} = \frac{1}{2},\quad
\frac{S_2^*}{S_1^*+S_2^*+S_3^*} = \frac{1}{12},\quad
\frac{S_3^*}{S_1^*+S_2^*+S_3^*} = \frac{5}{12},\quad
\frac{S_4^*}{S_4^*+S_5^*} = \frac{3}{10},\quad
\frac{S_5^*}{S_4^*+S_5^*} = \frac{7}{10},\\
\frac{S_2^*}{S_2^*+S_5^*} = \frac{1}{8},\quad
\frac{S_5^*}{S_2^*+S_5^*} = \frac{7}{8},\quad
\frac{S_3^*}{S_3^*+S_4^*} = \frac{5}{8},\quad
\frac{S_4^*}{S_3^*+S_4^*} = \frac{3}{8}.
\end{gather*}
One deduces that
\[
e^{\cJ_{II,1}(\cX^*)\tau}
= e^{-(\gamma+\mu)\tau}
\left(
I_5 +
\diag\left\{
\frac{1}{12}(e^{\beta_1(\cI_1)\tau}-1)
\begin{pmatrix}
6 & 6 & 6\\
1 & 1 & 1\\
5 & 5 & 5
\end{pmatrix};
\frac{1}{10}
(e^{\beta_2(\cI_1)\tau}-1)
\begin{pmatrix}
3 & 3\\7 & 7
\end{pmatrix}
\right\}
\right)
\]
and
\begin{multline*}
e^{\cJ_{II,2}(\cX^*)(T-\tau)}
= e^{-(\gamma+\mu)(T-\tau)}\\
\times
\left(
I_5 +
\diag\left\{
(e^{\beta_1(\cI_2)(T-\tau)}-1);
\frac{1}{8}
(e^{\beta_2(\cI_2)(T-\tau)}-1)
\begin{pmatrix}
1 & 1\\ 7 & 7
\end{pmatrix};
\frac{1}{8}
(e^{\beta_3(\cI_2)(T-\tau)}-1)
\begin{pmatrix}
5 & 5\\3 & 3
\end{pmatrix}
\right\}
\right).
\end{multline*}
Denoting for simplicity
\begin{gather*}
\phi_{11} := \frac{1}{12}(e^{\beta_1(\cI_1)\tau}-1),\qquad
\phi_{21} := \frac{1}{10}(e^{\beta_2(\cI_1)\tau}-1),\\
\phi_{12} := (e^{\beta_1(\cI_2)(T-\tau)}-1),\qquad
\phi_{22} := \frac{1}{8}(e^{\beta_2(\cI_2)(T-\tau)}-1),\qquad
\phi_{32} := \frac{1}{8}(e^{\beta_3(\cI_2)(T-\tau)}-1),
\end{gather*}
with the expression of $\cM_2$ in~\eqref{eq794}, one finally obtains that
\begin{eqnarray}
\hspace{-.6cm}
e^{(\gamma+\mu)T} e^{\hcJ_{II,2}(\cX^*)(T-\tau)} e^{\hcJ_{II,1}(\cX^*)\tau}
& = &
\nonumber
\begin{pmatrix}
 1+ \phi_{12} & 0 & 0 & 0 & 0\\
0 & 1+ \phi_{22} & 0 & 0 & \phi_{22}\\
0 & 0 & 1+ 5\phi_{32} & 5\phi_{32} & 0\\
0 & 0 & 3\phi_{32} & 1+ 3\phi_{32} & 0\\
0 & 7\phi_{22} & 0 & 0 & 1+ 7\phi_{22}
\end{pmatrix}
\\
& &
\label{eq998}
\times
\begin{pmatrix}
1+ 6\phi_{11} & 6\phi_{11} & 6\phi_{11} & 0 & 0\\
\phi_{11} & 1+ \phi_{11} & \phi_{11} & 0 & 0\\
5\phi_{11} & 5\phi_{11} & 1+ 5\phi_{11} & 0 & 0\\
0 & 0 & 0 & 1+ 3\phi_{21} & 3\phi_{21}\\
0 & 0 & 0 & 7\phi_{21} & 1+ 7\phi_{21}
\end{pmatrix}.
\end{eqnarray}

\noindent $\bullet$
By comparison, the stationary system usually considered when assuming fast commuter movements in metapopulation models~\cite[Section 7.2.1.4]{Keeling:2008aa} is obtained by considering the weighted mean of the transmission terms.
Accordingly, the blocks of the Jacobian matrix of the corresponding autonomous system are obtained as weighted means, yielding an approximation of $\rho\left(
e^{\hcJ_{II,2}(\cX^*)(T-\tau)} e^{\hcJ_{II,1}(\cX^*)\tau}
\right)$ in~\eqref{eq793} by the expression
\[
\rho\left(
e^{\hcJ_{II,2}(\cX^*)(T-\tau)+\hcJ_{II,1}(\cX^*)\tau}
\right).
\]
Here,
\begin{eqnarray}
\hspace{-.6cm}
\tau \hcJ_{II,1}(\cX^*) + (T-\tau) \hcJ_{II,2}(\cX^*)
& = &
\nonumber
-T(\gamma+\mu)I_5\\
& &
\nonumber
+ \tau \begin{pmatrix}
\frac{1}{2} \beta_1(\cI_1) & \frac{1}{2} \beta_1(\cI_1) & \frac{1}{2} \beta_1(\cI_1) & 0 & 0\\
\frac{1}{12} \beta_1(\cI_1) & \frac{1}{12} \beta_1(\cI_1) & \frac{1}{12} \beta_1(\cI_1) & 0 & 0\\
\frac{5}{12} \beta_1(\cI_1) & \frac{5}{12} \beta_1(\cI_1) & \frac{5}{12} \beta_1(\cI_1) & 0 & 0\\
0 & 0 & 0 & \frac{3}{10} \beta_2(\cI_1) & \frac{3}{10} \beta_2(\cI_1)\\
0 & 0 & 0 & \frac{7}{10} \beta_2(\cI_1) & \frac{7}{10} \beta_2(\cI_1)
\end{pmatrix}\\
& &
\label{eq999}
+ (T-\tau) \begin{pmatrix}
\beta_1(\cI_2) & 0 & 0 & 0 & 0\\
0 & \frac{1}{8} \beta_2(\cI_2) & 0  & 0 & \frac{1}{8} \beta_2(\cI_2)\\
0 & 0 & \frac{5}{8} \beta_3(\cI_2) & \frac{5}{8} \beta_3(\cI_2) & 0\\
0 & 0 & \frac{3}{8} \beta_3(\cI_2) & \frac{3}{8} \beta_3(\cI_2) & 0\\
0 & \frac{7}{8} \beta_2(\cI_2) & 0 & 0 & \frac{7}{8} \beta_2(\cI_2)
\end{pmatrix}.
\end{eqnarray}

\noindent $\bullet$
As a last remark and an illustration of Corollary~\ref{co0}, consider now the situation where the spreading is uniform in the whole population during each of the two phases, that is
\[
\beta_1(\cI_1) = \beta_2(\cI_1) =: \beta(\cI_1),\qquad
\beta_1(\cI_2) = \beta_2(\cI_2) = \beta_3(\cI_2) =: \beta(\cI_2).
\]
One checks that the vector $\bfo_5\t$ is a left eigenvector of the matrix in~\eqref{eq998}, associated to the eigenvalue
\begin{equation}
\label{eq1000}
e^{(\gamma+\mu)T} e^{\beta(\cI_2) (T-\tau)} e^{\beta(\cI_1) \tau}.
\end{equation}
The values of $G_1, G_2$ provided by~\eqref{eq789} are respectively $\beta(\cI_1), \beta(\cI_2)$, see~\eqref{eq603b}.
One then verifies that the value in~\eqref{eq1000} is identical to the one provided by formula~\eqref{eq560}, as expected by Corollary~\ref{co0}.

As a side comment, this value is also equal to the one obtained through use of the stationary approximation, namely the spectral radius of the matrix in~\eqref{eq999}.

\section{Conclusion}
\label{se7}

A new framework for the modelling and analysis of the spread of epidemic diseases in complex commuting populations was proposed in this paper.
It consists in a general class of compartmental ODE models, where the population is divided into homogeneous subpopulations of different sizes.
At any time is defined a partition of the set of these subpopulations, whose classes are comprised of those currently at a common physical location.
These partitions are piecewise constant and periodic in time, 
in order to model repetitive contact patterns.
Each subpopulation thus represents here a maximal set of individuals perpetually and homogeneously in contact from the point of view of epidemic spread.
On top of this, every subpopulation 
is further subdivided, as is usually the case, into several compartments accounting for the epidemiological status, and more generally for any heterogeneity trait pertinent to describe the disease spread with all the desired precision~\cite{Hethcote:1987aa}.

In order to allow for adequate representation of the mixtures and separations of the subpopulations, it was necessary to revisit the epidemiological model.
To this end, we introduced a new general setting showcasing the special role of the force of infection.
It led to consider compartmental models that are jointly linear with respect to the state variable, and affine with respect to the force of infection (itself a function of the state variable).

Based on this formulation, a limited set of assumptions was adopted, which essentially transposes the hypotheses introduced by van den Driessche and Watmough~\cite{Driessche:2002aa} into this larger setting.
The behaviour around the disease-free equilibrium points of such systems, and especially the possibility of epidemic bursts, has then been characterised in terms of spectral radius of matrix products.
In a didactic purpose, we gradually considered the case of a unique subpopulation (Theorem~\ref{th1}); then systems with several subpopulations in stationary contact pattern (Theorem~\ref{th2}); and finally in full generality so-called `epidemiological models in commuting population', that is, systems with several subpopulations interacting through periodically alternating mixing (Theorem~\ref{th5}).
As an interesting consequence, this allowed to show (Corollary~\ref{co0}) that, when the transmission conditions are identical in the whole population, then the value of the basic reproduction number is actually independent of the contact pattern, and is the same than what it would be if all subpopulations were brought together.
Up to our knowledge, such result is original.
An elementary example was provided to illustrate the computations underlying the results.\\

The framework proposed in this paper is believed to be original, and may serve as a basis for studying general model properties.
First of all, the proposed class of compartmental models provides a significant extension and unification of many models previously published in the literature.
This larger setting should help identify the key characteristics of epidemiological models and analyse their behaviour.
Second, the modelling of the spatial aspect in terms of {\em partitions of the population} rather than in terms of {\em fixed physical locations} seems well adapted to the description of the spread of an epidemic in commuting populations.
This option allows to keep track of the identity of each group, 
while reducing the conceptual and computational complexity necessary to represent the system evolution compared to the classical modelling choices.
The number of variables is here proportional to the number of subpopulation classes defined by the partition, which represent the `active' locations, and not of the number of locations or patches, or a power of this quantity.
This is particularly suitable for an exact description of time-varying conditions of disease propagation, by opposition with the stationary, averaged, settings usually considered.
It allows to analyse theoretically and assess numerically the effects of such approximations.

The proposed modelling framework should allow for extensive computational investigation.
It permits for example to address the issues of robustness with respect to the parametric uncertainties of the model --- be they related to the infection process (as e.g.~the contact rates, or the latency, recovery and loss of immunity rates) or to the commuting process and the mixing times.
In this framework, the question of the stability robustness of a DFE boils down to the study of the extremal value of the spectral radius over a given set of matrices, or of the generalized spectral radius~\cite{Wirth:2002aa,Jungers:2009aa} of a specific set of matrices.
Also, it should permit comprehensive exploration of the final epidemic size for such periodic commuting systems.
This relevant quantity represents the total number of individuals infected during an outbreak (see e.g.~\cite{Andreasen:2011aa} and references therein), and its computation is quite complicated.
These topics will be the subject of forthcoming research.

\section*{Acknowledgements}
Pierre-Alexandre Bliman is grateful to Carlos Canudas de Wit (GIPSA-lab) for having the opportunity to work together during COVID pandemic in the Inria working group on Healthy mobility control.
The discussion and reflection initiated at that time nurtured the inspiration for the present paper.
He is also grateful to Abderrahman Iggidr (Inria), Alain Rapaport (Inrae) and Gauthier Sallet (Institut \'Elie Cartan de Lorraine) for valuable discussions on the topics considered here.
Boureima Sangar\'e is deeply thankful to the Laboratoire de Mathématiques de Besançon for hosting him for three months and supporting his trip to Sorbonne Universit\'e.
Assane Savadogo is very much thankful to Prof.~Christophe Cambier (Sorbonne Universit\'e), coordinator of the PDI-MSC program of IRD which supported this work. He is also thankful to Inria Paris centre for financial support.
Last, the funding of ANR is acknowledged, under the project ANR-23-CE48-0004.


%

\addcontentsline{toc}{section}{References}
\bibliographystyle{siam}
\bibliography{Biblio-commuting}

%

\end{document}